\documentclass[a4paper]{article}
\usepackage{amsmath,amsfonts,amsthm}
\usepackage{fullpage}
\usepackage{xspace}
\usepackage{tikz}
\usepackage{enumitem}
\usepackage{multicol}
\usetikzlibrary{matrix, arrows,automata}
\newtheorem{theorem}{Theorem}

\usepackage{authblk}

\usepackage[utf8]{inputenc}
\usepackage[OT4]{fontenc}

\usepackage{graphicx}
\usepackage{xcolor}

\usepackage{amsfonts}
\usepackage{amssymb}
\usepackage{mathtools}

\usepackage{epsfig}

\usepackage[ruled,vlined,linesnumbered]{algorithm2e}

\usepackage{hyperref}

\definecolor{darkgreen}{rgb}{0,0.4,0}


%
%
%

\newtheorem{fact}{Fact}
\newtheorem{lem}{Lemma}


\newcommand{\ls}{\leqslant}

\newcommand{\SP}{\textsf{P}\xspace}
\newcommand{\Init}{\textsf{Initial}\xspace}
\newcommand{\Roam}{\textsf{Roam}\xspace}
\newcommand{\Rr}[1]{\textsf{RR}_{#1}}
\newcommand{\Down}{\textsf{Down}\xspace}
\newcommand{\Up}{\textsf{Up}\xspace}
\newcommand{\Terminated}{\textsf{Terminated}\xspace}
\newcommand{\port}{\textsf{last}\xspace}
\newcommand{\parent}{\textsf{parent}\xspace}
\newcommand{\roott}{\textsf{root}\xspace} 
\newcommand{\Root}{\textsf{Root}\xspace} 
\newcommand{\Act}{\textsf{Act}\xspace}
\newcommand{\Ag}{\textsf{Ag}\xspace}
\newcommand{\Tok}{\textsf{Tok}\xspace}
\newcommand{\NULL}{\textsf{NULL}\xspace}
\newcommand{\Path}{\textsf{Path}\xspace}
\newcommand{\visited}{\textsf{visited}\xspace}
\newcommand{\MOVE}[1]{\textsf{MOVE}{#1}\xspace}
\newcommand{\DROP}[1]{\textsf{DROP}{#1}\xspace}
\newcommand{\TAKE}[1]{\textsf{TAKE}{#1}\xspace}
\newcommand{\State}{\textsf{State}}

\newcommand{\etal}{{\it et~al.}\xspace}

\SetProcNameSty{textit}
\SetProcArgSty{textit}
\newcommand{\hrulealg}[0]{\vspace{1mm} \hrule width 180pt height .1pt \vspace{1mm}}

\newcommand{\customparagraph}[1]{\noindent\textsf{\textbf{#1}}\hspace*{3mm}}

\newcommand{\cleanmem}{\textsf{CleanMem}\xspace}
\newcommand{\dirtymem}{\textsf{DirtyMem}\xspace}
\newcommand{\token}{\textsf{Token}\xspace}

\title{Tree exploration in dual-memory model}

	\author[1]{Dominik Bojko}
	\author[1]{Karol Gotfryd}
	\author[2]{Dariusz R. Kowalski}
	\author[1]{Dominik Pajak}
	\affil[2]{School of Computer and Cyber Sciences, Augusta University, USA, dkowalski@augusta.edu}
	\affil[1]{Wrocław University of Science and Technology, Poland, dominik.bojko@pwr.edu.pl, karol.gotfryd@pwr.edu.pl, dominik.pajak@pwr.edu.pl}

\begin{document}
\maketitle
\begin{abstract}
We study the problem of online tree exploration by a deterministic mobile agent. Our main objective 
is to establish what features of the model of the mobile agent and the environment allow linear exploration time. We study agents that, upon entering to a node, do not receive as input the edge via which they entered. In such a model, deterministic memoryless exploration is infeasible, hence the agent needs to be allowed to use some memory. The memory can be located 
at the agent or at each node. The existing lower bounds show that if the memory is either only at the agent or only at the nodes, then the exploration needs superlinear time. We show that tree exploration in dual-memory model, with constant memory at the agent and logarithmic at each node is possible in linear time when one of two additional features is present: 
fixed initial state of the 
memory at each node (so called clean memory) or a single movable token. We present two algorithms working in linear time for arbitrary trees
in
these two models. 
On the other hand, in our lower bound we show that if the agent has a single bit of memory and one bit is present at each node, then exploration may require quadratic time on paths, 
if the initial memory at nodes could be set arbitrarily (so called dirty memory).
This shows that having clean node memory or a token allows linear exploration of trees in the model with two types of memory, but having neither of those features may lead 
to quadratic exploration time even on a simple path.
\end{abstract}

\section{Introduction}
\vspace*{-1mm}
Consider a mobile entity 
deployed inside an undirected graph with the objective to visit all its nodes and traverse all its edges, without any a priori knowledge of the topology of the graph or of its size. This problem, called a graph exploration, is among the basic problems investigated in the context of a mobile agent in a graph. In this paper we focus on tree exploration by a deterministic agent.  Clearly, to explore the whole tree 
of
$n$ nodes, any agent needs time~$\Omega(n)$. A question arises: ``What are the minimum agent capabilities for it to be able to complete the tree exploration in linear time?'' Many practical applications may not allow the agent to easily backtrack its moves, due to technical, security or privacy reasons --
hence, in this work we assume that the agent, upon entering a node, does \textbf{\em not} receive any information about the edge via which it entered. 

A deterministic agent at a node of degree $d$ needs at least $d$ different inputs in order to be able to choose any of $d$ outgoing edges. 
We consider three components of the input to the agent: memory at the agent, memory at each node, and movable tokens. 
In order to ensure the sufficient number of possible inputs, the number of possible values of the memory and possible present or absent states of the tokens, must exceed $d$. Hence, the sum of the memory at the agent, memory at the node and the total number of tokens must be at least $\lceil \log d_v\rceil$ bits, for any node of degree $d_v$ (we denote by $\log$ the base-$2$ logarithm). 

In many applications, mobile agents are meant to be small, simple and inexpensive devices or software. These restrictions limit the size of the memory with which they can be equipped. Thus it is crucial to analyze the performance of the agent equipped with the minimum size of the memory. In this paper we 
assume
the asymptotically minimum necessary memory size of $O(\log d_v)$, where only a constant number of bits is stored at the agent and logarithmic number at each node. Our result in the model with the tokens assumes the presence of only a single movable token.
%


The main focus of the paper is on deterministic tree exploration 
in the minimum possible time using the minimum possible memory. We consider two additional features of the model, namely -- clean memory or a token, and we show that each of these assumptions alone allows linear time tree exploration. We also prove 
that when both features are absent, then the linear time exploration with small memory
may be
impossible even on a simple path. 
\vspace*{-2mm}
\section{Model}
\vspace*{-1mm}
The agent is located in an initially unknown tree $T = (V,E)$ with $n = |V|$ nodes.  
The agent can traverse one of the edges incident to its current location within a single step, in order to visit a neighboring node at its other end.
The nodes of the tree are unlabelled, however in order for the agent to locally distinguish the edges outgoing from its current position, we assume that the tree is port-labelled. This means that at each node $v$ with some degree $d_v$, its outgoing edges are uniquely labelled with numbers from $\{1, 2 ,\dots,d_v\}$. Throughout the paper we assume that the port labels are assigned by an adversary that knows the algorithm used by the agent and wants to maximize the exploration time. 
\vspace*{1mm}\\
\customparagraph{Memory.}
The agent is endowed with some number of memory bits (called \emph{agent memory} or \emph{internal memory}), 
  which it can access and modify. In our results, we assume that the agent has $O(1)$ bits of memory, hence a constant number of states. 

Each node $v \in V$ contains some number of memory bits that can be modified by the agent when visiting that node. We will call these bits \emph{node memory} or \emph{local memory}. We assume that node $v$ contains $\Theta(\log d_v)$ 
bits of memory. 
Hence, each node can store a constant number of pointers to its neighbors.
\vspace*{1mm}\\
\customparagraph{Algorithm.}
Upon entering a node $v$, the agent $a$ receives, as input, its current state $S_a$, the state $S_v$ of the current node $v$, the number 
$T_v$ of tokens at $v$, the number $T_a$ of tokens at the agent and the degree $d_v$ of the current node. It outputs its new state $S'_a$, new state $S'_v$ of node $v$, new state $T'_a, T'_v$ of the tokens at the agent/node, respectively, and port number $p_{out}$ via which it exits node $v$; hence, the algorithm defines a transition:
\vspace*{-2mm}
\[
(S_a,S_v,T_a, T_v, d_v) \rightarrow (S'_a,S'_v, T'_a, T'_v,p_{out})
\]
In models without the tokens, the 
state transition
of the algorithm can be simplified to:
\vspace*{-2mm}
\[
(S_a,S_v, d_v) \rightarrow (S'_a,S'_v,p_{out})
\]
\customparagraph{Starting state.} 
We assume that for a given deterministic algorithm, there is one starting state $s$ of the agent (obtained from the agent memory), in which the agent is at the beginning of (any execution of) the algorithm. The agent can also use this state later (it can transition to $s$ in subsequent steps of the algorithm). 
The starting location of the agent is chosen by an adversary accordingly to agent's algorithm (to maximize the agent's exploration time).
\\
\customparagraph{Our models.}
In this paper we study the following three models:

In \cleanmem, there is a fixed state $\hat{s}$ and each node $v$ is initially 
in the state $S_v=\hat{s}$, regardless of the degree of the vertex. In this model the agent does not have access~to~any~tokens.

In \dirtymem, the memory at the nodes is in arbitrary initial states, 
which are chosen by an adversary. In this model the agent does not have access to any tokens.

In \token, the agent is initially equipped with a single token. If the agent holds a token, it can drop it at a node upon 
visiting
that node. 
When visiting a node, the agent receives as input whether the current node already contains a token.
In this case the agent additionally outputs whether it decides to pick up the token, i.e., deduct from $T_v$ and add to $T_a$.
Clearly, since the agent has only one token, then after dropping it at some node, it needs to pick it up before dropping it again at some other node.
In this model we assume that the initial state of the memory at each node is chosen by an adversary (like in \dirtymem model).

\section{Our results}
\customparagraph{Upper bounds.} 
While most of the existing literature focus on feasibility of the exploration, we show that it is possible to complete the tree exploration in the minimum possible linear time using (asymptotically) minimal memory.
We show two algorithms in models \cleanmem and \token, exploring arbitrary unknown trees in the optimal time $O(n)$ if constant memory is located at the agent and logarithmic memory is located at each node.
Our results show that in the context of tree exploration in dual-memory model, the assumption about clean memory (fixed initial state of node memory) can be ``traded'' for a single token.

It is worth noting that in both our algorithms, the agent returns to the starting position and terminates after completing the exploration, which is a harder task than a perpetual exploration.
If the memory is only at the agent, exploration of trees with stop at the starting node requires $\Omega(\log n)$ bits of memory~\cite{DiksFKP04}; if the agent is only required to terminate at any node then still a superconstant $\Omega(\log\log\log n)$ memory is required~\cite{DiksFKP04}; while exploration without stop is feasible using $O(\log \Delta)$ agent memory~\cite{DiksFKP04} (where $\Delta$ is the maximum degree of a node).
All these 
results assume that the agent could receive the port number via which it entered the current node, while our algorithms operate without this information.
\vspace*{2mm}\\
\customparagraph{Lower bound.} 
To explore a path with a single bit of memory and with arbitrary initial state at each node (note that $\log d_v=1$ for internal nodes on the path), one can employ the Rotor-Router algorithm and achieve exploration time of $O(n^2)$~\cite{YanovskiWB03} (there is no need for agent memory), which is time and memory optimal in the model with only node memory~\cite{MencPU17}.

In our lower bound we analyze line exploration with one bit at each node and additional one bit of memory at the agent.
We want to verify the hypothesis that dual-memory allows to achieve linear time of exploration.
We provide a partial argument for the contrary -- we prove that in \dirtymem model an exploration of the path requires $\Omega(n^2)$ steps in the worst case.
This shows that adding a single bit at the agent to the model with a single bit at each node does not reduce the exploration time significantly.

\subsection{Previous and related work}
\vspace*{-1mm}
\customparagraph{Only node memory.}
One approach for graph exploration that uses only node memory is Rotor-Router~\cite{YanovskiWB03}. It is a simple strategy, where upon successive visits to each node, the agent is traversing the outgoing edges in a round-robin fashion. Its exploration time is $\Theta(mD)$ for any graph with $m$ edges and diameter $D$~\cite{BampasGHIKKR17,YanovskiWB03}.
It is easy to see that this algorithm can be implemented in port-labelled graphs with zero bits of memory at the agent and only $\lceil \log d_v \rceil$ bits of memory at each node $v$ with degree $d_v$ (note that this is the minimum possible amount of memory for any correct exploration algorithm using only memory at the nodes). 
Allowing even unbounded memory at each node still leads to $\Omega(n^3)$ time for some graphs, and $\Omega(n^2)$ time for paths~\cite{MencPU17}. Interestingly, these lower bounds hold even if the initial state of the memory at each node is clean (i.e., each node starts in some fixed state $\hat{s}$ like in \cleanmem model).
Hence, having only memory at the nodes is insufficient for exploration of trees faster than in quadratic time. 
\vspace*{2mm}\\
\customparagraph{Only agent memory.}
When the agent is not allowed to interact with the environment, then such agent, when exploring regular graphs, does not acquire any new information during exploration. Hence such an algorithm, is practically a sequence of port numbers that can be defined prior to the exploration process. 
In the model with agent memory, the agent is endowed with some number of bits of memory that the agent can access and modify at any step.
If this number of bits is logarithmic in $n$ (otherwise,
the exploration is infeasible~\cite{DiksFKP04}), then in this model it is possible to implement Universal Traversal Sequences (the agent only remembers the position in the sequence). However, such sequences require $\Omega(n^{1.51})$ steps to explore paths~\cite{dai1996improved}.
Best known upper bound is $O(n^3)$~\cite{AleliunasKLLR79} and the best known constructive upper bound is $O(n^{4.03})$~\cite{Koucky03}.
Memory $\Theta(D \log \Delta)$ is sufficient and sometimes required to explore any graph with maximum degree $\Delta$~\cite{FraigniaudIPPP05}.
For directed graphs, memory $\Omega(n \log \Delta)$ at the agent is sometimes required to explore any graph with maximum outdegree $\Delta$, while memory $O(n\Delta\log \Delta)$ is always sufficient~\cite{FraigniaudI04}.
\vspace*{2mm}\\
\customparagraph{Finite state automata.}
An agent equipped with only constant number of bits of persistent memory (independent of the network size and topology and other parameters of the model) can be regarded as a finite state automaton
(see e.g. \cite{Cohen:2008:LabelGuidedFA,FraigniaudIPPP05,FraigniaudIRT05}). Movements of such agent, typically modeled as a finite Moore or Mealy automaton, are completely determined by a~state transition function 
$f(s,p,d_v) = (s',p')$, 
where $s$, $s'$ are the agent's
states and $p$, $p'$ are the ports through which the agent enters and leaves the node $v$. A finite state automaton cannot explore an arbitrary graph in the setting, where the nodes
have no unique labels \cite{Rollik:1980:AutomatenPlanaren}. Fraigniaud~\etal~\cite{FraigniaudIPPP05} showed that for any $\Delta \geqslant 3$ and any finite state agent with $k$ states, one can construct
a planar graph with maximum degree $\Delta$ and at most $k+1$ nodes, that cannot be explored by the agent. It is, however, possible in this model to explore (without stop) the trees, assuming that
the finite state agent has access to the incoming port number (cf. \cite{DiksFKP04}).
\vspace*{2mm}\\
\customparagraph{Tokens.} 
Deterministic, directed graph exploration in polynomial time using tokens has been considered in~\cite{BenderFRSV02}, where the authors show that a single token is sufficient if the agent has an upper bound on the total number of vertices and $\Theta(\log\log n)$ tokens 
are sufficient and necessary otherwise. In \cite{Bender:1994:PowerOfTeam} the authors proposed a probabilistic polynomial time algorithm that allows two cooperating agents without the knowledge of $n$ to explore any strongly connected
directed graph. They also proved that for a single agent with $O(1)$ tokens this is not possible in polynomial time in $n$ with high probability.    
In~\cite{FraigniaudIRT05} it was shown that using one pebble, the exploration with stop 
requires an agent with $\Omega(\log n)$ bits of memory. The same space bound remains true for perpetual exploration \cite{FraigniaudIPPP05}.
\vspace*{2mm}\\
\customparagraph{Two types of memory.}
Sudo~\etal~\cite{sudo} consider exploration of general graphs with two types of memory. They show that $O(\log n)$ bits at the agent and at each node allows exploration in time $O(m + nD)$.
Cohen~\etal~\cite{Cohen:2008:LabelGuidedFA} studied the problem of exploring an arbitrary graph by a finite state automaton, which is capable of assigning $O(1)$-bit labels to the nodes. They proposed
an algorithm, that -- assuming the agent knows the incoming port numbers -- dynamically labels the nodes with three different labels and explores (with stop) an arbitrary graph in time $O(mD)$ 
(if the graph can be labelled offline, both the preprocessing stage and the exploration take $O(m)$ steps). They also show that with $1$-bit labels and $O(\log \Delta)$ bits of agent memory it is 
possible to explore (with stop) all bounded-degree graphs of maximum degree $\Delta$ in time $O(\Delta^{10}m)$. 
\vspace*{2mm}\\
\customparagraph{Other approaches.}
If we assume that the port number via which the agent entered to the current node is part of the input to the algorithm (such a model is called a model with known inport) then 
tree exploration in linear time can be achieved by a simple algorithm known in the literature as \textit{basic walk}~\cite{GasieniecR08}.
This algorithm requires no memory (neither at the agent nor at the nodes).
However, if we require that the agent terminates at the starting point, then $\Omega(\log n)$ bits are required if there is only memory at the agent~\cite{DiksFKP04}.

Random walk is a classical and well-studied process, where the agent in each step moves to a neighbor chosen uniformly at random (or does not move with some constant probability). Exploration using this method takes expected time $\Omega(n \log n)$~\cite{Feige95a} and $O(n^3)$~\cite{Feige95}, where for each of these bounds there exists a graph class for which this bound is tight.
Randomness alone cannot ensure linear time of tree exploration, since expected time $\Omega(n^2)$ is required even for paths~\cite{lovasz1993random}. However approaches using memory at the agent~\cite{Kosowski13}, local information on explored neighbors~\cite{BerenbrinkCF15}, or local information on degrees~\cite{NonakaOSY10} have shown that there are many methods to speedup random walks.

Finally, to achieve fast tree exploration, usage of multiple parallel agents is possible. A~number of papers show that this approach allows to provide faster exploration than using only a single agent in the deterministic model with memory at the agent~\cite{DereniowskiDKPU15, FraigniaudGKP06, OrtolfS14}, with memory at the nodes~\cite{DereniowskiKPU16, KlasingKPS17,KosowskiP19} or using randomness~\cite{DereniowskiDKPU15, EfremenkoR09, ElsasserS11}.


\section{Upper bounds}
\label{sec:upper}
\subsection{Exploration in \cleanmem model}
\label{sec:cleanmem}
In this section we show an algorithm exploring any tree in $O(n)$ steps using $O(1)$ bits of memory at the agent and $O(\log d_v)$ bits of memory at each node $v$ of degree $d_v$ in \cleanmem model.

The memory at each node is organized as follows. It contains two port pointers (of $\lceil \log d_v \rceil$ bits each):
\vspace*{-1mm}
\begin{itemize}
\item $v.\parent$ -- this pointer (at some point of the execution) contains the port number leading to the root of the tree (i.e., the starting node of the exploration),
\item $v.\port$ -- this pointer points to the last port taken by the agent during the exploration
\end{itemize}
 and two flags (of $1$ bit):
 \begin{itemize}
 \item $v.\roott$ -- this flag indicates whether the node is $\Root$ of the exploration,
 \item $v.\visited$ -- this flag indicates whether the node has already been visited.
 \end{itemize}
 At each node, the initial state of $\port$ pointers is $1$, initial state of $\parent$ is $\NULL$ and initial state of each flag is $\texttt{False}$.
 
 The agent's memory contains one variable $\State$ that can take one of five possible values: \Init, \textsf{Exploring}, \textsf{ReturnFromParent}, \textsf{Completed}, $\Terminated$. 
 
 For simplicity of the pseudocode we use a flag $parentSet$. This flag does not need to be remembered because it is set and accessed in the same round. It is possible to write a more complicated pseudocode that does not require this variable.
 
 In our pseudocode, we use a procedure $\MOVE(p)$, in which the agent traverses an edge labelled with port $p$ from its current location. When the agent changes its state to $\Terminated$, it does not make any further moves.
\vspace*{2mm}\\
\customparagraph{Intuition of the algorithm.}
For the purpose of the analysis, assume that tree $T$ is rooted at the initial position of the agent. 
The main challenge in designing an exploration algorithm in this model is that the agent located at some node $v$ may not know which port leads to the parent of $v$. Indeed, if the agent knew which port leads to its parent, it could perform a DFS traversal. The agent would traverse the edges corresponding to outgoing ports in order $1,2,3,\dots,d_v$ (skipping the port leading to its parent) and take the edge to its parent after completing the exploration of the subtrees rooted at its current position. Note that it is possible to mark, which outgoing ports have been already traversed by the agent using only a single pointer at each node. 

 Since in our model, the agent does not know, which port leads to the parent, we need a second pointer $\parent$ at each node.
To set it correctly, we first observe that in the model \cleanmem, using flag $\visited$, it is possible to mark the nodes that have already been visited.
Notice that when the agent traverses some edge outgoing from $v$ in the tree for the first time and enters to a node that has already been visited, then this node is certainly the parent of $v$.
We can utilize this observation to establish correctly $\parent$ pointer at $v$ using state $\textsf{ReturnFromParent}$.
When entering to a node in this state, we know that the previously taken edge (port number of this edge is stored in $\port$) leads to the parent of $v$.
Our algorithm also ensures, that after entering a subtree, the agent leaves it once, and the next time it enters this subtree, it is entirely explored.
Finally, having correctly set pointers $\parent$ at all the nodes, allows the agent to efficiently return to the starting node after completing the exploration and flag $\roott$ allows the agent to terminate the algorithm at the starting node. 

\begin{algorithm}[t]
		\SetKw{KwAnd}{and}
		\SetKw{KwOr}{or}
		\caption{Tree exploration in \cleanmem model}
		\label{alg:cleanmem}
		\tcp{Agent is at some node $v$ and the outgoing ports are $\{1,2,\dots ,d_v\}$}
		\If{$\State  = \Init$}{
			$v.\roott \leftarrow \texttt{True},$
			$\State  \leftarrow \textsf{Exploring}, $
			 $v.\visited \leftarrow \texttt{True}$\;
			\MOVE{($v.\port$)}\;
		}
		\Else{
		
		\If{$\State = \textsf{ReturnFromParent}$}{
			$v.\parent \leftarrow \port$, $\State \leftarrow \textsf{Exploring}$\tcp*{mark edge as leading to parent}
			$parentSet \leftarrow \texttt{True}$\;}
			\lElse{
			$parentSet \leftarrow \texttt{False}$}
		\If{\label{line:bigif}(($\State  = \textsf{Exploring}$ \KwAnd $d_v = 1$) \KwOr $\State = \textsf{Completed}$ \KwOr $parentSet = \texttt{True}$) \KwAnd  $v.\roott = \texttt{False}$ \KwAnd $v.\port = d_v$ \KwAnd $v.\parent \neq \NULL$}{
			$\State  \leftarrow \textsf{Completed}$\label{line:completed}\tcp*{exploration of this subtree completed}
			\MOVE{$(v.\parent)$}\tcp*{return to the parent}\label{line:moveparent}}
		\ElseIf{$\State  = \textsf{Completed}$ \KwAnd $v.\roott = \texttt{True}$ \KwAnd $v.\port = d_v$}{
			$\State  \leftarrow \Terminated$\label{line:terminate}\tcp*{exploration of the whole tree completed}}
		\Else{
			\If{$\State = \textsf{Exploring}$ \KwAnd $v.\visited = \texttt{True}$ \KwAnd $parentSet = \texttt{False}$}{
				\label{line:setreturnfromparent}$\State \leftarrow \textsf{ReturnFromParent}$\;}
			\ElseIf{$\State = \textsf{Completed}$}{
				$\State \leftarrow \textsf{Exploring}$, $v.\port \leftarrow v.\port + 1$\label{line:incrementcompleted}\;
			}
			\Else{
			\lIf{ $v.\visited = \texttt{True}$}{
				$v.\port \leftarrow v.\port + 1$}
			\lElse{
				$v.\visited \leftarrow \texttt{True}$}
			}
		 	\MOVE{$(v.\port)$}\;\label{line:movelast}
			}
		}
\end{algorithm}

\begin{theorem}
\label{thm:clean}
Algorithm~\ref{alg:cleanmem} explores any tree and terminates at the starting node in $O(n)$ steps in \cleanmem model.
\end{theorem}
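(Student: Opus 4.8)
The plan is to show that Algorithm~\ref{alg:cleanmem} simulates a depth-first traversal of $T$ rooted at the starting node $r$, and that it traverses every edge of $T$ exactly four times; this immediately yields the $O(n)$ bound (in fact exactly $4(n-1)$ moves) together with full coverage of vertices and edges, and the termination clause will come out of the same argument. First I would record the easy consequences of clean memory that the rest of the proof leans on: throughout the execution $v.\visited=\texttt{True}$ holds exactly when the agent has already entered $v$, $v.\roott=\texttt{True}$ holds only for $r$ (set once, in the \Init\ step), and $v.\parent$ is either $\NULL$ or the port assigned to it by the single line of the pseudocode that writes it. In particular, whenever the agent enters a node $v$ in state \textsf{Exploring} with $v.\visited=\texttt{True}$, one checks that it must have arrived from a child of $v$ performing a parent search (the only neighbour that can send the agent into $v$ in state \textsf{Exploring} while $v$ is already visited), so that the \textsf{ReturnFromParent} branch is invoked precisely when intended.

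The heart of the argument is a single lemma, proved by induction on $|T_v|$ (equivalently, on the height of $T_v$), where $T_v$ denotes the subtree rooted at $v$. For a non-root node $v$ with parent $p$: if at some step the agent enters $v$ for the first time, coming from $p$ (then necessarily in state \textsf{Exploring}, with $v.\port=1$, $v.\parent=\NULL$), then (i) the agent subsequently visits every node of $T_v$, traverses every edge of $T_v$ exactly four times, and, for every $w\in T_v$, sets $w.\parent$ to the port of $w$ leading towards $r$ and sets $w.\port=d_w$; (ii) the only time the agent leaves $T_v$ before finishing is a single transient ``bounce'' that enters $p$ and returns to $v$ without changing $p$'s memory, and no memory outside $T_v$ is modified; (iii) the agent finally returns to $p$ in state \textsf{Completed}, the edge $\{v,p\}$ having been traversed exactly four times (down, up, down, up), and at that moment $p.\port$ still equals the port of $p$ leading to $v$.

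The base case ($v$ a leaf) is a three-step trace. For the inductive step the agent enters $v$, sets $v.\visited$, does $\MOVE{($v.\port$)}$ with $v.\port=1$, and I would then chain the inductive hypothesis over the children of $v$ in port order, inserting at the unique moment when $v.\port$ reaches the port $k$ leading to $p$ the parent-finding interaction: the agent walks to $p$ in state \textsf{Exploring}; $p$ (being visited and not the root, by the $v.\roott=\texttt{False}$ guard on line~\ref{line:bigif}) sends it back via $\MOVE{($p.\port$)}$ in state \textsf{ReturnFromParent}; and on return $v.\parent$ is correctly set to $v.\port=k$ because neither $v.\port$ nor $p.\port$ changed in between. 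One checks that line~\ref{line:bigif} and the termination test do not misfire while the subtree is still being explored (we always have $v.\port<d_v$, or $v.\parent=\NULL$), and that once the last child has returned — in state \textsf{Completed}, with $v.\port=d_v$ and $v.\parent$ already set — line~\ref{line:bigif} triggers and $\MOVE{($v.\parent$)}$ returns the agent to $p$ in state \textsf{Completed}. Applying this at the root: after the \Init\ step and $\MOVE{(1)}$, the lemma applied to $r$'s children $c_1,\dots,c_{d_r}$ in port order shows the agent returns to $r$ in state \textsf{Completed} with $r.\port=j$ after exploring the subtree of $c_j$; since $r.\roott=\texttt{True}$, line~\ref{line:bigif} never applies at $r$, so for $j<d_r$ the agent increments $r.\port$ and descends to $c_{j+1}$, and for $j=d_r$ the termination test fires and the agent halts at $r$ (the transient visits to $r$ from children's parent searches merely put the agent in state \textsf{ReturnFromParent} and bounce straight back, leaving $r.\port$ untouched). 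Summing four traversals over the $n-1$ edges gives exactly $4(n-1)=O(n)$ steps, with all vertices and edges covered and the agent stopped at $r$.

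I expect the main obstacle to be the inductive step of the lemma: not a single idea but the bookkeeping of verifying that the three agent states interact correctly with the four node fields across the three positions of the parent port ($k=1$, $1<k<d_v$, $k=d_v$), and that line~\ref{line:bigif}, the termination test, and the \textsf{ReturnFromParent} assignment each fire exactly once and only when intended — plus the degenerate cases in which $v$, $p$, or $r$ has degree one. The clean-memory hypothesis (initial $\visited=\texttt{False}$, $\parent=\NULL$) is exactly what makes those guards reliable, and I would flag where it is used.
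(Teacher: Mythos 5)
Your proposal is correct and takes essentially the same route as the paper's own proof: a structural induction over subtrees whose key mechanism is the ``bounce'' (first return to the parent in state \textsf{Exploring}, sent back in state \textsf{ReturnFromParent} so that the $\parent$ pointer gets set, final return in \textsf{Completed}), with the $\roott$ flag handling termination at the starting node. The only difference is cosmetic: you sharpen the paper's per-subtree $O(n_v)$ accounting to an exact count of four traversals per edge, i.e.\ $4(n-1)$ moves in total.
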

\begin{proof}
First note that the algorithm marks the starting node with flag $\roott$ (line 2). The algorithm never returns to state $\Init$, hence only this node will be marked with the $\roott$ flag.
The only line, where the agent terminates the algorithm is line \ref{line:terminate} hence the agent can only terminate in the starting node. We need to show that the agent will terminate in every tree and before the termination it will visit all the vertices and the time of the exploration will be $O(n)$.
Let us denote the starting node as $\Root$ and for any node $v$ different from $\Root$, will call the single neighbor of $v$ that is closer to $\Root$ as the parent of $v$. 

We will show the following claim using induction over the structure of the tree:

\noindent\textit{Claim 1: Assume that the agent enters to some previously unvisited subtree rooted at $v$ with $n_v$ vertices for the first time in state $\textsf{Exploring}$ in step $t_s$. Then the agent:
\begin{enumerate}[label={C.\arabic*}]
\item returns to the parent of $v$ for the first time in state $\textsf{Exploring}$ (denote by $t_r>t_s$ the step number of the first return from $v$ to its parent),\label{firstReturn}
\item goes back to $v$ in the state $\textsf{ReturnFromParent}$ at step $t_r +1$,\label{comeBack}
\item returns to the parent of $v$ for the second time in state $\textsf{Completed}$ (denote by $t_f>t_r +1$ the step number of the second return from $v$ to its parent),\label{secondReturn}
\item visits all the vertices of this subtree and spends $O(n_v)$ steps within time interval $[t_s,t_f]$.\label{timeSpend}
\end{enumerate}
}
We will first show it for all the leaves. Then, assuming that the claim holds for all the subtrees rooted at the children of some node $v$, we will show it for $v$. To show this claim for any leaf $l$, consider the agent entering in state $\textsf{Exploring}$ to $l$. Upon the first visit to $l$, the agent sets the flag $l.\visited$ to $\texttt{True}$ and leaves (without changing the state of the agent) with port $1$. This proves \ref{firstReturn}.
In the next step, at the parent of $l$, a state changes to $\textsf{ReturnFromParent}$ and the agent uses the same port as during the last time in parent of $l$.
Therefore the agent moves back to $l$ (\ref{comeBack}) and sets $l.\parent \gets 1$ (line 6). Then the agent changes its state to $\textsf{Completed}$ and moves to the parent (lines \ref{line:completed}-\ref{line:moveparent}).
This shows \ref{secondReturn}. Node $l$ was visited twice within the considered time steps, which shows \ref{timeSpend}. This completes the proof of the claim for all the leaves.

Now, consider any internal node $v$ of the tree (with $d_v >1$) and assume that the claim holds for all its children. When the agent enters to $v$ for the first time, it sets the flag $v.\visited$ to $\texttt{True}$ and moves to its neighbor $w$ (while being in state $\textsf{Exploring}$) via port $1$.


\textit{Case 1: $w$ is the parent of $v$.}
This immediately shows \ref{firstReturn}. If the parent of $v$ is not $\Root$, then it has the degree at least $2$, hence the agent will evaluate the if-statement in line \ref{line:bigif} to $\texttt{False}$. Thus the agent will execute line \ref{line:setreturnfromparent} and change its state to $\textsf{ReturnFromParent}$. The agent uses the same port as during the previous visit of $w$, therefore the agent goes back to $v$ (\ref{comeBack}), hence it will correctly set the pointer $v.\parent$ (it will point to the parent of $v$), which shows \ref{secondReturn}.

\textit{Case 2: $w$ is a child of $v$.}
In this case we enter to a child of $v$. We have by the inductive assumption (\ref{firstReturn}), that the agent will return from $w$ to $v$ in state $\textsf{Exploring}$.
Since the agent enters to $v$ in state $\textsf{Exploring}$, then the value of $parentSet$ is $\texttt{False}$ and the agent executes line~\ref{line:setreturnfromparent}, transitions to state $\textsf{ReturnFromParent}$ and then line~\ref{line:movelast} it takes the same edge as during the last visit to $v$, hence it moves back to node $w$ (\ref{comeBack}). By \ref{secondReturn} we get that the next time the agent will traverses the edge from $w$ to $v$, it will be in state $\textsf{Completed}$.
Then the agent increments the $v.\port$ pointer (line \ref{line:incrementcompleted}) and moves to the next neighbor of $v$ (line \ref{line:movelast}).

Thus the agent either finds the parent or explores the whole subtree rooted at one of its children $w$ in $O(n_w)$ steps (by the inductive assumption \ref{timeSpend}). An analogous analysis holds for ports $2,3,\dots,d_v$. When the agent returns from the neighbor connected to $v$ via the last port $d_v$, it is either in state $\textsf{Completed}$ or $\textsf{ReturnFromParent}$ (the second case happens if the edge leading from $v$ to its parent has port number $d_v$).
In both cases it executes lines \ref{line:completed} and \ref{line:moveparent} and leaves to its parent (the pointer to parent is established since the agent had traversed each outgoing edge, hence it moved to its parent and correctly set the $\textsf{parent}$ pointer).


By this way, the agent visits all the subtrees rooted at $v$'s children $w_1,w_2,\dots,w_{d_v-1}$ (or up to $w_{d_v}$ if $v=\Root$). Hence, the total number of steps for a node $v\neq \Root$ is $O\left(\sum\limits_{i=1}^{d_v-1} n_{w_i}\right) = O(n_v)$ and $v = \Root$ it is $O\left(\sum\limits_{i=1}^{d_v} n_{w_i}\right) = O(n)$.

To complete the proof of Theorem~\ref{thm:clean} we need to analyze the actions of the agent at $\Root$. Consider the actions of the agent at $\Root$ when the $\Root.\port$ pointer takes values $i = 1,2,\dots, d_{\Root}$. Let $v_i$ be the neighbor of $\Root$ pointed by port number $i$ at $\Root$.
Observe that the agent at $\Root$ behaves similarly as in all the other internal nodes (only exception is that the agent will never enter $\Root$ in state $\textsf{ReturnFromParent}$, because by Claim $1$ the agent can enter to it only in states $\textsf{Exploring}$ or $\textsf{Completed}$, since the $\Root$ is a parent of its every child).
By Claim 1, the agent visits the whole subtrees rooted at these nodes in time proportional to the number of nodes in these subtrees. When the agent returns from node $v_{d_{\Root}}$ in state $\textsf{Completed}$ then the agent terminates the algorithm in line \ref{line:terminate}. The total runtime is proportional to the total number of nodes in the tree.
\end{proof}

\subsection{Exploration in \token model}
\label{sec:token}
\customparagraph{Intuition of the algorithm.}
The agent has a single token, which can be $\DROP{ped}$, $\TAKE{n}$ and $\MOVE{d}$ (i.e., carried by the agent across an edge of the graph). Moreover, the agent has constant memory which contains a State from set $\{\Init, \Roam, \Rr{}, \Down, \Up, \Terminated\}$. Our algorithm ensures, that in states $\Roam$ and $\Rr{}$ the agent does not hold the token and in the remaining states, it holds the token. In these four states the agent can eventually $\DROP{}$ it. 
Each node $v$ has degree denoted by $d_v$, which is part of the input to the agent, when entering to a node. Moreover, the memory at each node $v$ is organized into three variables: $v.\port$ and $v.\parent$ of size $\lceil\log d_v\rceil$ and a flag $v.\roott\in\{\texttt{True},\texttt{False}\}$ to mark the starting node (the node with this flag set to $\texttt{True}$ will be called $\Root$). We assume that in all vertices, variables $\port$, $\parent$ and $\roott$ have initially any admissible value (if not, then the agent would easily notice it and change such a value).
Moreover, when the agent is entering to a node, it can see whether the node contains the token or not.

We would like to perform a similar exploration as in \cleanmem model -- we will use pointers $\port$ and $\parent$ as in Algorithm~\ref{alg:cleanmem}. However, the difficulty in this model is that these pointers may have arbitrary initial values. Especially, if the initial value of $\parent$ is incorrect, 
Algorithm~\ref{alg:cleanmem} may fall into an infinite loop. Hence in this section we propose a new algorithm that handles dirty memory using a single token. In this algorithm the agent maintains an invariant that the node with the token, and all the nodes on the path from the token's location to the root are guaranteed to have correctly set pointers to their parents. To explore new nodes, the agent performs a Rotor-Router traversal, starting from node $v$ with the token to its neighbor $w$ (by the invariant, the agent chooses $w$ as one of its children, not its parent). During this traversal, the agent is resetting the \parent pointers at each node (and cleaning the \roott flag). The agent is using \port as the pointer for the purpose of Rotor-Router algorithm. The agent does \textbf{not} have to reset \port as the Rotor-Router requires no special initialization. Moreover, by the properties of Rotor-Router the agent does not traverse the same edge twice in the same direction before returning to the starting node. Since the agent starts in the node with the token, it can notice that it completed a traversal. During this traversal each edge is traversed at most twice (once in each direction). Moreover, each node visited during this traversal has cleaned memory (pointer $\parent$ points to NULL and \roott is set to $\texttt{False}$). After returning to the node with the token, pointer $v.\port$ points at $w$ and $w.\port$ points at $v$. Hence it is possible to traverse this edge (in a special state \Down similar to \textsf{ReturnFromParent} from Algorithm~\ref{alg:cleanmem}) and correctly set pointer $w.\parent$ maintaining the invariant. After moving the token down, the agent starts the Rotor-Router procedure again. Since the agent is not cleaning the pointer \port, the Rotor-Router will use different edges than during the previous traversal. Thus, our algorithm traverses every edge at most $6$ times. 
\customparagraph{Preliminaries.}
 For brevity, let us introduce a variable $\token$ associated with a vertex currently occupied by the agent, which is $1$, when the agent meets the token and $0$ otherwise.
If $\token=1$, then the agent can $\TAKE{}$. If the agent holds the token, it can $\DROP{}$ it.

Additionally, we consider two substates of $\Rr{}$, depending on the value of $\token$ variable. Substate $\Rr{0}$ is state $\Rr{}$, if variable $\token = 0$ at the node to which the agent entered in the considered step. Similarly $\Rr{1}$ indicates that the agent enters in state $\Rr{}$ to a node with $\token = 1$. Note that this distinction is only for the purpose of the analysis, and it does not influence the definition of the algorithm. In our analysis, we will call $\Init, \Roam, \Rr{0}, \Rr{1}, \Down, \Up, \Terminated$ \emph{actions} as these states (and substates) correspond to different lines executed by the agent (see the pseudocode of Algorithm~\ref{alg:token}).
\begin{algorithm}[h]
\begin{multicols}{2}
		\SetKw{KwAnd}{and}
		\SetKw{KwOr}{or}
		\caption{Tree exploration in \token model}\label{alg:token}
		\tcp{Agent is at some node $v$ and the outgoing ports are $\{1,2,\dots ,d_v\}$}
		\If{$\State=\Init$}{
			$\textit{Clean()}$, $v.\port \leftarrow 1$, $\DROP{}$\; 
			\tcp{mark the root for termination}
			$v.\roott\leftarrow \texttt{True}$\;
			$\State\leftarrow \Roam$\;
		}
		\ElseIf{$\State=\Rr{}$ \KwAnd $\token=1$} 
		{
			\tcp{substate $\Rr{1}$}
			$\TAKE{}$, $\State \leftarrow \Down$\;
		}
		\ElseIf{$\State=\Rr{}$ \KwAnd $\token=0$} 
		{
			\tcp{substate $\Rr{0}$}
			$\textit{Clean()}$\;
			$\textit{Progress()}$ \tcp*{increment $\port$ \hspace{3mm}}
		}
		\ElseIf{$\State=\Down$}{
			$\DROP{}$, $v.\parent \leftarrow v.\port$\;
			$\textit{Progress()}$\;
			$\State \leftarrow \Roam$\;
			$\textit{IfUp()}$	\tcp*{check if in a leaf\hspace{3mm}}
		}
		\ElseIf{$\State=\Up$}{
			$\DROP{}$, $\textit{Progress()}$\;
			$\State \leftarrow \Roam$\;
			\If{$v.\port=1$ \KwAnd $v.\roott = \texttt{True}$}{
				$\State \leftarrow \Terminated$\;}
			$\textit{IfUp()}$\;
		}
		\ElseIf{$\State=\Roam$}{
			$\textit{Clean()}$, $\State \leftarrow \Rr{}$\;
		}
		\If{$\State\neq \Terminated$}{
			$\MOVE{(v.\port)}$\;}
		  \setcounter{AlgoLine}{0}
  \hrulealg
  \SetKwProg{myproc}{Procedure}{}{}
  \myproc{Clean()}{
			$v.\roott \leftarrow \texttt{False}$, $v.\parent \leftarrow \NULL$\;}
  \setcounter{AlgoLine}{0}
  \hrulealg
  \SetKwProg{myproc}{Procedure}{}{}
  \myproc{Progress()}{
		$v.\port\leftarrow (v.\port \mod\; d_v) +1$\;}
  \setcounter{AlgoLine}{0}
  \hrulealg
  \SetKwProg{myproc}{Procedure}{}{}
  \myproc{IfUp()}{
			\If{$v.\parent=v.\port$}{
				$\TAKE{}$, $\State\leftarrow\Up$\;
			}}
\vspace*{0.1mm}
\end{multicols}
\end{algorithm}

We assume that $\Init$ action is performed at time step $0$.
Let $\Root$ denote the initial position of the agent. Let $v.\port(t)$ and $v.\parent(t)$ denote, respectively, the values of $v.\port$ and $v.\parent$ pointers at the end of the moment $t$. If $v$ is the starting point of the $t$-th step, then we say that $v.\port(t)$ is the outport related to $t$-th moment. Moreover, $v.\port(\cdot)$ cannot be changed until the node $v$ will be visited for the next time.
Each $\MOVE{(\port)}$ involves traversing an edge outgoing from the current position of the agent via port indicated by the current value of variable $\port$ at the current position.
Let $\Path(v)$ denote a set of all vertices, which are on the shortest path from $\Root$ to $v$, excluding $\Root$. Let $T_v$ denote the subtree of $T$ rooted at $v$, i.e., $(w\in T_v)\, \equiv\, (w=v\, \lor\, v\in \Path(w))$. Let $T_{v,p}$ denote a subtree of $T_v$, rooted at a node connected by edge labelled by outport $p$ at vertex $v$, i.e. if $p$ leads from $v$ to $w$ (where $v\in \Path(w)$), then $T_{v,p}=T_{w}$ (we do not define such the tree when $p$ directs towards $\Root$). We say that the action is performed away from $\Root$ (downwards), if it starts in $v$ and ends in $T_v$. Otherwise the action is towards $\Root$ (upwards). Let $\Tok(t)$ and $\Ag(t)$ be the positions of the token and the agent, respectively, at the end of the moment $t$. Let $\Act(t)$ denote the action performed during the $t$-th step. Let the variable $v.\visited(t) \in \{0,1\}$ indicates whether $v$ was visited by Algorithm \ref{alg:token} until the moment $t$. Note that this variable is not stored at the nodes and this notation is only for the analysis. 
\vspace*{1mm}

\noindent
\customparagraph{Properties of the algorithm.}
Let us define the following set of properties $\SP(t)$ (in this definition we denote $\Ag(t)=w$) that describe the structure of the walk and the interactions with the memory at the nodes by an agent performing Algorithm~\ref{alg:token}. 
\begin{enumerate}[leftmargin=2.5em, label={P.\arabic*}]
\item During $t$-th step \begin{enumerate}
\item $\Down$ and $\Roam$ actions are performed away from $\Root$ (downwards).
\item $\Up$ and $\Rr{1}$ actions are performed towards $\Root$ (upwards).
\end{enumerate} \label{RootDir}
\item $w \in T_{\Tok(t)}$. \label{AgUnderTok}
\item If $w$ is visited for the first time at step $t$ ($\Act(t)\in\{\Init, \Roam, \Rr{0}\}$), then it is also cleaned, which means that $w.\parent$ is set to $\NULL$ and $w.\roott$ is set to $\texttt{False}$.\label{FirstClean}
\item For every $v\in \Path(\Tok(t))$, $v.\parent(t)\neq \NULL$. \label{ParOverTok}
\item If $w.\visited(t-1)=1$ and $w.\parent(t-1)\neq \NULL$, then $\Act(t)\not\in\{\Roam, \Rr{0}\}$. \label{CleanOnce}
\item If $w.\visited(w,t)=1$ and $w.\parent(t)\neq \NULL$, then $w.\parent(t)$ directs towards $\Root$. \label{OnePar}
\item 
If the state at the end of $t$-th moment is $\Up$, then $T_{w}$ is explored.\label{EndUp}
\item If $t>0$, then for every $x\ls \Root.\port(t-1)$, $T_{\Root,x}$ is explored before $t$-th moment.
\label{RootProg}
\item If $w.\parent(t)= \NULL$, then there do not exist $s_1<s_2< t$ such that $w.\port(s_1)=w.\port(t)\neq w.\port(s_2)$, where $w.\parent(s_1)= \NULL$ ($w.\port(\cdot)$ cannot be changed to the same value two times before $w.\parent(\cdot)$ was established). \label{PortOnceBefore}
\item There do not exist $s_1<s_2< t$ such that $w.\port(s_1)=w.\port(t)\neq w.\port(s_2)$ and $w.\parent(s_1)\neq \NULL$ ($w.\port(\cdot)$ cannot be changed to the same value two times after $w.\parent(\cdot)$ was established). \label{PortOnceAfter}
\end{enumerate}%
\begin{lem}
\label{lem:sp0}
Right before the $1-st$ moment, $\Init$ phase of Algorithm \ref{alg:token} guarantees $\SP(0)$.
\end{lem}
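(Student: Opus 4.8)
The plan is to unwind the $\Init$ branch of Algorithm~\ref{alg:token}, together with the trailing $\MOVE{(v.\port)}$ executed because the resulting state is not $\Terminated$, record the configuration it produces, and then check each of the ten clauses of $\SP(0)$ against that configuration. Running $\Init$ at the starting node $\Root$ first calls \textit{Clean()} (so $\Root.\parent\gets\NULL$ and $\Root.\roott\gets\texttt{False}$), then sets $\Root.\port\gets 1$, drops the token at $\Root$, sets $\Root.\roott\gets\texttt{True}$ and the agent's state to $\Roam$, and finally traverses port~$1$. Hence, right before the $1$-st moment (equivalently, at moment $0$): $\Tok(0)=\Root$; $\Ag(0)=w$ is the neighbour of $\Root$ reached via port~$1$ (this presupposes $d_{\Root}\ge 1$, i.e.\ $n\ge 2$; for $n=1$ there is nothing to explore); $\Root.\port(0)=1$, $\Root.\parent(0)=\NULL$, $\Root.\roott(0)=\texttt{True}$; the agent's state equals $\Roam$; $\Act(0)=\Init$; and no node other than $\Root$ has had a transition executed at it, so in particular $w.\visited(0)=0$ and $w$ still carries its adversarial initial pointer values.

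Against this configuration I would simply run down the list; every clause turns out vacuous or immediate. Since $\Act(0)=\Init\notin\{\Down,\Roam,\Up,\Rr{1}\}$, both parts of~\ref{RootDir} hold vacuously, and since the agent's state at the end of moment $0$ equals $\Roam\ne\Up$,~\ref{EndUp} is vacuous. As $\Tok(0)=\Root$ we have $T_{\Tok(0)}=T$, so $w\in T_{\Tok(0)}$, which is~\ref{AgUnderTok}; and $\Path(\Tok(0))=\Path(\Root)=\emptyset$, so~\ref{ParOverTok} is vacuous. For~\ref{FirstClean}: the only node at which a transition was executed in step~$0$ is $\Root$, and $\Init$ calls \textit{Clean()} on it, which performs exactly $\Root.\parent\gets\NULL$ and $\Root.\roott\gets\texttt{False}$; the node $w=\Ag(0)$ has not been processed at all, so the hypothesis of~\ref{FirstClean} for it is not met. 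The hypotheses of~\ref{CleanOnce} and~\ref{OnePar} both require $w$ to have been visited by/before moment $0$, which fails since $w.\visited(0)=0$ (for~\ref{CleanOnce} one may also note directly $\Act(0)=\Init\notin\{\Roam,\Rr{0}\}$). Clause~\ref{RootProg} has hypothesis $t>0$, which fails at $t=0$. Finally,~\ref{PortOnceBefore} and~\ref{PortOnceAfter} assert the non-existence of indices $s_1<s_2<0$, of which there are none, so both are trivially true.

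The verification is entirely mechanical, so there is no real obstacle; the only care needed is in pinning down the configuration at moment $0$ precisely — in particular which node $\Ag(0)$ denotes, and the fact that in the \dirtymem model the freshly-entered neighbour $w$ still carries adversarial pointer values yet counts as not-yet-visited because no transition has been executed at it — and then in matching each property's hypothesis to that configuration. This lemma is purely the base case anchoring the induction over $t$; the substantive work of propagating $\SP(t-1)$ to $\SP(t)$ is left to the subsequent lemmas.
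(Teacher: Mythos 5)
Your clause-by-clause verification is the same mechanical route the paper takes, but you pin down the moment-$0$ configuration incorrectly, and this is exactly the point you yourself identify as the only delicate one. In the paper's convention (the one the inductive step, Lemma~\ref{lem:spt}, is written in), a step consists of the move \emph{into} a node followed by the action performed \emph{at} that node: ``the agent starts from node $v$ at the end of $t$-th moment and during $(t+1)$-th step \MOVE{s} through $v.\port(t)$ to node $w$ and performs an action,'' and in CASE~I the token dropped by $\Act(t)$ is dropped at $v=\Ag(t)$. Since $\Act(0)=\Init$ is performed at the starting node and drops the token there, one must have $\Ag(0)=\Root$ (the traversal of port $1$ belongs to step $1$), not the neighbour of $\Root$ as you claim. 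Your reading is not merely a harmless relabelling: with $\Ag(0)$ equal to the fresh neighbour $w$, property \ref{FirstClean} at $t=0$ would actually be \emph{false} under the paper's semantics (the agent would have visited $w$ at step $0$, $\Act(0)\in\{\Init,\Roam,\Rr{0}\}$, yet $w$ still carries adversarial $\parent$/$\roott$ values), and \ref{OnePar} could likewise fail since $w.\parent$ is adversarial. You escape only by declaring $w.\visited(0)=0$ because ``no transition has been executed at it,'' i.e.\ by redefining ``visited''; but that reading cannot be sustained through the induction (at any later step the node just entered would again be ``visited but not yet cleaned,'' so \ref{FirstClean} as stated could never be maintained), so the base case you prove is not the base case Lemma~\ref{lem:spt} consumes.

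With the correct identification $\Ag(0)=\Tok(0)=\Root$, the verification is the paper's: \ref{FirstClean}$(0)$ holds because $\Init$ itself calls \textit{Clean()} at $\Root$ (setting $\parent$ to $\NULL$; the subsequent $\roott\leftarrow\texttt{True}$ is the intended marking of the root); \ref{ParOverTok}$(0)$ holds because $\Path(\Root)=\emptyset$; \ref{CleanOnce}$(0)$ and \ref{OnePar}$(0)$ hold because nothing was visited before moment $0$ and $\Root.\parent(0)=\NULL$; \ref{RootProg}$(0)$ follows from $\Root.\port(0)=1$ with no other node explored; the remaining clauses are vacuous, as you say. So your conclusion is right and most clauses are handled correctly, but the treatment of \ref{FirstClean}, \ref{CleanOnce} and \ref{OnePar} rests on a misreading of $\Ag(0)$ and of ``visited'' and needs to be redone as above.
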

\begin{proof}
\ref{RootDir}$(0)$ does not depend on $\Init$. $\Root.\port(0)=1$ entails \ref{RootProg}$(0)$, because no other nodes were visited. \ref{AgUnderTok}$(0)$, \ref{FirstClean}$(0)$, \ref{EndUp}$(0)$ and \ref{PortOnceBefore}$(0)$ are trivially true. Note that $\Root\not\in \Path(\Tok(0))$, since $\Root$ is excluded from any such a path, hence \ref{ParOverTok}$(0)$. Moreover, $\Root.\parent(0)=\NULL$, so if-conditions of \ref{PortOnceAfter}$(0)$, \ref{CleanOnce}$(0)$ and \ref{OnePar}$(0)$ are not fulfilled, so they are true.
\end{proof}

\begin{lem}
\label{lem:spt}
If Algorithm \ref{alg:token} satisfies $\SP(s)$ for all $s\ls t$, then it also fulfills $\SP(t+1)$.
\end{lem}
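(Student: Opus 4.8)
The plan is to prove the lemma by induction on $t$; Lemma~\ref{lem:sp0} is the base case $\SP(0)$, and Lemma~\ref{lem:spt} is exactly the inductive step, so I would fix $t\ge 0$, assume $\SP(s)$ for all $s\ls t$, and deduce $\SP(t+1)$. Following the paper's conventions, let $w=\Ag(t+1)$ be the node the agent processes at step $t+1$ (so that $w.\port(t+1)$, $w.\parent(t+1)$, $w.\roott(t+1)$, the token position $\Tok(t+1)$ and the agent's new state are all produced during step $t+1$) and let $u=\Ag(t)$ be the node it came from. Since $\Init$ occurs only at step $0$ and $\Terminated$ performs no move, the argument is a case analysis on $\Act(t+1)\in\{\Roam,\Rr{0},\Rr{1},\Down,\Up\}$. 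In each case I would read off from Algorithm~\ref{alg:token} which branch is executed at $w$ --- and, by noting which processing at $u$ at step $t$ could have left the agent in the state that triggers this branch, the direction of the edge $u\to w$ --- and then re-verify each of \ref{RootDir}--\ref{PortOnceAfter} at moment $t+1$, invoking the corresponding property at moment $t$ (or at earlier moments) where needed. A useful preliminary is to record that a first visit to a node can only occur through an $\Init$, $\Roam$ or $\Rr{0}$ action --- every other action moves the agent into a node that either already holds a dropped token or lies on the current token-to-root path, hence has been visited before --- and that each of these three actions invokes $\textit{Clean()}$ on the node it processes.

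The majority of the ten properties should then reduce to inspecting a single transition together with the induction hypothesis. The direction properties \ref{RootDir} and \ref{AgUnderTok} follow from determining, per action, whether the agent leaves $u$ via $u.\port$ into a child subtree (using \ref{ParOverTok}$(t)$, \ref{OnePar}$(t)$, and the fact that the cyclic increment of $\textit{Progress()}$ has moved $u.\port$ off the port toward $\Root$) or via $u.\parent$ to the parent of $u$, combined with whether the token is carried along or left in place; either way $w\in T_{\Tok(t+1)}$. The cleaning properties \ref{FirstClean}, \ref{CleanOnce}, \ref{OnePar} use the preliminary observation (with \ref{CleanOnce}$(\ls t)$ guaranteeing that a node whose $\parent$ has already been set is never re-entered by a $\Roam$ or $\Rr{0}$ action) together with the fact that the only action writing a non-$\NULL$ value into $w.\parent$ is $\Down$, which assigns $w.\port$, a port that at that instant points toward $\Root$ (after the returning Rotor-Router traversal, $w.\port$ points back at the token's node $u$). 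Property \ref{ParOverTok} is preserved because $\Tok$ changes only in an $\Rr{1}$ step (toward a node already in $\Path(\Tok(t))$, which keeps its pointer) or a $\Down$ step (into a child whose $\parent$ the agent has just made non-$\NULL$, while every node of $\Path(u)$ retains its pointer). The port-monotonicity properties \ref{PortOnceBefore} and \ref{PortOnceAfter} follow from three facts: $w.\port$ changes only by $\textit{Progress()}$, at most once per visit of the agent to $w$; it is never reset by $\textit{Clean()}$; and, by the Rotor-Router property quoted in the intuition, between two successive visits of the agent to the token's node no edge is traversed twice in the same direction --- so a $\port$ value can recur neither before nor after $w.\parent$ is first established.

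The hard part, where I expect the real work, is the pair of exploration properties \ref{EndUp} and \ref{RootProg}: unlike the rest, these assert that an entire subtree has been fully visited and so cannot follow from a single transition. For \ref{EndUp}$(t+1)$ I would argue that the agent ends step $t+1$ in state $\Up$ only because the $\textit{IfUp()}$ call fired while processing $w$, i.e.\ $w.\parent(t+1)=w.\port(t+1)$; since the preceding change of $w.\port$ was a single $\textit{Progress()}$, the pointer has just wrapped back onto $w.\parent$, so --- by \ref{PortOnceAfter} --- after $w.\parent$ was first set the agent has left $w$ through every port $p\neq w.\parent$ exactly once, each such departure (as a $\Roam$ step) opening an exploration of the hanging subtree $T_{w,p}$. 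Invoking the induction hypothesis --- principally \ref{EndUp} itself --- at the earlier moments when those explorations ended with the agent back at $w$ in state $\Up$, together with the Rotor-Router coverage guarantee inside each $T_{w,p}\cup\{w\}$ (a traversal begun at the token's node returns to it only after visiting every node of each still-unfinalized hanging subtree and crossing each of its edges once in each direction), one obtains that every $T_{w,p}$ is fully visited, hence that $T_w$ is explored. Property \ref{RootProg}$(t+1)$ is then the analogue read off at $\Root$, via the $\roott$ flag, the value of $\Root.\port$, and \ref{EndUp}. The main obstacle is making this coverage argument precise: one has to interleave the per-step induction hypothesis with the global Rotor-Router guarantee and track, over a long stretch of steps, how a subtree's ``finished'' status is first created (by an $\Up$ step into $w$) and then advanced (by $\textit{Progress()}$ onto the next child) until the port returns to $w.\parent$. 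This is the single place where the argument is not a routine line-by-line check against the pseudocode; everything else is bookkeeping.
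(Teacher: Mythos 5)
Your overall skeleton matches the paper's proof: treat the lemma as the inductive step, split on $\Act(t+1)\in\{\Roam,\Rr{0},\Rr{1},\Down,\Up\}$, and re-verify \ref{RootDir}--\ref{PortOnceAfter} using $\SP(s)$ for $s\ls t$. But at the points where the real work lies, the proposal substitutes appeals that do not hold up, and it mislocates the difficulty. The most serious gap is \ref{PortOnceBefore}/\ref{PortOnceAfter}: you dispatch them by citing ``the Rotor-Router property'' that no edge is traversed twice in the same direction before returning to the token. That property concerns a single pure rotor excursion; the claims to be proved are global statements over the whole execution, in which $\port$ pointers start dirty, are advanced also by $\Down$ and $\Up$ actions via \textit{Progress()}, and the walk repeatedly changes its ``start'' as the token moves. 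Precisely because no off-the-shelf rotor lemma applies, the paper's CASE II ($\Act(t+1)=\Rr{0}$) devotes a long argument to \ref{PortOnceBefore}, tracking moments $s_0<s_1<s_2<s_3$ and showing that any repetition of a $\port$ value would force an intervening $\Down$ action that sets the parent pointer, a contradiction. Nothing in your outline supplies this. Similarly, you justify \ref{CleanOnce}$(t+1)$ by ``\ref{CleanOnce}$(\ls t)$ guarantees that a node with its parent set is never re-entered by $\Roam$ or $\Rr{0}$'' --- that is circular: the hypothesis only says this did not happen up to step $t$, and proving it for step $t+1$ is exactly the content of the property; the paper's CASE I proves it by a contradiction argument (the repeated port would force $v$ to be a leaf while $w$ is its child), again leaning on \ref{PortOnceAfter}/\ref{PortOnceBefore} at earlier times.

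For \ref{EndUp} your intended mechanism (every outport of $w$ other than $w.\parent$ is advanced past exactly once after the parent is set, and each advancement happens during an $\Up$ step at which \ref{EndUp} at that earlier moment certifies the corresponding child subtree explored) is essentially the paper's CASE V argument and is fine; but the auxiliary ``Rotor-Router coverage guarantee'' you invoke --- that a traversal begun at the token's node returns to it only after visiting every node of the hanging subtrees --- is false: a single excursion may return to the token having seen only part of a subtree, and exploration of $T_{w,p}$ is completed only across many excursions and token moves. It is also not needed once the $\Up$/\ref{EndUp} bookkeeping is done correctly. So the proposal is a sound plan at the level of case structure, but the two ingredients it treats as ``routine bookkeeping'' (the global port non-repetition and \ref{CleanOnce}) are where the paper's proof does its hardest work, and they remain unproved in your write-up.
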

\begin{proof}
In all the cases below we assume that the agent starts from node $v$ at the end of $t$-th moment and during $(t+1)$-th step $\MOVE{s}$ through $v.\port(t)$ to node $w$ and performs an action, depending on the case.

CASE I. $\Act(t+1)=\Roam$\\
If $\Act(t+1)=\Roam$, then $\Act(t)\in\{\Down, \Up, \Init\}$ and consequently $\Act(t)$ $\DROP{s}$ the token at $v$%
. If $v=\Root$, then \ref{RootDir}$(t+1)$ is true because the $\MOVE{}$ during $(t+1)$-st time step is clearly away from the root. Otherwise, from \ref{ParOverTok}$(t)$, we get $v.\parent(t)\neq \NULL$ and $\Act(t)\neq\Init$. Thence $\Act(t) \in \{\Down, \Up\}$ thus it end with IfUp() procedure, which would give $\Act(t+1)=\Up$, if we had $v.\port(t)= v.\parent(t)$, but since $\Act(t+1) = \Roam$ we get $v.\port(t)= v.\parent(t)$. Since due to \ref{OnePar}$(t)$, parent at $v$ is set correctly, hence $w\not\in\Path(v)$ and the $\MOVE{}$ during $(t+1)$-st time step is away from the root which gives us \ref{RootDir}$(t+1)$. 
Moreover $\Tok(t)=v$, so also \ref{AgUnderTok}$(t+1)$ is clearly true.
Notice that $\Roam$ always cleans $w$, and it neither moves the token nor changes $v.\port(\cdot)$, so \ref{FirstClean}$(t+1)$, \ref{ParOverTok}$(t+1)$, \ref{PortOnceBefore}$(t+1)$ and \ref{PortOnceAfter}$(t+1)$ are also true.
We already know that the $\MOVE{}$ during $(t+1)$-st time step was away from the $\Root$ thus \ref{RootProg}$(t+1)$ holds.\\
Assume, that $v\neq \Root$. If $w.\visited(t)=1$ and $w.\parent(t)\neq \NULL$, then there exists $s\ls t$ such that $w$ was visited by $\Act(s)=\Down$ action which is the only action which establishes $\parent$ pointer at $w$ (recall that upon the first visit at $w$ the pointer was cleaned), hence $\Tok(s)=w$. By \ref{ParOverTok}$(s)$ and \ref{RootDir}$(t+1)$, we get $v.\parent(s)\neq \NULL$. Moreover, by \ref{OnePar}$(t)$ and \ref{CleanOnce}$(t)$, the parent's pointer cannot be changed during $\Act(t+1)$. From \ref{RootDir}$(s)$ and \ref{RootDir}$(t)$, we have $v.\port(s)=v.\port(t)$, so by \ref{PortOnceAfter}$(t)$, $v.\port(\cdot)$ was not changed between the moments $s$ and $t$. However $v=\Tok(t)$, so by \ref{RootDir}$(s')$, for $s'\ls t$, there exists a moment $u$, such that $s<u<t$, $\Ag(u-1)=w$ and $\Act(u)=\Up$ (the only action which moves the token upwards).  Notice that $\Act(u)=\Up$ invokes Progress() method, which increases the port's pointer by $(1\mod\;d_v)$, so $v.\port(\cdot)$ did not change between the moments $s$ and $t$ and $v$ has to be a leaf of the tree. However, from \ref{RootDir}$(t+1)$, $w$ is a child of $v$, so it is not a leaf.
To complete the contradiction, showing \ref{CleanOnce}$(t+1)$, let us consider the opposite case, when $v=\Root$. Then, the proof is very similar. The only difference is that we need to use \ref{PortOnceBefore}$(t)$ instead of \ref{PortOnceAfter}$(t)$, because $\Root.\parent(\cdot)\equiv\NULL$, as $\Root.\parent(\cdot)$ can be established only by $\Down$ action, which cannot be performed towards $\Root$ from \ref{RootDir}$(s')$, for $s'\ls t$).\\
From \ref{CleanOnce}$(t+1)$ we know that $w.\parent(t)=\NULL$, so we simply get \ref{OnePar}$(t+1)$. \ref{EndUp}$(t+1)$ is independent of $\Roam$ action.

CASE II. $\Act(t+1)=\Rr{0}$\\
Properties \ref{RootDir}$(t+1)$ and \ref{EndUp}$(t+1)$ are independent of $\Rr{0}$ action.
Realize that $\Rr{0}$ action cannot be taken from $v$ to $\Root$, because of \ref{AgUnderTok}$(t)$, so $\Ag(t+1)\neq\Root$ and \ref{RootProg}$(t+1)$ is true.
If $\Act(t+1)=\Rr{0}$, then $\Act(t)\in\{\Rr{0}, \Roam\}$. Hence $v\neq \Tok(t)$ and \ref{AgUnderTok}$(t+1)$ remains fulfilled. $\Tok(t+1)=\Tok(t)$, so $w\not\in\Path(\Tok(t+1))$ and consequently \ref{ParOverTok}$(t+1)$ remains true. $\Rr{0}$ cleans $w$, so \ref{FirstClean}$(t+1)$ and \ref{OnePar}$(t+1)$ are obvious.\\
Note that $\parent(w,t+1)= \NULL$, so by \ref{CleanOnce}$(s')$, for $s'\ls t$, it was always $\NULL$ since the first visit of $w$.
This immediately shows \ref{PortOnceAfter}$(t+1)$ and \ref{CleanOnce}$(t+1)$.\\
Assume that $s_1$ is the first moment, when $w.\port(s_1)=w.\port(t+1)$ and $w.\port(\cdot)$ was changed meanwhile.
Since $w.\port(\cdot)$ can be incremented only by $(1\mod\;d_w)$, there exists a moment $s_2$ such that $s_1<s_2<t+1$ and $w.\port(s_2)$ directs upwards to some vertex $v_p$, but $w.\port(s_2-1)$ is not. Moreover, assume that $w$ was visited for the first time at moment $s_0\ls s_1$. This means that $\Ag(s_0-1)=v_p$.
By \ref{FirstClean}$(s_0)$, $\Act(s_0)\in\{\Roam, \Rr{0}\}$.
If $\Act(s_0)=\Roam$, then $v_p=\Tok(s_2)$, $\Act(s_2+1)=\Rr{1}$ and $\Act(s_2+2)=\Down$, so $w.\parent(s_2+2)\neq \NULL$, where $s_2+2\ls t+1$.
If $\Act(s_0)=\Rr{0}$, then there exists the first moment $s_3$, where $s_2<s_3\ls t+1$, such that $\Act(s_3)$ starts in $v_p$ and ends in $w$. 
Since $\Act(s_0)=\Rr{0}$ is preceded either by $\Roam$ or another $\Rr{0}$ action, therefore $v_p.\parent(s_0)=\NULL$. Moreover, we know that $v_p.\port(s_0-1)$ directs to $w$, so $v_p.\port(s_0-1)=v_p.\port(s_3)$. Note that from \ref{AgUnderTok}$(s')$ for $s_0\ls s'\ls s_2$, we obtain that the agent does not meet the token between the moments $s_0$ and $s_2$, so $\Act(s')=\Rr{0}$ for $s_0\ls s'\ls s_2+1$.
Therefore $v_p.\port(\cdot)$ is incremented by $(1\mod\;d_{v_p})$ at the moment $s_2+1$. Since $w$ is further from $\Root$ than $v_p$, $d_{v_p}>1$, and in consequence, as we know that $v_p.\port(s_0-1)=v_p.\port(s_3)$, then from \ref{PortOnceBefore}$(s')$, for $s'\ls t$, we conclude that $v_p.\parent(s_3)\neq \NULL$. Hence the $s_3$-th step is the $\Roam$ action and $w.\port(s_3)$ direct upwards. Therefore $\Act(s_3+1)=\Rr{1}$ and $\Act(s_3+2)=\Down$, and in consequence, $w.\parent(s_3+2)\neq \NULL$. This completes the proof of \ref{PortOnceBefore}$(t+1)$.

CASE III. $\Act(t+1)=\Rr{1}$\\
Realize that $w=\Tok(t)$, so $w.\visited(t)=1$ and \ref{FirstClean}$(t+1)$ follows from \ref{FirstClean}$(t)$. Moreover, from \ref{AgUnderTok}$(t)$, the agent has to move upwards during $\Act(t+1)$, so \ref{RootDir}$(t+1)$, what entails also \ref{AgUnderTok}$(t+1)$.
$\Rr{1}$ does not change any port or parent and does not move the token, so it is independent of \ref{RootProg}$(t+1)$, \ref{ParOverTok}$(t+1)$, \ref{CleanOnce}$(t+1)$, \ref{EndUp}$(t+1)$ and \ref{OnePar}$(t+1)$. Moreover then \ref{PortOnceAfter}$(t+1)$ and \ref{PortOnceBefore}$(t+1)$ follows from \ref{PortOnceAfter}$(t)$ and \ref{PortOnceBefore}$(t)$ respectively.

CASE IV. $\Act(t+1)=\Down$\\
Realize that $\Act(t)=\Rr{1}$ was performed upwards (due to \ref{RootDir}$(t)$) and $v.\port(t-1)$ was not changed since the last visit. Therefore $w=\Ag(t+1)=\Ag(t-1)$ and hence $(t+1)$-st step is performed downwards, what shows \ref{RootDir}$(t+1)$. Moreover, it entails that $\Root.\port(t)=\Root.\port(t+1)$, so $\Down$ action is independent of \ref{RootProg}$(t+1)$.
$\Down$ action $\DROP{s}$ the token, so \ref{AgUnderTok}$(t+1)$ holds.
Since $\Ag(t+1)=\Ag(t-1)$, \ref{FirstClean}$(t+1)$ follows from \ref{FirstClean}$(t)$.
$\Down$ action $\MOVE{s}$ the token downwards (from \ref{RootDir}$(t+1)$), but establishes $w.\parent(t+1)\neq \NULL$, so \ref{ParOverTok}$(t)$ gives \ref{ParOverTok}$(t+1)$.
Since $\Ag(t+1)=\Ag(t-1)$, $\Act(t)=\Rr{1}$ and $\Act(t-1)\in\{\Roam, \Rr{0}\}$, so $w.\parent(t-1)=w.\parent(t)=\NULL$. Therefore \ref{PortOnceAfter}$(t+1)$ is fulfilled. This also shows that \ref{PortOnceBefore}$(t+1)$ and \ref{CleanOnce}$(t+1)$ are independent of $\Down$ action.
$\Down$ action establishes $w.\parent(t+1)=w.\port(t-1)$, but $\Act(t)=\Rr{1}$, so from \ref{RootDir}$(t)$, $w.\parent(t+1)$ directs towards $\Root$. If the agent during this step changes its state to $\Up$, then $w.\parent(t+1)=(w.\parent(t+1) \mod\;d_w) + 1$, so $w$ is a leaf, so whole $T_w$ is explored and \ref{EndUp}$(t+1)$ is fulfilled.

CASE V. $\Act(t+1)=\Up$\\
Realize that $\Up$ state can be obtained only during the execution of IfUp() method, only when $v.\parent(t)=v.\port(t)$. From \ref{FirstClean}$(s')$, for $s'\ls t$, every parent's pointer of vertices visited until the $t$-th moment was cleaned, so \ref{OnePar}$(t)$ shows that $v.\port(t)$ directs upwards, which proves \ref{RootDir}$(t+1)$.\\
If $w\neq \Root$, then \ref{RootProg}$(t+1)$ does not depend on $\Up$ move.
Otherwise, let $w = \Root$, then $\Root.\port(\cdot)$ is increased at the moment $t+1$, so \ref{RootProg}$(t+1)$ follows from \ref{RootProg}$(t)$ if $\Root.\port(t)\neq d_{\Root}$. Otherwise, $\Root.\port(t+1)=1=\roott$, so Algorithm \ref{alg:token} is $\Terminated$.\\
Since $\Up$ is performed towards $\Root$ during the $(t+1)$-st step, $\Tok(t)=\Ag(t)\neq \Root$ had to be in $T_{\Root,\port(\Root,t)}$. If $\Act(t+1)$ ends in $\Root$, then \ref{EndUp}(t) shows that $T_{\Root,\Root.\port(t)}$ is explored, and from \ref{RootProg}$(t)$, we get \ref{RootProg}$(t+1)$.\\
As $\Up$ action $\DROP{s}$ the token, \ref{AgUnderTok}$(t+1)$ is obviously true, what also implies \ref{PortOnceBefore}$(t+1)$. $\Act(t+1)$ is performed upwards (by \ref{RootDir}$(t+1)$), so $w.\visited(t)=1$ and \ref{FirstClean}$(t+1)$ follows from \ref{FirstClean}$(s')$, for $s'\ls t$. $\Up$ action does not change parent's pointer, so it is independent of \ref{ParOverTok}$(t+1)$, \ref{OnePar}$(t+1)$ and also \ref{CleanOnce}$(t+1)$.\\
Assume that $s$ is the first moment when $w.\parent(s)\neq \NULL$. Therefore $\Act(s)=\Down$, so $w.\port(s)=(w.\parent(s) \mod\; d(w)) + 1$. Since the $w.\port(\cdot)$ can be only incremented by $1$ modulo $d(w)$, then the first repetition of the port's pointer may be only $w.\port(t+1)=w.\port(s)$ after $\Act(t+1)=\Up$. Assume that the node $w$ was visited last time at some step $s_2$. Then $w.\port(s_2)=w.\parent(s_2)$, which cannot change, according to \ref{OnePar}$(s')$, for $s_2\ls s'\ls t+1$, so in order to get back to $w$, $\Act(t+1)$ has to be preformed downwards, contrary to \ref{RootDir}$(t+1)$. This proves \ref{PortOnceAfter}$(t+1)$.\\
If $\Act(t+1)=\Up$ ends up in the state $\Up$, then $w.\parent(t) = (w.\port(t) \mod\; d(w)) + 1$ and $T_{v}=T_{w,w.\port(t)}$ is explored by \ref{EndUp}$(t)$. Moreover, since $w.\parent(t)\neq\NULL$, there exists the moment $s$, such that $v$ was visited during $\Act(s)=\Down$, when $w.\parent(\cdot)$ was established, so by \ref{PortOnceAfter}$(s)$ for $s\ls s'\ls t+1$, every outport of $w$ was used once (because $w.\port(s)=(w.\parent(s) \mod\;d_w)+1$), so $w$ was visited only upwards in steps from $s+1$ to $t$. From \ref{RootDir}$(s')$ and \ref{OnePar}$(s')$ for $s\ls s'\ls t+1$, $w.\port(\cdot)$ was changed only during $\Up$ actions, what implies that every $T_{w,w.\port(s')}$ was explored by \ref{EndUp}$(s')$ for $s\ls s'\ls t+1$, so $T_v$ was explored as well, what shows \ref{EndUp}$(t+1)$.
\end{proof}

We will use properties $\SP(t)$ to show correctness and time complexity of our Algorithm~\ref{alg:token}.%
\begin{figure}[ht!]
\centering
\begin{tikzpicture}[->]
 \node [draw, rectangle, text width=2.7em, text height=0.9em, label={[label distance=0.2em,align=center]180:$\textit{Clean()}$\\$\port=1$\\$\roott=\texttt{True}$\\$\DROP{}$}] (Z) at (2,2) {\Large$\Init$};
 \node [draw, rectangle, text width=2.8em, text height=0.9em, label={[label distance=0.0em]270:$\textit{Clean()}$}] (A) at (4,0) {\Large$\Roam$};
 \node [draw, rectangle, text width=1.9em, text height=0.9em, label={below:\TAKE{}}] (B) at (6,-2) {\Large$\Rr{1}$};
  \node [draw, rectangle, text width=1.9em, text height=0.9em, label={[label position=1em, align=center]270:$\textit{Clean()}$\\$\textit{Progress()}$}] (C) at (2,-2) {\Large$\Rr{0}$};
	 \node [draw, rectangle, text width=2.8em, text height=0.9em, label={[label position=1em, align=center]270:$\DROP{}$\\$\textit{Progress()}$}] (D) at (10,-2) {\Large$\Down$};
	 \node [draw, rectangle, text width=1.4em, text height=0.9em, label={[align=center]:$\DROP{}$\\$\textit{Progress()}$}] (E) at (8,2) {\Large$\Up$};
	 \node [draw, double, rectangle, text width=5.6em, text height=0.9em, label={End}] (F) at (12,2) {\Large$\Terminated$};
 \path 	(Z) edge [left, bend right] node { } (A)
				(A) edge [bend left] node [sloped,below] {$\Tok=1$} (B)
	      (B) edge [below, bend right] node {with $\Tok$} (D)
				(D) edge [right, bend right] node [align=left] {$\parent=\port$\\ $\TAKE{}$ then $\MOVE{}$} (E)
				(E) edge [loop below] node [align=left] {$\parent=\port$\\ $\TAKE{}$ then $\MOVE{}$} (E)
				(E) edge [bend right] node [sloped,above] {$\parent\neq\port$} (A)
				(A) edge [bend right] node [sloped, below] {$\Tok=0$} (C)
				(C) edge [loop left] node {$\Tok=0$} (C)
				(C) edge [below, bend right] node {$\Tok=1$} (B);
  \path	(D) edge [bend right] node [sloped,below] {$\parent\neq\port$} (A);
  \path	(E) edge [above] node[align=center] {$\port= 1$\\$\roott = \texttt{True}$} (F);
\end{tikzpicture}
\caption{Illustration of actions and state transitions in Algorithm~\ref{alg:token}.}
\label{fig:tokenflow}
\end{figure}
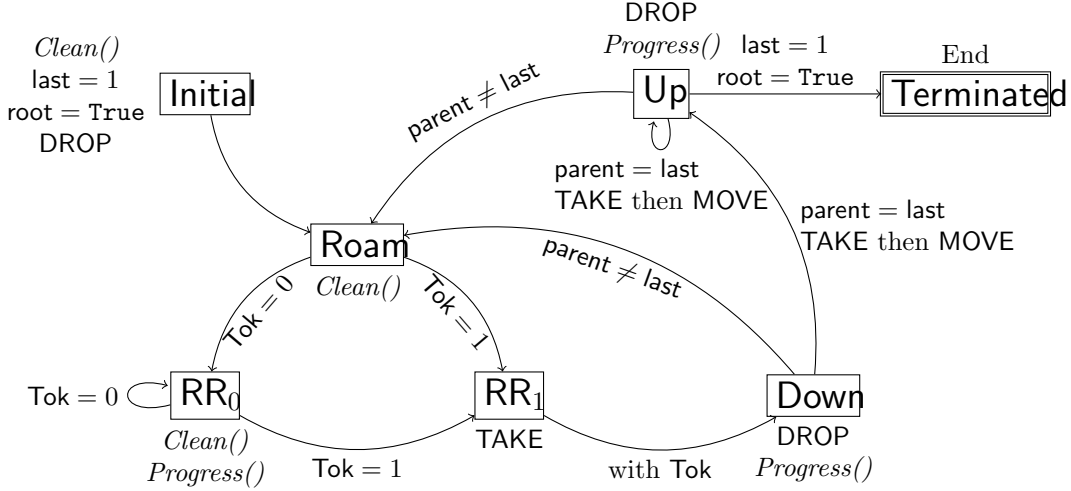%
\vspace*{-4mm}
\begin{theorem}
Algorithm \ref{alg:token} explores any tree and terminates at the starting node in at most $6(n-1)$ steps in \token model.
\end{theorem}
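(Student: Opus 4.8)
The plan is to derive the theorem from the invariant $\SP(t)$ together with a counting argument bounding how often the agent visits each node. First, by Lemma~\ref{lem:sp0} and Lemma~\ref{lem:spt} and an induction on $t$, the whole property set $\SP(t)$ holds at every moment $t\ge 0$. I would record the corollaries that are used repeatedly: by \ref{AgUnderTok} the agent can be at $\Root$ only when the token is at $\Root$, hence no $\Roam$ and no $\Rr{0}$ action is ever performed at $\Root$; and by \ref{CleanOnce}, once $v.\parent$ has become different from $\NULL$ no further $\Roam$ or $\Rr{0}$ action is performed at $v$.

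For correctness, the state becomes $\Terminated$ only in the $\Up$ branch and only when $v.\roott=\texttt{True}$; since $\roott$ is set to $\texttt{True}$ only by $\Init$ (executed once) and is cleared by \textit{Clean()}, the agent terminates only at $\Root$. If the terminating step happens at moment $t$, then $\Root.\port(t-1)=d_{\Root}$ (this is the value that, after \textit{Progress()}, wraps to $1$ and triggers termination), so \ref{RootProg} gives that every subtree $T_{\Root,x}$ with $x\le d_{\Root}$ is explored before moment $t$; together with the visit to $\Root$ at $\Init$ this means all $n$ nodes have been visited. It remains to show that the terminating step is reached at all, which will follow from the finiteness bound below, since the agent moves at every non-terminating step.

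For the running time I would prove that $v$ is visited at most $3d_v$ times for every node $v$; summing and using $\sum_v d_v=2(n-1)$ then gives at most $6(n-1)$ steps. Fix $v\neq\Root$ and split its timeline at the first moment its parent pointer becomes non-$\NULL$. In the first phase \ref{CleanOnce} forces every visit of $v$ to be a $\Roam$ or an $\Rr{0}$ action; every $\Rr{0}$ action performs one \textit{Progress()}, and by \ref{PortOnceBefore} the pointer $v.\port$ is never advanced to the same value twice in this phase, so there are at most $d_v-1$ such visits. A $\Roam$ action at $v$ can be produced only by a $\Down$, $\Up$ or $\Init$ action at the parent $p$ of $v$ that leaves the token at $p$ and moves the agent to $v$, i.e.\ only at a step at which $p.\port$ equals the port from $p$ to $v$; each such step advances $p.\port$, so by \ref{PortOnceBefore} (if $p=\Root$) or \ref{PortOnceAfter} (otherwise) applied at $p$ this happens at most once, giving at most one $\Roam$ at $v$ and at most $d_v$ visits in the first phase. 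In the second phase the token sits at $v$ or below it; by \ref{CleanOnce} the visits of $v$ are the single $\Down$ action opening the phase, together with the $\Rr{1}$ action ending each rotor-router excursion launched from $v$ and the $\Up$ action bringing the token back from each child. Since by \ref{PortOnceAfter} the pointer $v.\port$ runs once through the $d_v-1$ child-ports before wrapping to $v.\parent$ (at which point \textit{IfUp()} sends the token back up from $v$), each of the latter two kinds occurs at most $d_v-1$ times, so the second phase contributes at most $2d_v-1$ visits and $v$ is visited at most $3d_v-1<3d_v$ times. Since no $\Roam$ or $\Rr{0}$ ever happens at $\Root$, the visits of $\Root$ are one $\Init$, at most $d_{\Root}$ actions $\Rr{1}$, and at most $d_{\Root}$ actions $\Up$, i.e.\ at most $2d_{\Root}+1\le 3d_{\Root}$. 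This proves the bound, hence finiteness, which closes the correctness argument.

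The one genuinely delicate step is the analysis of the second phase: one must argue that while the token rests at $v$ the agent's behaviour is exactly ``repeat: run a rotor-router excursion from $v$ into the next child, push the token one level down, then receive it back'', so that the visits of $v$ are precisely the $\Down/\Rr{1}/\Up$ triples and the port cursor advances exactly once per child; here \ref{AgUnderTok}, \ref{EndUp} and \ref{PortOnceAfter} do the heavy lifting (an excursion launched while the token is at $v$ cannot escape $T_v$ without returning control to $v$, and once a child's subtree is explored the cursor never points back at it). Everything else is bookkeeping with \ref{CleanOnce}, \ref{PortOnceBefore} and \ref{PortOnceAfter}.
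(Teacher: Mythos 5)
Your proposal is correct in substance and rests on the same foundation as the paper --- the inductive invariant $\SP(t)$ from Lemmas~\ref{lem:sp0} and~\ref{lem:spt} --- but it organizes the counting differently. The paper charges steps to \emph{outports}: it shows each of the $2(n-1)$ directed edges is used at most $3$ times (at most two uses immediately after an incrementation of $\port$, by \ref{PortOnceBefore} and \ref{PortOnceAfter}, plus a case analysis of the extra uses following $\Rr{1}$ and $\Roam$), while you charge steps to \emph{nodes}, proving at most $3d_v$ visits per node via a two-phase split at the moment $v.\parent$ is first set. The two accountings are arithmetically equivalent and both lean on \ref{PortOnceBefore}/\ref{PortOnceAfter} plus the structural properties, so this is a legitimate alternative decomposition rather than a new idea; your termination argument (reading $\Root.\port(t-1)=d_{\Root}$ off the terminating $\Up$ and invoking \ref{RootProg}) is in fact slightly more direct than the paper's.

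Two soft spots should be tightened. First, some invariants are miscited: \ref{CleanOnce} does \emph{not} say that before $v.\parent$ is set every visit is $\Roam$ or $\Rr{0}$ (it is the converse); what you need is that the token cannot sit at $v\neq\Root$ while $v.\parent=\NULL$ (via \ref{ParOverTok}, since $\Path(\Tok(t))$ includes the token's node), which rules out $\Rr{1}$, $\Down$, $\Up$ in phase one. Likewise ``no $\Roam$ at $\Root$'' follows from \ref{RootDir} (a $\Roam$ move is downwards, any move into $\Root$ is upwards), not from \ref{AgUnderTok} alone, and ruling out termination at a node whose dirty memory has $\roott=\texttt{True}$ needs \ref{FirstClean} (an $\Up$ is never a first visit, so the flag was cleaned earlier). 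Second, the phase-two structure you flag as delicate is indeed the place where work remains: in particular ``the single $\Down$ opening the phase'' is an assertion, and its proof is of the same kind as your $\Roam$ bound --- one must apply \ref{PortOnceBefore}/\ref{PortOnceAfter} at the \emph{parent} $p$ of $v$ (a second $\Down$ at $v$ would require $p.\port$ to point at $v$ again after having been advanced by the intervening $\Up$ at $p$, and the relevant attainment lies in $p$'s parent-set era by \ref{ParOverTok}), while the bounds of $d_v-1$ on $\Rr{1}$ and $\Up$ visits come from the port values at successive $\Rr{1}$'s (respectively $\Up$'s) at $v$ being distinct child-ports by \ref{PortOnceAfter} and \ref{RootDir}; \ref{EndUp} is not really what carries this count. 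None of these gaps is fatal --- each is fillable with the invariants you already cite --- but as written the $3d_v$ bound is only sketched where the paper's per-outport argument is carried out in full.
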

\vspace*{-1mm}
\begin{proof}
Using the induction from lemmas \ref{lem:sp0} and \ref{lem:spt}, we get that Algorithm \ref{alg:token} satisfies $\SP(t)$ for each step $t$.
First of all, from \ref{FirstClean}, all visited vertices are cleaned upon the first visits.\\
Consider some vertex $v$ and its arbitrary outport $p$. If $v\neq \Root$, then from \ref{PortOnceBefore} and \ref{PortOnceAfter}, $p$ may be used two times instantly after the incrementation of $v.\port(\cdot)$ (once when $v.\parent(\cdot)=\NULL$ and once when $v.\parent(\cdot)\neq\NULL$). Realize that, when $v=\Root$, then from \ref{AgUnderTok} and the fact, that any vertex with the token cannot be cleaned by $\Rr{0}$ and $\Roam$ moves, we claim that $\Root.\parent(\cdot)$ will always be $\NULL$. By \ref{PortOnceBefore}, $p$ may be used once instantly after the incrementation of $\Root.\port(\cdot)$. Moreover, regardless of the choice of $p$, it may be used also after $\Roam$ or $\Rr{1}$ action without a change of $v.\port(\cdot)$.\\ 
\textsf{Case 1.} Assume that $\Act(t)=\Rr{1}$ is taken via port $p$. From \ref{RootDir}$(t)$ and \ref{AgUnderTok}$(t)$, $p$ directs upwards towards the token.
Then $\Act(t+1)=\Down$ is performed downwards (from \ref{RootDir}$(t+1)$) via the same edge as $p$ (but with the opposite direction) and changes the position of the token to $\Tok(t+1)=v$ and establishes $v.\parent(t+1)\leftarrow p$. By \ref{RootDir}, $p$ cannot be used during $\Roam$ action. Realize that if $p=v.\port(s-1)$ for some moment $s>t$ and $\Act(s)=\Rr{1}$, then $\Tok(s-1)$ is one step towards $\Root$ from $v$ (from \ref{RootDir}$(s)$). Since the token can be moved only during $\Up$ and $\Down$ actions, then there exists a moment $s'$ such that $t<s'<s$ and $\Act(s')=\Up$ changes the position of the token from $v$ to $\Tok(s)$. However then $\Tok(s).\port(s) \neq \Tok(s).\port(s-1)$ ($\Tok(s)$ is not a leaf) and by \ref{PortOnceAfter}, \ref{OnePar} and \ref{ParOverTok}, it cannot be changed back since $\parent(\Tok(s))\neq \NULL$.\\
\textsf{Case 2.} Assume that $\Act(t)=\Roam$ is taken from node $v=\Tok(t-1)$ via port $p$ (downwards, by \ref{RootDir}$(t)$). 
The first action after time $t$, which leads to the token, is always $\Rr{1}$. However, as showed before, after $\Rr{1}$ action there is an instant $\Down$ action, which moves the token via $p$ to some vertex $w$ and sets $w.\parent \neq \NULL$. Hence by \ref{CleanOnce}, $\Roam$ action cannot be performed via $p$ once again. By \ref{RootDir}, $p$ cannot be used during $\Rr{1}$ action.

Our considerations show that each outport can be used at most $3$ times. Since each tree of size $n$ has $2(n-1)$ outports, Algorithm \ref{alg:token} terminates after at most $6(n-1)$ moves.

It remains to show that all vertices are then explored. Realize that Algorithm \ref{alg:token} terminates in $t+1$-st moment only when the agent is in $\Tok(t)$, $\Tok(t).\port(t)=1$, $\Tok(t).\roott = \texttt{True}$ and the state is $\Up$ at the end of $t$-th moment. From \ref{FirstClean}, we know that the first visit in some $v$ sets the $\roott$ flag to $\texttt{True}$ in $\Root$ and to $\texttt{False}$ in all other nodes, and this flag does not change ($\Roam$ and $\Rr{0}$ do not clean the vertices with the token, so by \ref{AgUnderTok}, $\Root$ cannot be cleaned by these moves). Hence the termination entails $\Tok(t)=\Root$. Since $\Root.\parent(t)$ $=\NULL$, then from \ref{PortOnceBefore} we know that each port $p$ in $\Root$ will be set by $\Root.\port(\cdot)$ only once (since the first moment) and \ref{RootProg} means that each subtree $T_{\Root,p}$ for $p\in\{1,\ldots,d_{\Root}\}$ were explored before returning to $\Root$, hence the exploration $T_{\Root}$ was completed.
%
%
\end{proof}
\newcommand{\As}{\mathrm{As}}
\newcommand{\Vs}{\mathrm{Vs}}
\newcommand{\Ps}{\mathrm{Ps}}
\vspace*{-1mm}
\section{Path with dirty memory}
\label{sec:line}
In this section we analyze exploration of paths with one bit of memory at each node and one bit at the agent. We show that in this setting, in the \dirtymem model, the exploration of the path sometimes requires $\Omega(n^2)$ steps.

\vspace*{2mm}
\customparagraph{Notation.}
In the following lower bound on the line, 
we will focus on the actions of the algorithm performed in vertices with degree $2$.
In such vertices, there are four possible inputs to the algorithm $S = \{(0,0),(0,1),(1,0),(1,1)\}$, where in a pair $(a,v)$, $a$ denotes a bit on the agent and $v$ is a bit saved on the vertex. 
Let us denote the sets of agent and vertex states by $\As$ and $\Vs$ respectively. We also use elements from $\Ps=\{0,1\}$ to denote ports in vertices of degree $2$. Morevoer, a shorthand notation $'$ indicates the inverted value of a bit. For example, if $a=0$, then $a'=1$ and vice versa. 
The goal of this section is to prove:
\vspace*{-1mm}
\begin{theorem}
\label{ref:thm}
Any algorithm that can explore any path in the \dirtymem model with one bit at the agent and one bit at each node requires time $\Omega(n^2)$ to explore some worst-case path with $n$ nodes.
\end{theorem}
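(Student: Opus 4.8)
The plan is to treat the algorithm, restricted to internal (degree-$2$) nodes, as a fixed transition function $\delta\colon \As\times\Vs\to\As\times\Vs\times\Ps$ and to exhibit, for any $\delta$ coming from a correct exploration algorithm, a path together with a port-labelling, a dirty initialisation of the node bits, and a starting vertex on which the agent needs $\Omega(n^2)$ steps. The leverage the adversary has is that the agent never learns the incoming port, so \emph{independently at each internal node} it may decide which physical edge is port $0$ and which is port $1$; equivalently, the algorithm fixes once and for all a local rule that maps $(a,v)\in\As\times\Vs$ to an emitted port, but the adversary may relabel the two outcomes of this rule separately at every node, and it also fixes the two initial node-bit values and the start vertex. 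I would place the start at one endpoint of the path, so there is a single ``frontier''; write $r$ for the index of the rightmost vertex visited so far, and let $\pi$ denote the port-component of $\delta$.

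First I would squeeze structural information out of correctness. For example, $\pi$ cannot be constant: if it were, the adversary would orient every internal node so that this constant port points back toward the start, and then the agent --- forced off the start endpoint onto the second vertex, then sent straight back --- would loop between the first two vertices forever, never reaching the third, contradicting correctness. Pushing this further, I would classify $\delta$ by $\pi$ and by how the node bit and the agent bit are updated; a number of ``types'' of $\delta$ cannot come from a correct algorithm at all (e.g. rules that never alter the node bit freeze all interior memory to adversarial values and are defeated by a short direct argument), and only a few types survive.

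For each surviving type the construction is: orient all internal nodes uniformly, initialise all interior node bits to a single value, and allow $O(1)$ special ``reflector'' nodes next to the endpoints that are analysed by hand. I would then prove, by induction on the frontier position $r$, an invariant roughly of the form: (i) the agent first reaches the frontier vertex $r$ in a prescribed state; (ii) it thereupon runs a left-going cascade all the way down to the start endpoint --- a distance $\Omega(r)$ --- the uniform orientation having been chosen precisely so that the relevant $(a,v)$-configuration keeps emitting the ``towards-start'' port at every interior vertex; (iii) it bounces off the endpoint gadget and runs a right-going return back up to $r$, again $\Omega(r)$ steps; (iv) only now, with vertex $r$ carrying a different (previously rewritten) bit, does the local rule push the agent to $r+1$, re-establishing the hypothesis for $r+1$. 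Since there are $\Theta(n)$ frontier extensions and the $r$-th costs $\Omega(r)$ steps, the total time is $\sum_r \Omega(r) = \Omega(n^2)$.

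The hard part will be steps (ii)--(iv): showing that a \emph{uniform} interior layout really does force the full $\Omega(r)$ round trip for \emph{every} correct $\delta$. The delicate point is that the agent's single bit could a priori be used to ``notice'' that it is already past the old frontier and should press forward; the argument must show that, with all interior vertices identically oriented and --- via the maintained invariant --- behaving identically throughout an excursion, the agent cannot tell interior vertices apart, so one bit of agent memory cannot encode the distance to the frontier, while vertex $r$ is kept distinguishable so that the agent advances exactly there. Establishing that a single orientation bit per node can simultaneously sustain the ``leftward cascade'' and ``rightward return'' halves of the invariant at all interior vertices is a global consistency condition, and verifying it --- along with the degenerate sub-cases of $\delta$ and the behaviour at the two endpoint gadgets --- is where essentially all of the proof's work sits.
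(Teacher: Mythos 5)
Your proposal is a plan in the same general spirit as the paper (exhaust the finitely many transition functions $(a,v)\mapsto(A(a,v),V(a,v),P(a,v))$ and, for each survivor, exhibit an adversarial port-labelling and initialisation forcing quadratic time), but as written it has a genuine gap: the entire content of the theorem sits exactly in the steps you defer. Reducing the $\approx 8^4$ candidate rules to ``a few surviving types'' is not a short direct argument --- it is the bulk of the paper's proof, carried out through a chain of structural facts (the agent bit cannot flip at every step, the vertex bit \emph{must} flip at every visit, no three of the four states may emit the same port, etc.), each established by a hand-built looping gadget. More importantly, the invariant you propose for the survivors --- first visit to the frontier vertex $r$, then a straight leftward cascade to the start endpoint, a bounce, and a straight rightward return --- is simply not what all correct survivors do. The paper isolates two concrete algorithms (its Algorithms X and Y) that explore the path in a two-steps-forward, one-step-back zigzag which reverses direction at each newly discovered node; their trajectory sweeps the whole visited segment between consecutive discoveries rather than cascading to a fixed endpoint, and proving the $\Omega(n^2)$ bound for them requires a block-decomposition argument (LR, RL, LRL, RLR blocks) that your uniform-orientation invariant does not capture. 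Asserting that ``one bit of agent memory cannot encode the distance to the frontier'' is the conclusion you need, not a lemma you have.

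A second concrete problem is your choice to start the agent at an endpoint and analyse ``reflector'' behaviour there by hand. The transition at a degree-$1$ node is part of the unknown algorithm and is unconstrained by your degree-$2$ classification, so every surviving type would additionally need its endpoint behaviour analysed, and the algorithm can exploit the endpoint visit to reset its bit and the endpoint's bit in a way that breaks the claimed cascade. The paper sidesteps this entirely by placing the start in the middle of the path and lower-bounding the time to \emph{first reach} an endpoint, so degree-$1$ transitions never enter the argument. In short, the proposal is a reasonable outline of an adversary strategy, but the case analysis, the loop gadgets ruling out incorrect rules, and the correct description of the surviving algorithms' trajectories are all missing, and the specific invariant you would induct on is false for some of the algorithms that must be handled.
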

\vspace*{-2mm}
Assume, for a contradiction, that there exists algorithm $\mathcal{A}$, which explores any path in $o(n^2)$ steps. Without a loss of generality, we may assume that the adversary always sets the agent in the middle point of the path.
For algorithm $\mathcal{A}$, for any $s \in S$, by $A(s), V(s), P(s)$ we denote respectively, the returned agent state, vertex state and the chosen outport in each vertex with degree $2$.
Moreover, we denote $R_3(s):=(A(s),V(s),P(s))$ and $R_2(s):=(A(s),V(s))$.
We will show that $\mathcal{A}$ either falls into an infinite loop or performs no faster than Rotor-Router.

A proof of \autoref{ref:thm} is divided naturally in several parts.
In the first one, we provide several properties of potential algorithms $\mathcal{A}$ that can eventually explore the path in $o(n^2)$ time.
These properties are utilized, among others, in the second part, where we prove that the agent does not change its internal state every time (Lemma~\ref{fact:1}), however it has to change the state of the vertex in each step (Lemma~\ref{lem:vchange}).
Further, we check two algorithms (called X and Y), which surprisingly turn out to explore the path even slower than the Rotor-Router algorithm (Lemma~\ref{lem:twoAlgs}). The analysis of those two algorithms turns out to be quite challenging.
Next we show that if three states of $\mathcal{A}$ return the same outport, then either it falls into an infinite loop or explores the path in $\Omega(n^2)$ steps.
In the latter part we consider the rest of amenable algorithms, which can either be reduced to previous counterexamples or fall into an infinite loop.
\begin{fact}
\label{wn1}
 $(\forall\; a\in \As)(\exists\; v\in \Vs)\; A(a,v)=a'$.
\end{fact}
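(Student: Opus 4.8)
The plan is to argue by contradiction: suppose there is an agent state $a \in \As$ such that $A(a,v) = a$ for both $v \in \Vs$, i.e. the agent never leaves state $a$ once it is in state $a$, regardless of what it reads at a degree-$2$ node. I want to show that such an algorithm cannot explore all paths in $o(n^2)$ time, contradicting the standing assumption on $\mathcal{A}$.

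First I would observe that if the agent enters a long enough ``run'' of degree-$2$ vertices while in state $a$, then from that point on its internal state is frozen at $a$ for as long as it stays among degree-$2$ vertices; hence on that stretch the agent is effectively a finite automaton with a \emph{single} control state reacting only to the one bit stored at each vertex. The transition at such vertices is then governed entirely by $V(a,\cdot)$ and $P(a,\cdot)$, which are maps $\{0,1\}\to\{0,1\}$. I would do a small case analysis on $P(a,0)$ and $P(a,1)$: if $P(a,\cdot)$ is constant (say always port $1$, i.e. ``always go in the same direction''), the agent walks off one way and never turns around, so it fails to explore a path whose unexplored part lies on the other side — in particular it fails entirely, which is even stronger than not being $o(n^2)$. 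If $P(a,\cdot)$ depends genuinely on the vertex bit, then the agent's behaviour on the stretch of degree-$2$ vertices is exactly a rotor-router-like walk driven by the local bits (each visit flips which direction will be taken next, or keeps it, according to $V(a,\cdot)$), and the adversary can set the initial vertex bits on a long path so that the agent is forced to make $\Omega(\ell^2)$ moves on a stretch of length $\ell$ before escaping it — the same quadratic lower-bound phenomenon as for the rotor-router on a line. Taking $\ell = \Theta(n)$ gives the $\Omega(n^2)$ bound and the contradiction.

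The adversary construction needs care: I would place the agent (which starts in the middle, in state $s$) and force it, after at most a bounded prefix, into state $a$ — this is where I need that $\mathcal{A}$ actually reaches state $a$ at all. Here I can invoke the standing setup: we are in the middle of a proof (``Assume, for a contradiction, that there exists algorithm $\mathcal{A}$…''), and Fact~\ref{wn1} is one of ``several properties of potential algorithms $\mathcal{A}$''; so the cleanest route is to prove the contrapositive-style statement that any $\mathcal{A}$ violating Fact~\ref{wn1} is either stuck in an infinite loop or runs in $\Omega(n^2)$ steps, and in either case it does \emph{not} explore every path in $o(n^2)$ steps, contradicting the choice of $\mathcal{A}$. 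If the starting state $s$ itself is such a ``frozen'' state $a$, the argument is immediate on a path that is all degree-$2$ except the two endpoints; if not, I only need that some reachable configuration puts the agent in state $a$ at a degree-$2$ vertex with a long one-sided stretch ahead, and then I truncate/extend the adversarial path so that this happens.

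The main obstacle I expect is the rotor-router-style quadratic lower bound on the degree-$2$ stretch: with only one bit per vertex and one frozen agent state, the walk is not literally the rotor-router (the update rule $V(a,\cdot)$ could, e.g., leave the bit unchanged, making the walk ballistic and hence \emph{good}, not bad). So the case analysis must show that in every sub-case either (i) the agent is ballistic in one fixed direction and therefore fails to explore (misses the other half of the path entirely), or (ii) the agent genuinely oscillates in a way the adversary can slow to $\Omega(n^2)$ by a suitable initial bit-assignment, or (iii) the agent is eventually periodic on a fixed bounded window and loops forever. Ruling out a ``fast oscillation'' requires checking that whenever $P(a,\cdot)$ is non-constant, the induced dynamics on the bit string forces the classical line rotor-router behaviour (each vertex visited $\Theta$(distance-to-start) times). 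I would handle this by reusing the known $\Omega(n^2)$ analysis of the rotor-router / ``Eulerian'' walk on the line, after checking that the one-state-one-bit transition reduces to it up to relabelling of the bit and the port.
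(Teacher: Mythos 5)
Your skeleton is the same as the paper's: negate the statement to get a state $a$ with $A(a,v)=a$ for all $v$, observe that once the agent is in $a$ it stays frozen as long as it walks on degree-$2$ vertices, so until it first reaches an endpoint the walk is governed by the node bits alone, and conclude that the adversary can force $\Omega(n^2)$ steps or an infinite loop; the case where the start state is $a'$ is handled by forcing one transition into $a$. The paper, however, discharges the quantitative step in one stroke: a frozen-state agent is exactly an algorithm with memory at the nodes only, and the already-cited $\Omega(n^2)$ lower bound for path exploration in that model applies; and it forces the entry into $a$ concretely -- either $A(a',\cdot)\equiv a'$ (same frozen situation for $a'$), or there is $w$ with $A(a',w)=a$ and the adversary sets the starting vertex's initial bit to $w$, so the agent is frozen after its very first move.

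The genuine gap is the step you yourself flag as the main obstacle. Your pivotal claim that whenever $P(a,\cdot)$ is non-constant the frozen walk ``reduces to the rotor-router up to relabelling'' is false: this holds only when $V(a,\cdot)$ is the bit-flip. If $V(a,\cdot)$ is the identity or a constant map, the induced walk is not rotor-router, and the correct (and easier) conclusion is different: the adversary fixes the bits so that every node's exit port is eventually constant and orients two adjacent \emph{interior} nodes toward each other, trapping the agent forever away from the endpoints -- this matters because the hypothesis constrains only degree-$2$ transitions, so the agent's state could be reset at an endpoint, and the whole argument must bound the time to reach the first endpoint (the paper phrases it exactly this way). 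Likewise, ``ballistic, hence fails'' needs the remark that the adversary chooses the labelling: under some labellings a bit-preserving rule races through half the line, so you must exhibit the trapping labelling rather than claim every labelling is bad. Finally, ``some reachable configuration puts the agent in state $a$'' cannot be assumed; the adversary must force it, which is what the paper's choice of the starting vertex's bit accomplishes in one step. So either carry out the case analysis over $V(a,\cdot)$ explicitly (flip $\Rightarrow$ rotor-router $\Rightarrow$ $\Omega(n^2)$ to reach an endpoint; non-flip $\Rightarrow$ trap), or simply quote the known node-memory-only lower bound as the paper does, which covers all update rules at once.
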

\vspace*{-3mm}
\begin{proof}
Assume that $(\exists\; a \in \As)(\forall\; v\in \Vs)\; A(a,v)=a$.
If $a$ is the initial state of the agent, then the agent will never change its state before reaching an endpoint of the path.
Hence, the time to reach the endpoint cannot be faster than Rotor-Router algorithm, hence the adversary can initialize ports in such a way that the endpoint will be reached after $\Omega(n^2)$ steps.
If $a'$ is the starting state, then either the agent state never changes which reduces to the previous case or $A(a',w)=a$ for some $w \in \Vs$. Then, adversary sets $w$ as the initial state of the starting vertex and after the first step the state of the agent changes to $a$ and by the same argument as in the first case, the algorithm requires $\Omega(n^2)$ steps.
\end{proof}
\vspace*{-2mm}
\begin{fact}
\label{wn2}
$(\forall\; a \in \As)(\forall\;v \in \Vs)\;R_2(a,v) \neq (a,v)$.
\end{fact}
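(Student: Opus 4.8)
\textbf{Proof plan for Fact~\ref{wn2}.} The statement says that in every degree-$2$ vertex, the algorithm must change \emph{something} at that step: for no input pair $(a,v)$ can the algorithm return exactly $R_2(a,v)=(a,v)$ (leaving both the agent bit and the vertex bit untouched). The plan is a direct adversarial argument: suppose for contradiction that $R_2(a,v)=(a,v)$ for some $a\in\As$, $v\in\Vs$. I would then show that the adversary can force the agent into a configuration from which it repeats the same local step forever, or at least cannot finish in $o(n^2)$ time, contradicting the standing assumption on $\AAA$.

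First I would observe that if $R_2(a,v)=(a,v)$, then whenever the agent enters a degree-$2$ vertex whose stored bit is $v$ while the agent's bit is $a$, the agent leaves via the fixed port $P(a,v)$ without altering its own state or the vertex's state. The key point is that the agent's behavior in such vertices is then \emph{memoryless and deterministic in the same direction}: if $P(a,v)=1$, say, then on a path, entering any $v$-labelled degree-$2$ vertex in agent-state $a$ causes the agent to move ``rightward'' (toward the port labelled $1$) and arrive at the next vertex still in state $a$. So I would have the adversary label a long stretch of consecutive internal vertices all with bit $v$ and all with port labels arranged so that port $P(a,v)$ always points in the same direction along the path. Then, once the agent is at the left end of this stretch in state $a$, it marches monotonically across all of them, never changing its state, never changing any vertex bit, and in particular never turning around within the stretch.

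Next I need to handle where the agent starts and how it first enters such a stretch in state $a$. Here I would mimic the case analysis already used in the proof of Fact~\ref{wn1}: either $a$ is itself reachable as the agent's state at some internal vertex early in the execution (in which case the adversary plants the $v$-stretch just ahead of that point), or I argue via the starting state $s$ of the agent. If $s=a$, I place the stretch starting at the initial vertex; if $s\neq a$, I use the fact that the agent must eventually change its state away from $s$ (otherwise it is effectively running a fixed port sequence / Rotor-Router-like walk with a single agent state, which the adversary already defeats in $\Omega(n^2)$ time by the argument of Fact~\ref{wn1}), and I arrange for that transition to land the agent in state $a$ at a vertex just before the planted $v$-stretch. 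Once the agent is in state $a$ at the left edge of a $v$-stretch of length $\Theta(n)$ oriented all one way, it crosses that entire stretch making $\Theta(n)$ steps without ever exploring anything new beyond the stretch's far endpoint in a controlled way — and crucially, by reusing the same gadget on both halves of the path (reflected), the adversary can force the agent to oscillate or to take $\Theta(n)$ steps just to traverse each block, which repeated across $\Theta(n)$ blocks gives $\Omega(n^2)$ total, or in the cleanest sub-case forces a genuine infinite loop when the orientation of $P(a,v)$ is compatible with the adversary trapping the agent between two $v$-vertices.

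The main obstacle I anticipate is pinning down the last sub-case cleanly: showing that the fixed-direction march through a $v$-stretch really does translate into either nontermination or quadratic time, rather than the agent simply escaping the stretch in linear time and then behaving well elsewhere. The resolution is that ``elsewhere'' is under adversarial control too — the whole path can be tiled with such stretches (and their mirror images) so that the agent, each time it re-enters a fresh stretch in state $a$, is condemned to re-traverse $\Theta(n)$ vertices; since the agent bit never changes inside a stretch and the vertex bits are never touched, there is no mechanism by which the agent can ``learn'' to shortcut, so the $\Theta(n)$ cost per stretch is unavoidable and the $\Theta(n)$ stretches yield the bound. If instead the agent manages to avoid ever entering state $a$ at a degree-$2$ vertex after the first few steps, then it is running with effectively one fewer usable agent state in internal vertices, and one reduces to (a relativized version of) Fact~\ref{wn1}. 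Either horn contradicts $\AAA$ exploring in $o(n^2)$ steps, completing the proof.
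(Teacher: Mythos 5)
Your central route has a gap: the claim that the adversary can tile the path with $v$-stretches and thereby ``condemn'' the agent to $\Theta(n)$ steps per block over $\Theta(n)$ blocks is not justified. The hypothesis $R_2(a,v)=(a,v)$ constrains the algorithm on the single input $(a,v)$ only; the transitions on $(a,v')$, $(a',v)$ and $(a',v')$ remain arbitrary, and the moment the agent is in state $a'$ on the stretch, or re-enters it after having flipped some vertex bits elsewhere, the all-$v$/state-$a$ configuration your per-block cost relies on is destroyed. You give no argument that the agent repeatedly re-enters fresh, unmodified stretches in state $a$, so the $\Omega(n^2)$ accounting does not follow as stated; making it rigorous would require essentially the kind of exhaustive case analysis that the paper performs elsewhere for other families of algorithms, which is far more than this fact needs.

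The sub-case you mention only in passing --- trapping the agent between two $v$-vertices --- is in fact available unconditionally, because the port labels are the adversary's to choose, and it is exactly the paper's proof (Fig.~\ref{fig:fact2}). Take two adjacent internal vertices, both initialized to state $v$, and label the edge between them with port $p=P(a,v)$ at \emph{both} endpoints; once the agent stands on either of them in state $a$, every visit reproduces the input $(a,v)$, leaves both bits unchanged, and outputs port $p$, so the agent bounces between the two vertices forever and never explores the rest of the path. Getting the agent into state $a$ is handled just as you sketch for the starting state: if the initial agent state is $a$, start it on a trap vertex; if it is $a'$, Fact~\ref{wn1} provides $w$ with $A(a',w)=a$, and one prepends a single vertex in state $w$ whose port $P(a',w)$ leads into the trap. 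Thus your worry about the ``orientation of $P(a,v)$ being compatible'' is vacuous, the stretch/tiling machinery can be dropped entirely, and the conclusion is an outright infinite loop rather than merely a quadratic lower bound.
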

\vspace*{-2mm}
\begin{proof}
We will show that otherwise algorithm $\mathcal{A}$ falls into an infinite loop for some initial state of the path. Assume that $R_3(a,v) = (a,v,p)$ for some $a\in\As,\ v\in\Vs,\ p \in \Ps$. Then if $a$ is the starting agent state, the agent falls into an infinite loop on the left gadget from Fig.~\ref{fig:fact2} (double circle denotes the starting position of the agent and $a$ above the node indicates its initial state).
If $a'$ is the starting state, then by Fact~\ref{wn1}, $(\exists\; w\in \Vs)\;A(a',w)=a$. Let $q = P(a',w)$ and note that the agent falls into an infinite loop in the right gadget from Fig.~\ref{fig:fact2}.
\end{proof}
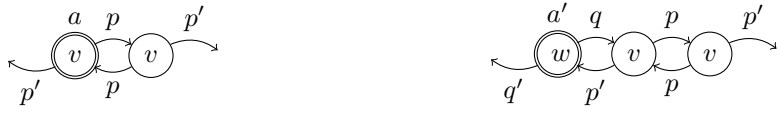
\begin{figure}[ht]
	\centering
	\begin{tikzpicture}[->]
		\node (X) at (0,0) { };
		\node [draw, double, circle, text width=.5em, text height=0.5em, label=$a$] (A) at (1,0) {$v$};
		\node [draw, circle, text width=.5em, text height=0.5em] (B) at (2,0) {$v$};
		\node (Y) at (3,0) { };
		\path 	(A) edge [above, bend left] node {$p$} (B)
		(B) edge [above, bend left] node {$p'$} (Y);
		\path	 (B) edge [below, bend left] node {$p$} (A)
		(A) edge [below, bend left] node {$p'$} (X);
	\end{tikzpicture}\hspace{3cm}
	\begin{tikzpicture}[->]
		\node (X) at (0,0) { };
		\node [draw, double, circle, text width=.5em, text height=0.5em, label=$a'$] (Z) at (1,0) {$w$};
		\node [draw, circle, text width=.5em, text height=0.5em] (A) at (2,0) {$v$};
		\node [draw, circle, text width=.5em, text height=0.5em] (B) at (3,0) {$v$};
		\node (Y) at (4,0) { };
		\path 	(Z) edge [above, bend left] node {$q$} (A)
		(A) edge [above, bend left] node {$p$} (B)
		(B) edge [above, bend left] node {$p'$} (Y);
		\path	 (B) edge [below, bend left] node {$p$} (A)
		(A) edge [below, bend left] node {$p'$} (Z)
		(Z) edge [below, bend left] node {$q'$} (X);
	\end{tikzpicture}
	\caption{Two looping gadgets for an algorithm with $R_2(a,v)=(a,v)$.}\label{fig:fact2}
\end{figure}
\vspace*{-3mm}

In the following series of facts we will prove some properties that $\mathcal{A}$ has to satisfy. 
\begin{lem}
\label{lem:vconst}
Fix any $a \in \As$ and $v\in \Vs$. Algorithm $\mathcal{A}$ does \textbf{not} satisfy the following condition:
\[
R_2(a,v)=(a',v) \land A(a',v) = A(a',v')~.
\]
\end{lem}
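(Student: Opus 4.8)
The statement to prove is that $\mathcal{A}$ cannot simultaneously satisfy $R_2(a,v) = (a',v)$ and $A(a',v) = A(a',v')$. The strategy is the same as in Facts~\ref{wn1}--\ref{wn2}: assume both conditions hold and construct a short path (a gadget of constantly many degree-$2$ vertices, plus the two endpoints) on which $\mathcal{A}$ either loops forever or is forced to behave like Rotor-Router, hence takes $\Omega(n^2)$ steps on the blown-up version — contradicting the assumed $o(n^2)$ running time.

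First I would unpack what the two hypotheses say about the agent's behaviour once it is in internal state $a$ at a degree-$2$ node carrying the vertex bit $v$. The first condition, $R_2(a,v) = (a',v)$, tells us that upon reading $(a,v)$ the agent flips its own bit to $a'$ and leaves the vertex bit unchanged at $v$; let $p := P(a,v)$ be the port it then takes. The second condition says that once the agent is in state $a'$, the vertex bit becomes irrelevant to \emph{which state it transitions to}: $A(a',v) = A(a',v') =: b$. The key observation to extract is that, entering a node in state $a$ with vertex value $v$, the agent exits in state $a'$ without having touched the vertex memory, and then — wherever it goes next — its subsequent internal state is the fixed value $b$, independent of what bit it finds there. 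I would split on whether $b = a$ or $b = a'$.

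Next I would build the looping gadget. Consider the case $b=a'$: then once the agent reaches state $a'$ it can never leave it via a degree-$2$ vertex (every transition out of $a'$ at a degree-$2$ node returns state $a'$), so this reduces to a situation handled by the argument of Fact~\ref{wn1} — $\mathcal{A}$ on state $a'$ behaves like a memoryless agent on the interior of the path, i.e.\ like Rotor-Router, and the adversary sets the initial port orientations to force $\Omega(n^2)$ steps before an endpoint is reached (and we arrange, using Fact~\ref{wn1} applied to $a$, that the agent enters state $a'$ within the first step). In the case $b = a$: then on one side the agent exits a node in state $a'$ (by hypothesis one), and on the very next node it comes back to state $a$, regardless of that node's bit; so on two consecutive degree-$2$ vertices the internal state oscillates $a \to a' \to a \to a' \to \cdots$ while (at the first of the two) the vertex bit is frozen at $v$. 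I would then exhibit a two- or three-node gadget, analogous to Figure~\ref{fig:fact2}, in which the frozen-bit node always sends the agent back to the same neighbour and the ports of that neighbour, being untouched in the relevant transition or cycling with period that lands back, recreate the entry configuration — producing an infinite loop. As in the proof of Fact~\ref{wn2}, if $a'$ rather than $a$ is the designated starting state, I would prepend one vertex with state $w$ (where $A(a',w)=a$, which exists by Fact~\ref{wn1}) to feed the agent into state $a$ and the same loop closes.

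The main obstacle I anticipate is the careful bookkeeping of the \emph{port} behaviour in the loop construction: the hypotheses only constrain the agent-bit transitions and one vertex-bit value, not the ports $P(a',v)$, $P(a',v')$, $P(a,v)$, which the adversary does not control but the algorithm fixes arbitrarily. So I will likely need to case-split further on whether $P(a',v) = P(a',v')$ and on the relation between these ports and $p = P(a,v)$, showing that in every sub-case either the vertex bit is never updated on the oscillating pair of nodes (so the same port is re-used and we loop) or, when a vertex bit does toggle, it toggles back within a bounded number of revisits, again yielding a finite gadget on which the agent cycles. This enumeration of port patterns — making sure that no choice of $P$-values lets the algorithm "escape" the gadget — is the delicate part; the state-transition side is essentially forced by the two hypotheses. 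I would organise it by first handling $b = a'$ (clean reduction to the Rotor-Router / Fact~\ref{wn1} argument), then $b=a$ with $P(a',v)=P(a',v')$, then $b=a$ with $P(a',v)\neq P(a',v')$, drawing the corresponding small gadget in each case in the style of Figure~\ref{fig:fact2}.
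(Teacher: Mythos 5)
Your overall strategy --- assume both conditions hold and trap the agent in a constant-size gadget, funnelling it into the right starting configuration as in the proof of Fact~\ref{wn2} --- is the paper's strategy, and your case $b=a'$ is handled adequately (in fact it is vacuous: Fact~\ref{wn1} applied to the agent state $a'$ gives some $u$ with $A(a',u)=a$, and since $A(a',v)=A(a',v')$ this common value must already be $a$; your re-run of the Fact~\ref{wn1} argument proves the same thing the long way). The genuine gap is in the only real case, $b=a$: the looping gadget, which is the entire content of the lemma, is never actually constructed. You defer it to an enumeration over the ports $P(a',v),P(a',v')$ and over how the middle vertex's bit evolves, and the dichotomy you assert there (``the bit is never updated, or it toggles back within a bounded number of revisits'') is neither proved nor quite the right mechanism --- for instance if $V(a',v)=V(a',v')=v'$ the middle bit settles at $v'$ and never toggles back, yet a loop must still be exhibited. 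Your two-node oscillation picture alone is insufficient precisely because, when the middle bit changes and $P(a',v)\neq P(a',v')$, the agent can exit on the far side of the middle vertex; so as written the proof is incomplete at its core step.

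The idea you are missing, which eliminates all port bookkeeping, is symmetrization: place the ``frozen'' vertex on \emph{both} sides of the middle one. Take an interior three-vertex segment $Z$--$A$--$B$ of the path (all of degree $2$), with $Z$ and $B$ in state $v$ and each having its port $p=P(a,v)$ on the edge towards $A$, and let the agent reach $Z$ in state $a$ (directly, or after one funnelling step as in Fact~\ref{wn2} if the designated start state is $a'$, using $A(a',\cdot)=a$). By $R_2(a,v)=(a',v)$ the outer vertices are only ever read as $(a,v)$, so they never change state and always return the agent to $A$ in state $a'$; at $A$ the agent is only ever in state $a'$, so by $A(a',v)=A(a',v')=a$ it leaves in state $a$, and whichever of its two ports it takes and whatever it writes, it lands on one of the two outer vertices, both in state $v$. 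The configuration repeats forever, with no case distinction on $P(a',\cdot)$ or $V(a',\cdot)$ at all. This is exactly the paper's second gadget (its first, two-vertex gadget is merely a shortcut for the subcase $V(a',v)=v$); your planned sub-case enumeration could be pushed through, but each escape scenario you would encounter is resolved by precisely this three-vertex construction, so the clean route is to state it once and be done.
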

\begin{proof}
Assume that the formula holds for some $a \in \As$ and $v\in \Vs$ and we will show that in such a case $\mathcal{A}$ would fall into an infinite loop for some initializations of the path. 
From Fact \ref{wn1}, we obtain, that $A(a',v) = A(a',v') = a$. Denote $P(a,v)=p$ and $V(a',v)= y$.

If $y=v$, then let $P(a',v)=p'$, and the following initial state of the path results in an infinite loop: 
\begin{center}
\begin{tikzpicture}[->]
 \node (X) at (0,0) { };
 \node [draw, double, circle, text width=.5em, text height=0.5em, label=$a$] (A) at (1,0) {$v$};
 \node [draw, circle, text width=.5em, text height=0.5em] (B) at (2,0) {$v$};
  \node (Y) at (3,0) { };
 \path 	(A) edge [above, bend left] node {$p$} (B)
	  (B) edge [above, bend left] node {$p$} (Y);
  \path	 (B) edge [below, bend left] node {$p'$} (A)
	  (A) edge [below, bend left] node {$p'$} (X);
\end{tikzpicture}
\end{center}

In the opposite case, where $y = v'$, the agent would fall into an infinite loop on the following gadget:
\begin{center}
\begin{tikzpicture}[->]
 \node (X) at (0,0) { };
 \node [draw, double, circle, text width=.5em, text height=0.5em, label=$a$] (Z) at (1,0) {$v$};
 \node [draw, circle, text width=.5em, text height=0.5em] (A) at (2,0) {$v'$};
 \node [draw, circle, text width=.5em, text height=0.5em] (B) at (3,0) {$v$};
  \node (Y) at (4,0) { };
 \path 	(Z) edge [above, bend left] node {$p$} (A)
		(A) edge [above, bend left] node {$p'$} (B)
	  (B) edge [above, bend left] node {$p'$} (Y);
  \path	 (B) edge [below, bend left] node {$p$} (A)
	  (A) edge [below, bend left] node {$p$} (Z)
		(Z) edge [below, bend left] node {$p'$} (X);
\end{tikzpicture}
\end{center}

Indeed, $R_3(a,v) = (a',v,p)$, hence in the first step, the agent moves to the middle vertex, changes its state to $a'$, and keeps $v$ as the state of the first vertex.
At the middle vertex when the agent state is $a'$, then it always changes its state to $a$ and moves to one of the outer vertices (independent on the vertex state of the middle node). But since the outer vertices are both in state $v$, the agent returns to the middle vertex, and so on, which is an infinite loop.
\end{proof}

The following fact characterises features of an algorithm, in which some state $a$ of the agent is flipped for both states of a vertex.
\begin{fact}
\label{wn3}
Fix any $a\in\As$. 
If $(\forall\; v\in \Vs)\; A(a,v)=a'$, then:
\begin{enumerate}[label=\roman*.,ref=\roman*.]
\item\label{fct:31} $(\forall\; w\in \Vs)\; R_2(a',w)\neq(a,w)$,
\item\label{fct:32} $(\forall\; w\in \Vs)\; V(a',w)=w'$,
\item\label{fct:33} $(\exists\; w\in \Vs)\; R_2(a',w)=(a,w')$.
\end{enumerate}
\end{fact}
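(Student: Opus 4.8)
\textbf{Proof proposal for Fact~\ref{wn3}.}

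The plan is to prove the three items in order, using the hypothesis that $A(a,v)=a'$ for both $v\in\Vs$, together with Facts~\ref{wn1} and~\ref{wn2}, Lemma~\ref{lem:vconst}, and the same ``looping gadget'' technique used in the preceding proofs: assume the negation, exhibit a small path together with an adversarial initialization of the node-bits on which the agent revisits a bounded set of configurations forever, and conclude that $\mathcal{A}$ cannot be a correct exploration algorithm. Throughout, when the agent is known to flip $a$ unconditionally, its behaviour from $a$ is essentially that of a ``token'' that always leaves through $P(a,\cdot)$ and flips; the only freedom left is how $V(a,\cdot)$ acts and how the returning state $a'$ behaves.

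For item~\ref{fct:31}, suppose $R_2(a',w)=(a,w)$ for some $w$, so that $(a',w)$ maps to $(a,w,q)$ for $q=P(a',w)$. Combined with $R_2(a,w')=(a',\cdot)$ coming from the hypothesis, I would build a two- or three-node gadget on which the agent, arriving in state $a'$ at a node carrying bit $w$, moves off in state $a$ through port $q$ (leaving $w$ unchanged), then from state $a$ at the neighbour it flips back to $a'$ and returns, landing again on a node still in state $w$ — a period-two (or period-three) cycle that never reaches an endpoint. The crux is choosing the ports and the bits on the two endpoints of the gadget so that every return feeds back into the state $(a',w)$; this is exactly analogous to the right-hand gadget of Figure~\ref{fig:fact2}, and the initial agent state is handled via Fact~\ref{wn1} (if $a$ is not the start state, use the $w_0$ with $A(a',w_0)=a$ to enter the $a$-regime in one step).

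For item~\ref{fct:32}, suppose $V(a',w)=w$ for some $w$, i.e. the agent arriving in state $a'$ at a $w$-node leaves the node's bit equal to $w$; by item~\ref{fct:31} (already proved) it must then leave in state $a'$ — wait, more carefully: by Fact~\ref{wn2}, $R_2(a',w)\ne(a',w)$, and by item~\ref{fct:31}, $R_2(a',w)\ne(a,w)$, so the only way to have $V(a',w)=w$ is impossible unless the agent state also changes, i.e. $R_2(a',w)=(a,w)$, which item~\ref{fct:31} forbids. Hence $V(a',w)=w'$ for every $w$. This makes item~\ref{fct:32} a short deduction from item~\ref{fct:31} and Fact~\ref{wn2}, rather than a new gadget argument.

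For item~\ref{fct:33}, I must show that at least one of the two transitions from state $a'$ flips the agent bit back to $a$ (so the agent is not trapped in $a'$ forever); since item~\ref{fct:32} already forces $V(a',w)=w'$ for both $w$, the claim is that for some $w$ the pair $(a',w)$ maps to $(a,w')$. Suppose not: then $A(a',w)=a'$ for both $w$, so once the agent enters state $a'$ it remains in $a'$ forever; I would then invoke Lemma~\ref{lem:vconst} or directly the ``stuck-in-one-agent-state'' argument from Fact~\ref{wn1} applied to $a'$ — in state $a'$ the agent never changes its internal bit, so on a suitably port-initialized path it behaves no better than Rotor-Router and needs $\Omega(n^2)$ steps (or loops), contradicting the $o(n^2)$ assumption. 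Again the entry into the $a'$-regime is guaranteed either because $a'$ is the start state or, via Fact~\ref{wn1}, after one step from $a$. The main obstacle I anticipate is item~\ref{fct:31}: getting the gadget's endpoints and port labels to close the loop requires care, because the agent in state $a$ may flip the visited node's bit, so one must track whether a returning visit still sees $w$ rather than $w'$ — this likely forces the three-node gadget (with the middle node carrying $w$ and being re-entered with the same bit) rather than a two-node one, exactly the pattern already seen in Lemma~\ref{lem:vconst}.
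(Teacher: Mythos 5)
Your deductions of items \ref{fct:32} and \ref{fct:33} are correct and essentially the paper's own: \ref{fct:32} follows from Fact~\ref{wn2} together with \ref{fct:31}, and for \ref{fct:33} you do not even need the contradiction/Rotor-Router detour --- Fact~\ref{wn1} applied to $a'$ directly yields a $w$ with $A(a',w)=a$, which combined with \ref{fct:32} gives $R_2(a',w)=(a,w')$.

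The gap is in item \ref{fct:31}. You propose to refute $R_2(a',w)=(a,w)$ by building a fresh looping gadget, and you yourself flag the crux --- closing the loop when the state-$a$ visit may flip the neighbour's bit, possibly forcing a three-node gadget --- as an unresolved obstacle; as written the construction is not carried out, so the proof of \ref{fct:31} (on which \ref{fct:32} and \ref{fct:33} rest) is incomplete. Moreover, no new gadget is needed: the hypothesis $(\forall\, v)\;A(a,v)=a'$ gives $A(a,w)=A(a,w')$, so the assumption $R_2(a',w)=(a,w)$ is exactly the forbidden configuration $R_2(a',w)=((a')',w)\,\land\,A((a')',w)=A((a')',w')$ of Lemma~\ref{lem:vconst} applied with $a'$ in place of $a$ and $w$ in place of $v$; this one-line application is the paper's entire proof of \ref{fct:31}. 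You even remark that your intended construction follows ``exactly the pattern already seen in Lemma~\ref{lem:vconst}'', but you treat that lemma as an analogy to imitate rather than a statement to invoke --- the two-node versus three-node case split you anticipate is precisely the case distinction ($V(a,w)=w$ versus $V(a,w)=w'$) already handled inside the proof of that lemma, so citing it removes the difficulty entirely.
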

\begin{proof}
For each $w\in\Vs$, Claim~\ref{fct:31} follows from Lemma~\ref{lem:vconst} for $a', w$.
Claim~\ref{fct:32} follows from Fact~\ref{wn2} and Claim~\ref{fct:31}.
Claim~\ref{fct:33} follows from Fact~\ref{wn1} and Claims~\ref{fct:31},~\ref{fct:32}.
\end{proof}
\begin{fact}
\label{fac:CASE4}
For any $v \in \Vs$, $R_2(a,v) \neq (a',v)$ or $R_2(a',v) \neq (a,v)$. 
\end{fact}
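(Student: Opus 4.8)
The plan is to argue by contradiction, exactly in the style of the preceding facts: suppose there is some $v \in \Vs$ with both $R_2(a,v) = (a',v)$ and $R_2(a',v) = (a,v)$, and exhibit an initialization of the path on which $\mathcal{A}$ cycles forever (or, failing that, cannot outrun Rotor-Router). First I would record what these two equalities say about the transition function: the pair (agent state, vertex state) oscillates deterministically between $(a,v)$ and $(a',v)$ each time the agent visits such a vertex, and crucially the vertex bit $v$ is \emph{never} changed in either case. So on any stretch of the path whose node-bits are all set to $v$, the agent walks back and forth changing only its own internal bit, while every node it sits on keeps returning the bit $v$.

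Next I would extract the port behaviour. Write $P(a,v) = p$ and $P(a',v) = q$; these are the ports the agent leaves by when on a $v$-node in agent-state $a$ or $a'$ respectively. There are two cases depending on whether $p = q$ or $p = q'$. If $p = q$, the agent on a $v$-node always exits via the same port regardless of its own state, so on a segment of $v$-nodes it marches monotonically in one direction; this is benign by itself, so I expect this subcase to need combining with Lemma~\ref{lem:vconst} or Fact~\ref{wn2}: in fact $R_2(a,v)=(a',v)$ together with $A(a',v) = A(a',v)$ being forced (via Fact~\ref{wn1}, $A(a',v)=a$) puts us close to the forbidden configuration of Lemma~\ref{lem:vconst}, and I would check whether the missing hypothesis $A(a',v) = A(a',v')$ can be arranged or bypassed — if $A(a',v') = a$ as well we are done immediately by Lemma~\ref{lem:vconst}, and otherwise $A(a',v') = a'$, which I would exploit with a short three-node gadget. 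If instead $p = q'$, then a $v$-node behaves like a ``bounce'': the agent comes in, flips its internal bit, and leaves by the \emph{opposite} port to the one dictated by its new state — so on two adjacent $v$-nodes with compatibly chosen port labels the agent enters one, bounces back to the other, bounces back again, and is trapped. I would draw the corresponding two-node (or, if the starting state is $a'$ rather than $a$, three-node, prepending a node handled via Fact~\ref{wn1} as in the proof of Fact~\ref{wn2}) \texttt{tikzpicture} gadget with the adversarial port labels written on the edges and the initial node-bits all equal to $v$, and verify that the agent's trajectory on it is periodic and never escapes.

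The bookkeeping I expect to be mildly delicate is the dependence on which of $a,a'$ is the agent's \emph{starting} state: as in Facts~\ref{wn2} and the lemmas, if the adversary cannot place the agent directly into the looping configuration because the start state is the ``wrong'' one, I would use Fact~\ref{wn1} to find $w$ with $A(a',w) = a$ (or the symmetric statement), prepend one node initialized to $w$, and let the first step drive the agent into state $a$ on a $v$-node, after which the gadget above does its work. The main obstacle is therefore not any single case but making sure the case split on $p$ versus $q$ and on the start state is exhaustive and that in every branch either a concrete finite looping gadget exists or we land exactly in the hypothesis of Lemma~\ref{lem:vconst} / Fact~\ref{wn2} / Fact~\ref{wn1}; once the branches are laid out, each is a two- or three-node picture whose verification is a one-line trace of the agent. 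I would close by noting that in all branches $\mathcal{A}$ either loops forever on a finite path or is reduced to a previously-handled $\Omega(n^2)$ case, contradicting the assumed $o(n^2)$ running time.
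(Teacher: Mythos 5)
Your core idea---force a contradiction by exhibiting a two-node gadget whose node bits are both $v$, note that under the assumed transitions the vertex bit is never flipped while the agent bit merely alternates, and conclude that the agent oscillates between the two nodes forever---is exactly the paper's proof. The paper, however, needs no case distinction on the ports: writing $p=P(a,v)$ and $q=P(a',v)$, it labels the connecting edge $p$ at one endpoint and $q$ at the other (both nodes in state $v$, the outer edges getting $p'$ and $q'$), and lets the adversary place the agent at whichever endpoint matches its initial bit; the walk then alternates between the two nodes via ports $p$ and $q$ forever, uniformly in whether $p=q$ or $p=q'$, because neither transition ever changes a vertex state.

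Where your write-up falls short of a complete proof is the branch $p=q$. The claim that this case is ``benign'' because the agent ``marches monotonically in one direction'' is mistaken: the adversary assigns port labels node by node, so if every visit to a $v$-node exits via the same bit $p$ irrespective of the agent state, then two adjacent nodes in state $v$ whose shared edge carries label $p$ at both endpoints already trap the agent forever---this is the easiest case, not one requiring Lemma~\ref{lem:vconst}. As a consequence your treatment of this branch is incomplete: the subcase $A(a',v')=a$ is correctly discharged by Lemma~\ref{lem:vconst} (since $A(a',v)=a$ is forced by the hypothesis), but for $A(a',v')=a'$ you only promise ``a short three-node gadget'' without constructing or verifying one, so that subcase is an unproven assertion. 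The gap is easy to fill (use the two-node gadget just described, or simply the paper's single uniform gadget, which removes the whole case split), but as written the case analysis is not closed.
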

\begin{proof}
Let $P(a,v)=p$ and $P(a'v)=q$.
Observe that if $(\exists\; v\in\Vs)\; R_3(a,v) = (a',v,p)\ \land\ R_3(a',v) = (a,v,q)$, then the agent would fall into an infinite loop on the following gadget:
\begin{center}
\begin{tikzpicture}[->]
 \node (X) at (0,0) { };
 \node [draw, circle, text width=.5em, text height=0.5em] (A) at (1,0) {$v$};
 \node [draw, circle, text width=.5em, text height=0.5em] (B) at (2,0) {$v$};
  \node (Y) at (3,0) { };
 \path (A) edge [above, bend left] node {$p$} (B)
	  (B) edge [above, bend left] node {$q'$} (Y);
  \path  (B) edge [below, bend left] node {$q$} (A)
	   (A) edge [below, bend left] node {$p'$} (X);
\end{tikzpicture}
\end{center}
Indeed, if the agent's starting state is $a$ (respectively $a'$), the adversary chooses its starting position as the right (respectively left) vertex and the agent never exits from the gadget.
\end{proof}
\begin{lem}
\label{lem:ports}
Let $a\in \As$. If $(\forall\; v\in\Vs)\; P(a,v)=p$, then 
$
(\exists\; w\in \Vs)\; A(a',w)=a' \land P(a',w)=p'~.
$
\end{lem}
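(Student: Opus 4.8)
The plan is to assume the conclusion fails and then exhibit, for every possible starting state of the agent, an initialization of some path on which $\mathcal{A}$ runs forever through a bounded set of configurations, hence never explores that path --- contradicting the standing assumption that $\mathcal{A}$ explores every path. So suppose that for both $w\in\Vs$ we have $A(a',w)=a$ or $P(a',w)=p$, and recall the hypothesis $P(a,v)=p$ for all $v\in\Vs$.

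First I would record two structural observations. (1) The agent never takes port $p'$ at two consecutive degree-$2$ vertices: to take $p'$ the agent must be in state $a'$ (in state $a$ it takes $p$ by hypothesis), and then the failed conclusion forces $A(a',\cdot)=a$ at that vertex state, so the next degree-$2$ step is made in state $a$ and uses port $p$. (2) If the agent ever enters state $a$ and never leaves it, then it always uses port $p$; placing it in the middle of a long path all of whose internal ports labelled $p$ point left, it walks to the left endpoint in $n/2$ steps, bounces back to the neighbour $u_0$, and then loops forever between $u_0$ and the endpoint --- a contradiction. Hence I may from now on assume the agent reaches state $a'$; this also disposes of a start state $a$, since either it stays in $a$ forever (handled by (2)) or it eventually flips, and then long tails keep the relevant vertices of degree $2$.

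Next I would split on the behaviour of state $a$. If $A(a,v)=a'$ for both $v\in\Vs$, I invoke Fact~\ref{wn3}: then $V(a',w)=w'$ for both $w$ and $A(a',w_0)=a$ for some $w_0$. Depending on whether $A(a',w_0')$ equals $a$ (the agent flips state every step) or $a'$ (in which case the failed conclusion gives $P(a',w_0')=p$, so at vertex state $w_0'$ the agent keeps state $a'$, flips the vertex bit, and exits via $p$), I build a gadget of two or three interior vertices (with long tails so all of them have degree $2$ and the agent starts in the middle), choosing the adversarial port labels and initial vertex bits so that the finite configuration --- agent state, the two or three vertex bits, position --- is periodic and the agent never leaves the gadget. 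If instead $A(a,v_0)=a$ for some $v_0$, then Fact~\ref{wn2} gives $R_3(a,v_0)=(a,v_0',p)$, and I use the failed conclusion together with Lemma~\ref{lem:vconst} and Fact~\ref{fac:CASE4} to pin down $A(a',\cdot)$ and $V(a',\cdot)$ enough to again trap the agent between two interior vertices: a vertex that cycles under state $a$ on one side, and a vertex handling state $a'$ on the other.

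The routine part is verifying each individual gadget; the real work --- and the main obstacle --- is the bookkeeping, since there are several sub-cases for the unknown values $A(a',0),A(a',1),V(a',0),V(a',1),V(a,0),V(a,1)$, and for each one must choose the port assignment and the initial bits so that the finite configuration graph is a cycle through the starting configuration, and also check that the reduction from start state $a$ to start state $a'$ does not by itself already explore the whole path. Once every branch produces a non-exploring execution, an algorithm $\mathcal{A}$ exploring every path cannot exist, so the required $w$ with $A(a',w)=a'$ and $P(a',w)=p'$ must exist.
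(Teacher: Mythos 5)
Your first structural observation --- that under the negated conclusion the agent can never take port $p'$ at two consecutive degree-$2$ vertices (taking $p'$ requires state $a'$, the negation then forces the new agent state to be $a$, and in state $a$ the agent takes $p$) --- is exactly the key step of the paper's proof, and it already suffices to finish: the paper places the agent inside a path segment with rightward port labels $p',p,p,p',p'$ (and the mirror labels leftward), so that every entry to either outer vertex of the segment is via a $p'$-edge, hence arrives in state $a$, hence is immediately followed by a $p$-move back inside; escaping the segment in either direction would require two consecutive $p'$-moves, so the agent is trapped for \emph{any} initial agent state and \emph{any} vertex bits (the gadget's vertex states are wildcards), contradicting exploration. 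Note that no information about $V(\cdot,\cdot)$ or about $A(a,\cdot)$ is needed at all.

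This is where your proposal has a genuine gap: instead of exploiting observation (1) directly, you branch into a case analysis over the unknown values $A(a,\cdot)$, $A(a',\cdot)$, $V(a,\cdot)$, $V(a',\cdot)$ (via Fact~\ref{wn3}, Fact~\ref{wn2}, Lemma~\ref{lem:vconst}, Fact~\ref{fac:CASE4}) and promise, for each sub-case, a two- or three-vertex gadget whose configuration graph is a cycle --- but none of these gadgets is exhibited or verified, and you yourself identify this bookkeeping as ``the real work''. As written the argument is a plan, not a proof. Moreover, two of the auxiliary steps are shaky: in observation (2) and in the reduction of a start in state $a$ to a start in state $a'$, you reason about the agent bouncing at a path endpoint, but the transition at a degree-$1$ vertex is not governed by the degree-$2$ table $(A,V,P)$, so ``it walks to the endpoint and then loops forever'' and ``either it stays in $a$ forever or it eventually flips, and then the trap applies'' require the adversary's single initialization to simultaneously control where the flip to $a'$ happens and to realize the trapping configuration there --- an interdependence your sketch does not resolve. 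The fix is simply to drop the case analysis: the uniform gadget above makes the vertex-state bookkeeping, the start-state reduction, and the endpoint discussion unnecessary.
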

\begin{proof}
First observe that $(\exists\; w\in \Vs) P(a',w)=p'$, since otherwise the algorithm always takes the same port $p$ in every node with degree $2$ and thence can be trivially looped.

Observe that, if $(\forall\;w\in \Vs)\;P(a',w)=p' \Rightarrow A(a',w)=a$, then the agent always changes its state to $a$ whenever we takes port $p'$ and from the assumption, it always takes port $p$ in state $a$. Consider the following gadget (where $\ast$ denotes any vertex state, possibly different at each vertex):

\begin{center}
\begin{tikzpicture}[->]
 \node (X) at (0,0) { };
 \node [draw, circle, text width=.5em, text height=0.5em] (A) at (1,0) {$*$};
 \node [draw, circle, text width=.5em, text height=0.5em] (B) at (2,0) {$\ast$};
 \node [draw, circle, text width=.5em, text height=0.5em] (C) at (3,0) {$\ast$};
  \node [draw, circle, text width=.5em, text height=0.5em] (D) at (4,0) {$\ast$};
  \node (Y) at (5,0) { };
 \path 	(X) edge [above, bend left] node {$p'$} (A)
		(A) edge [above, bend left] node {$p$} (B)
	        (B) edge [above, bend left] node {$p$} (C)
	  	(C) edge [above, bend left] node {$p'$} (D)
		(D) edge [above, bend left] node {$p'$} (Y);
  \path	 (Y) edge [below, bend left] node {$p'$} (D)
	         (D) edge [below, bend left] node {$p$} (C)
		(C) edge [below, bend left] node {$p$} (B)
		(B) edge [below, bend left] node {$p'$} (A)
		(A) edge [below, bend left] node {$p'$} (X);
\end{tikzpicture}
\end{center}
After entering to any of the outer vertices, the agent is certainly in state $a$, because it can only enter it via port $p'$.
Hence in the next step it takes port $p$. Thus it cannot exit from such a gadget. We obtain a contradiction, so the implication cannot be true, what completes the proof of the lemma.
\end{proof}
The following fact is a direct conclusion of Lemma~\ref{lem:ports}:
\begin{fact}
\label{wn5}
Let $a\in\As$. If $(\forall\; v\in\Vs)\; A(a',v)=a$, then $P(a,v)\neq P(a,v')$.
\end{fact}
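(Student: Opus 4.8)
Fact~\ref{wn5} is the contrapositive repackaging of Lemma~\ref{lem:ports}, so the proof is a one-line deduction; the only thing to be careful about is matching up the quantifiers using the fact that $\Vs$ has exactly two elements. The plan is to argue by contradiction: assume $(\forall v\in\Vs)\; A(a',v)=a$ but $P(a,v)=P(a,v')$ for the two vertex states, and derive a clash with Lemma~\ref{lem:ports}.

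First I would observe that since $|\Vs|=2$, the assumption $P(a,v)=P(a,v')$ is exactly the statement that $P(a,\cdot)$ is constant on $\Vs$, i.e. there is a single port $p\in\Ps$ with $(\forall v\in\Vs)\; P(a,v)=p$. This is precisely the hypothesis of Lemma~\ref{lem:ports} for this $a$.

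Next I would apply Lemma~\ref{lem:ports} to obtain some $w\in\Vs$ with $A(a',w)=a'$ and $P(a',w)=p'$. But the standing hypothesis of Fact~\ref{wn5} asserts $A(a',v)=a$ for every $v\in\Vs$, in particular $A(a',w)=a$. Since $a\neq a'$ (recall $a'$ denotes the flipped bit), this is a contradiction, and therefore the assumption $P(a,v)=P(a,v')$ is untenable; hence $P(a,v)\neq P(a,v')$, as claimed.

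I do not expect any real obstacle here: the whole content was already extracted into Lemma~\ref{lem:ports}, and Fact~\ref{wn5} just reads off its contrapositive. The only point worth a sentence is the reduction of "$P(a,v)=P(a,v')$" to "$P(a,\cdot)$ constant", which uses that a two-element set is forced to be constant once its two values agree — so no case analysis or gadget construction is needed in this step itself (those live inside the proof of Lemma~\ref{lem:ports}).
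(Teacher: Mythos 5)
Your proof is correct and matches the paper, which states Fact~\ref{wn5} as "a direct conclusion of Lemma~\ref{lem:ports}" — precisely the contrapositive argument you spell out: if $P(a,v)=P(a,v')$ then $P(a,\cdot)$ is constant, Lemma~\ref{lem:ports} yields some $w$ with $A(a',w)=a'$, contradicting the hypothesis $A(a',w)=a$. Nothing is missing; your note about the two-element set $\Vs$ forcing constancy is the only detail the paper leaves implicit.
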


The next lemma shows that the agent does not change its internal state in every step.

\begin{lem}
\label{fact:1}
Fix any $a\in \As$. If $(\forall\;v\in \Vs)\;A(a,v)=a'$, then $(\exists\; w\in \Vs)\;R_2(a',w)=(a,w')\ \land\ R_2(a',w')=(a',w)$. 
\end{lem}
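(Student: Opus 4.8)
The plan is to first recast the conclusion in terms of $A(a',\cdot)$. By Claim~\ref{fct:32} (in Fact~\ref{wn3}) we have $V(a',v)=v'$ for every $v\in\Vs$, hence $R_2(a',w)=(A(a',w),w')$ and $R_2(a',w')=(A(a',w'),w)$; thus the assertion to be proved is equivalent to saying that $A(a',\cdot)$ is not constant, i.e.\ $A(a',0)\neq A(a',1)$. Claim~\ref{fct:33} already excludes $A(a',\cdot)\equiv a'$, so it suffices to rule out the possibility $A(a',v)=a$ for \emph{both} $v\in\Vs$; assume this for a contradiction. Applying Lemma~\ref{lem:vconst} to $a$ and to each $v$ (its second conjunct $A(a',v)=A(a',v')$ now holds) gives $R_2(a,v)\neq(a',v)$, and since $A(a,v)=a'$ this forces $V(a,v)=v'$ for every $v$; moreover Fact~\ref{wn5} applied to $a$ yields $P(a,0)\neq P(a,1)$. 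Hence the transition of $\mathcal A$ on every degree-$2$ vertex is completely determined: $R_3(a,v)=(a',v',P(a,v))$ with $P(a,0)\neq P(a,1)$, and $R_3(a',v)=(a,v',P(a',v))$.

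Next I would analyse this transition on a path. Every degree-$2$ transition flips the agent's internal state, and on a path the position changes by $\pm1$ at every step; therefore, as long as the agent has not yet visited a degree-$1$ endpoint, the internal state at step $t$ depends only on the parity of $t$, so each fixed degree-$2$ vertex is always entered in the same internal state. A degree-$2$ vertex always entered in state $a$ then behaves exactly like a Rotor-Router node: its bit flips on each visit and, since $P(a,\cdot)$ is a bijection onto $\{0,1\}$, its outgoing port alternates between the two incident edges; moreover the adversary can prescribe which edge is used on the first visit, by choosing the vertex's initial bit and port labelling.

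I would then distinguish two cases according to whether $P(a',0)\neq P(a',1)$ or $P(a',0)=P(a',1)$. In the first case the same reasoning shows that a degree-$2$ vertex entered in state $a'$ is also a Rotor-Router node, so on a path the walk of $\mathcal A$ coincides, up to the first visit of an endpoint, with a Rotor-Router walk whose rotors and port labelling the adversary chooses freely (at a degree-$1$ vertex the outgoing port is forced, so nothing extra is needed there). Taking a path on $2N{+}1$ vertices with the agent started at the centre $c$ and all rotors oriented towards $c$, the Rotor-Router performs the usual zig-zag and needs $\Omega(N^2)$ steps to reach an endpoint — the standard quadratic lower bound for the Rotor-Router on a path~\cite{YanovskiWB03,MencPU17}; before that moment $\mathcal A$ visits only degree-$2$ vertices and mimics the Rotor-Router step by step, so $\mathcal A$ also needs $\Omega(n^2)$ steps, contradicting the assumption that $\mathcal A$ explores every path in $o(n^2)$. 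In the second case write $P(a',0)=P(a',1)=q$; then a degree-$2$ vertex entered in state $a'$ always leaves through port $q$. If the starting state is $a$, the central vertex $c$ is a Rotor-Router vertex and its two neighbours $c-1,c+1$ are entered in state $a'$; orienting the port $q$ of each of $c-1$ and $c+1$ towards $c$ confines the agent forever to $\{c-1,c,c+1\}$ (from $c$ it alternately moves to $c-1$ and to $c+1$, and from each neighbour it returns to $c$). If the starting state is $a'$, then $c$ always leaves through port $q$; orienting that port towards $c+1$, making $c+1$ (entered in state $a$) a Rotor-Router vertex and orienting the port $q$ of $c+2$ (entered in state $a'$) towards $c+1$ confines the agent forever to $\{c,c+1,c+2\}$. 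In either case $\mathcal A$ never explores a long enough path, a contradiction. This rules out $A(a',\cdot)\equiv a$ and proves the lemma.

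The step I expect to be the main obstacle is making the Rotor-Router case fully rigorous: one must argue that the emulation is valid \emph{precisely} up to the first visit of a degree-$1$ vertex (where the algorithm's action is unconstrained but also irrelevant, the port being forced) and pin down the $\Omega(n^2)$ lower bound for the Rotor-Router on a path with a \emph{prescribed, central} starting vertex. The remainder — the two trapping gadgets for the two possible starting states when $P(a',0)=P(a',1)$, and checking that the state/parity bookkeeping really turns each degree-$2$ vertex into a clean Rotor-Router node — is routine but somewhat tedious.
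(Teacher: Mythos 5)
Your proposal is correct and follows essentially the same route as the paper: assume for contradiction that $A(a',\cdot)\equiv a$, deduce that in this scenario both the agent state and the vertex state flip at every degree-$2$ step, and derive the contradiction by showing the resulting walk on a path is either Rotor-Router-like (hence $\Omega(n^2)$, against the $o(n^2)$ assumption) or trapped in a small gadget. The differences are only organizational: you obtain port-distinctness from Fact~\ref{wn5} instead of the paper's direct trapping argument, you merge the paper's two final cases into one parity/Rotor-Router emulation, and you treat the subcase $P(a',0)=P(a',1)$ with explicit gadgets — a subcase that is actually vacuous, since Fact~\ref{wn5} applied with the roles of $a$ and $a'$ swapped (its hypothesis then being exactly the hypothesis of the lemma) already yields $P(a',v)\neq P(a',v')$, though your gadget argument for it is correct in any case.
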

\begin{proof}
By Fact \ref{wn3} we know that
\[
(\exists\; w\in \Vs)\;R_2(a',w)=(a,w')\ \land\ \left(R_2(a',w')=(a',w)\ \lor\ R_2(a',w')=(a,w)\right)~.
\]
In order to prove \autoref{fact:1}, let us assume that $R_2(a',w')=(a,w)$, what implies that the agent changes its internal state in each step (i.e. $(\forall\; a\in\As)(\forall\; v\in\Vs)\; A(a,v)=a'$).
A symmetric application of Fact \ref{wn3} shows that $(\exists\; v\in\Vs)\; R_2(a,v)=(a',v')\ \land\ R_2(a,v')=(a',v)$ (because $A(a,v')=a'$).
Hence we obtained that $(\forall\; a\in\As)(\forall\; v\in\Vs)\; R_2(a,v)=(a',v')$, what means that the states of vertices are changed upon each visit of the agent as well.
\begin{center}
\begin{tikzpicture}
  \matrix (m) [matrix of nodes, row sep=3em, column sep=4em, minimum width=2em, nodes={draw, align=left, text width=.5em, text height=0.5em}]
  {
    {$a$ \\ $v$}& {$a'$ \\ $v$} \\
     {$a$ \\ $v'$} & {$a'$ \\ $v'$} \\
     };
    \draw[->] (m-1-1) edge [above, bend left] (m-2-2);
    \draw[->] (m-1-2) edge [above, bend left] (m-2-1);
    \draw[->] (m-2-1) edge [above, bend left] (m-1-2);
    \draw[->] (m-2-2) edge [above, bend left] (m-1-1);
\end{tikzpicture}
\end{center}
We want to show that for such an algorithm:
\begin{equation}
  \label{eqn:differences}
   (\forall\; a\in\As)(\forall\; v\in\Vs)\; P(a,v)\neq P(a,v'). 
\end{equation}
Assume for contradition that for some $a \in \As,\ v \in \Vs,\ p \in \Ps$ we have $P(a,v) = P(a,v') = p$.
Observe that, in this case, in every step at a node with degree $2$ the agent changes its state, so at least one of every two consecutive steps at nodes with degree $2$ is via port $p$.
This shows, that on a path, where port $p$ always points towards the starting position of the agent, the agent cannot reach more than $2$ steps away from its starting position, hence it cannot explore the whole path. Thence Equation~\ref{eqn:differences} is proved for such the algorithm.

Now, let us consider the following two cases:

\noindent
\textbf{Case A: $P(a,v)=P(a',v)$}. Then, by Equation~\ref{eqn:differences}, $P(a,v')=P(a',v')$, so the outports chosen by the agent depend only on states of vertices (the agent does not use its internal state). 
Hence this algorithm is equivalent to Rotor-Router on vertices with degree $2$ and requires $\Omega(n^2)$ steps to explore the path, starting from a vertex in the middle of the path (at distance at least $\lfloor n/2 \rfloor$ from each endpoint).\\
\textbf{Case B: $P(a,v)\neq P(a',v)$}. Again from Equation~\ref{eqn:differences} we attain $P(a',v)=P(a,v')$ and $P(a,v)=P(a',v')$.
Let us think about an equivalence relation, where $W:=(a,v)\equiv(a',v')$ and $W':=(a,v')\equiv(a',v)$.
Note that the outports of $W$ and $W'$ are different. Moreover, since the agent changes its internal state at each step, then after an even number of steps on nodes with degree $2$, the agent is always in the same state.
Consider an agent starting in the middle of the path and analyze its walk until it reaches some of the endpoints. Between successive visits to any of the vertices $x$, the agent makes an even number of steps, hence the agent always visits $x$ in the same state, but each time the state of the vertex $x$ changes to opposite.
Thus upon consecutive visits, it chooses different outports. Hence the walk performed by the agent is equivalent to Rotor-Router and for some initialization of port labels it takes $\Omega(n^2)$ steps to reach any endpoint of the path.

We obtained a contradiction in both cases, which proves that $R_2(a',w')=(a',w)$.
\end{proof}
The following lemma shows that the agent changes the state of the vertex in each step at a node of degree $2$.
\begin{lem}
\label{lem:vchange}
$(\forall\; a\in\As)(\forall\; v\in\Vs)\; V(a,v)=v'$.
\end{lem}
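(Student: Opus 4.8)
The plan is to proceed by contradiction, following the same template used throughout this section: assume some state $(a,v)$ with $V(a,v) = v$ (i.e., the vertex bit is \emph{not} flipped), and then either reduce to an earlier-excluded case or construct a finite looping gadget that traps the agent, contradicting that $\mathcal{A}$ explores the path in $o(n^2)$ steps. By Lemma~\ref{fact:1} we already control the structure a lot: for any $a$ with $A(a,\cdot)\equiv a'$, there is a $w$ with $R_2(a',w)=(a,w')$ and $R_2(a',w')=(a',w)$; and by Fact~\ref{wn1} such an $a$ exists. So the first thing I would do is fix the agent state $a$ for which $A(a,v)=a'$ for both $v$ (this is the ``flipping'' state from Fact~\ref{wn3}), and note that by Fact~\ref{wn3}\ref{fct:32} we already have $V(a',w)=w'$ for \emph{all} $w$. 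Thus the state $a'$ already always flips the vertex bit, and the remaining content of the lemma is to show that the state $a$ also always flips the vertex bit.

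**Next I would** suppose for contradiction that $V(a,v)=v$ for some $v$. Combined with $A(a,v)=a'$ we get $R_2(a,v)=(a',v)$, and then $R_2(a,v') $ is constrained by Fact~\ref{wn2} to be $\neq (a,v')$; since $A(a,v')=a'$ as well, we only know $R_2(a,v')\in\{(a',v),(a',v')\}$. I would split on whether $V(a,v')=v'$ or $V(a,v')=v$. If $V(a,v)=v$ \emph{and} $V(a,v')=v$, then at state $a$ the agent always writes vertex bit $v$ regardless, so on a path whose degree-$2$ vertices are all preset to bit $v'$, whenever the agent is in state $a$ at such a vertex it transitions to $(a',v)$; and state $a'$ always flips the vertex, sending $v\mapsto v'$; combining, the agent oscillates between vertex-bits $v$ and $v'$ and (crucially, via Lemma~\ref{fact:1}, between agent states $a,a'$ in a controlled way) its outport choices become forced in a way that lets the adversary choose the port labels so the agent never escapes a bounded window — I would build the explicit $3$- or $4$-node gadget here just as in Lemma~\ref{lem:vconst} and Fact~\ref{fac:CASE4}. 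The other subcase $V(a,v')=v'$ (only $v$ is a ``fixed point'' of the write) is the delicate one: here I would use $R_2(a,v)=(a',v)$ together with the known transitions out of $a'$ (Fact~\ref{wn3}\ref{fct:32} gives $V(a',v)=v'$, and by Fact~\ref{fac:CASE4} we cannot also have $R_2(a',v)=(a,v)$) to follow the agent's trajectory starting from a vertex preset to $v$, show it returns to the same configuration after a bounded number of steps along a short gadget, and again let the adversary fix ports to loop it.

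**The main obstacle** I anticipate is the delicate subcase $V(a,v)=v$, $V(a,v')=v'$: here the agent's behaviour genuinely depends on the port assignment, so I cannot just read off a static loop — I must trace a few steps of the walk, using the port-freedom results (Lemma~\ref{lem:ports}, Fact~\ref{wn5}) and the symmetry forced by Lemma~\ref{fact:1}, to argue that \emph{some} adversarial port labelling produces a cycle, or else that the walk degenerates to a Rotor-Router-like process on a structure where the agent state is effectively constant (which by Fact~\ref{wn1}-style reasoning gives the $\Omega(n^2)$ bound directly). A secondary subtlety is bookkeeping: since $A(a',\cdot)$ is not fully pinned down (only $R_2(a',w)=(a,w')$ for \emph{one} $w$ from Lemma~\ref{fact:1}, plus $V(a',\cdot)\equiv\mathrm{flip}$), I would need to case on the value of $A(a',v)$ for the specific $v$ at hand, which threatens to multiply cases; I expect several of these to collapse via Lemma~\ref{lem:vconst} (which forbids $R_2(a,v)=(a',v)\land A(a',v)=A(a',v')$) and Fact~\ref{fac:CASE4}, so the bulk of the work is in organizing the case analysis so that each branch terminates in either a named earlier contradiction or one explicit looping gadget of at most four nodes.
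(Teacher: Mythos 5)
Your opening reduction contains a genuine gap. You fix ``the agent state $a$ for which $A(a,v)=a'$ for both $v$'' and justify its existence by Fact~\ref{wn1}; but Fact~\ref{wn1} only says that for \emph{each} agent state $a$ there is \emph{some} vertex value $v$ with $A(a,v)=a'$ --- it does not provide an agent state that flips for \emph{both} vertex values. An algorithm with, say, $A(a,v)=a$, $A(a,v')=a'$, $A(a',v)=a$, $A(a',v')=a'$ satisfies Fact~\ref{wn1} and has no such doubly-flipping state, and nothing proved earlier excludes it. Since Fact~\ref{wn3} and Lemma~\ref{fact:1} are both conditional on the hypothesis $(\forall v)\,A(a,v)=a'$, in that uncovered case you cannot invoke Fact~\ref{wn3} to conclude $V(a',w)=w'$ for all $w$, so your reduction of the lemma to ``state $a$ also flips the vertex bit'' collapses, and an entire family of candidate algorithms is left untreated. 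The paper's proof avoids this: it starts directly from the negation $V(a,v)=v$, deduces $R_2(a,v)=(a',v)$ from Fact~\ref{wn2} and $V(a',v)=v'$ from Facts~\ref{fac:CASE4} and~\ref{wn2}, and then splits on the value of $R_2(a',v)$, never assuming any universally flipping agent state.

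A secondary problem is that even within the cases you do consider, the argument is only a plan: the ``3- or 4-node gadget'' for the subcase $V(a,v)=V(a,v')=v$ and the treatment of the ``delicate subcase'' are promised rather than constructed. In this lemma the gadgets are the substance of the proof; the paper's Case~B, for instance, cannot be settled by a static loop at all and requires a four-node gadget together with a three-invariant argument (the invariants I1--I3 about which states the agent can carry into the inner and outer vertices), plus appeals to Lemma~\ref{lem:vconst}, Lemma~\ref{lem:ports} and Fact~\ref{fac:CASE4} to pin down the remaining transitions. So, beyond repairing the case analysis at the top, you would still need to exhibit the explicit port labellings and initial vertex states (or invariant arguments) in each branch before this could count as a proof.
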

\begin{proof}
Assume, that the algorithm does not satisfy the claim, so consider such $a\in\As,\ v\in\Vs$, that $V(a,v) = v$ and denote $P(a,v)=p$.
Then from Fact~\ref{wn2} we get $R_2(a,v) = (a',v)$. Moreover from Fact~\ref{fac:CASE4}, $R_2(a',v) \neq (a,v)$ and again from Fact~\ref{wn2}, $R_2(a',v) \neq (a',v)$, hence $V(a',v) = v'$.

We consider two cases:\\
\textbf{Case A: $R_2(a',v)=(a',v')$}. Observe that in this case from Lemma \ref{lem:vconst} we obtain $A(a',v)\neq A(a',v')=a$. Moreover, if $q = P(a',v) \neq P(a',v')=q'$, then the agent falls into infinite loop on the following gadget:

\begin{center}
\begin{tikzpicture}[->]
 \node (X) at (0,0) { };
 \node [draw, circle, text width=.5em, text height=0.5em, label=$ $] (A) at (1,0) {$v$};
 \node [draw, circle, text width=.5em, text height=0.5em, label=$ $] (B) at (2,0) {$*$};
 \node [draw, circle, text width=.5em, text height=0.5em, label=$ $] (C) at (3,0) {$*$};
  \node [draw, circle, text width=.5em, text height=0.5em, label=$ $] (D) at (4,0) {$v$};
  \node (Y) at (5,0) { };
 \path 	(A) edge [above, bend left] node {$p$} (B)
		(B) edge [above, bend left] node {$q$} (C)
	        (C) edge [above, bend left] node {$q'$} (D)
	        (D) edge [above, bend left] node {$p'$} (Y);
  \path	(D) edge [below, bend left] node {$p$} (C) 
  		(C) edge [below, bend left] node {$q$} (B)
	    (B) edge [below, bend left] node {$q'$} (A)
	    (A) edge [below, bend left] node {$p'$} (X);
\end{tikzpicture}
\end{center}
If the starting state of the agent is $a$, then the adversary chooses the starting position as the leftmost node and if it is $a'$, then of the middle nodes. Observe that if agent is in one of the outer vertices in state $a$, the agent changes its state to $a'$ and moves via port $p$ to one of the two vertices in the middle. Since $R_3(a',v) = (a',v',q)$ and $R_3(a',v') = (a,\ast,q')$ ($\ast$ takes any possible value), the next time the agent enters to the outer vertex (via port $q'$) will be in state $a$. 
And since $R_2(a,v) = (a',v)$ the states of both outer vertices remain $v$. This means that the agent cannot escape from such a gadget. 
 
Therefore $P(a',v)=P(a',v')$. From Lemma \ref{lem:ports} and Fact \ref{wn2} we obtain that $R_2(a,v')=(a,v)$ and $P(a',v)=P(a',v')\neq P(a,v')$.

If we had $R_2(a',v')=(a,v')$, then
by Lemma \ref{lem:vconst}, $A(a,v)\neq A(a,v')=a$, hence, from Fact \ref{wn2}, $R_2(a,v')=(a,v)$. By the symmetry, we conclude that $P(a,v)=P(a,v')\neq P(a',v)=P(a',v')$, hence ports depend only on the agent's state, so we obtain the following algorithm, with simple counterexamples: 
\begin{center}
\begin{minipage}{.4\textwidth}
\begin{tikzpicture}
  \matrix (m) [matrix of nodes, row sep=3em, column sep=4em, minimum width=2em, nodes={draw, align=left, text width=.5em, text height=0.5em}]
  {
    {$a$ \\ $v$}& {$a'$ \\ $v$} \\
     {$a$ \\ $v'$} & {$a'$ \\ $v'$} \\
     };
    \draw[->] (m-2-1) edge node [left, near start] {$p$} (m-1-1);
    \draw[->] (m-1-1) edge node [above, near start] {$p$} (m-1-2);
    \draw[->] (m-1-2) edge node [right, near start] {$p'$} (m-2-2);
    \draw[->] (m-2-2) edge node [below, near start] {$p'$} (m-2-1);
\end{tikzpicture}
\end{minipage}
\begin{minipage}{.4\textwidth}
\begin{tikzpicture}[->]
 \node (X) at (0,0) { };
 \node [draw, double, circle, text width=.5em, text height=0.5em, label=$a$] (A) at (1,0) {$v'$};
 \node [draw, circle, text width=.5em, text height=0.5em] (B) at (2,0) {$v$};
  \node (Y) at (3,0) { };
 \path (A) edge [above, bend left] node {$p'$} (B)
	  (B) edge [below, bend left] node {$p$} (A)
	  (A) edge [below, bend left] node {$p$} (X)
	  (B) edge [above, bend left] node {$p'$} (Y);
\end{tikzpicture}
\begin{tikzpicture}[->]
 \node (X) at (0,0) { };
 \node [draw, double, circle, text width=.5em, text height=0.5em, label=$a'$] (A) at (1,0) {$v$};
 \node [draw, circle, text width=.5em, text height=0.5em] (B) at (2,0) {$v'$};
  \node (Y) at (3,0) { };
 \path (A) edge [above, bend left] node {$p$} (B)
	  (B) edge [below, bend left] node {$p'$} (A)
	  (A) edge [below, bend left] node {$p'$} (X)
	  (B) edge [above, bend left] node {$p$} (Y);
\end{tikzpicture}
\end{minipage}
\end{center}

On the other hand, if $R_2(a',v')\neq (a,v')$, then by Fact~\ref{wn2} we obtain $V(a',v')=v$. Recall that $A(a',v')=a$, so $R_2(a',v')=(a,v)$.
Moreover, from Lemma~\ref{lem:ports}, we get that $A(a,v')=a$, because $A(a,v)=a'$. From Fact~\ref{wn2} we attain $R_2(a,v')=(a,v)$.

Now, if additionally $P(a,v')=p$, then from $P(a',v)=P(a',v')\neq P(a,v')$ 
we obtain $R_3(a',v')=(a,v,p')$ and $R_3(a',v)=(a',v',p')$, so we end up with the following algorithm and we present two gadgets for it:
\begin{center}
\begin{minipage}{.4\textwidth}
\begin{tikzpicture}
  \matrix (m) [matrix of nodes, row sep=3em, column sep=4em, minimum width=2em, nodes={draw, align=left, text width=.5em, text height=0.5em}]
  {
    {$a$ \\ $v$}& {$a'$ \\ $v$} \\
     {$a$ \\ $v'$} & {$a'$ \\ $v'$} \\
     };
    \draw[->] (m-2-1) edge node [left, near start] {$p$} (m-1-1);
    \draw[->] (m-1-1) edge node [above, near start] {$p$} (m-1-2);
    \draw[->] (m-1-2) edge node [right, near start] {$p'$} (m-2-2);
    \draw[->] (m-2-2) edge node [below, near start] {$p'$} (m-1-1);
\end{tikzpicture}
\end{minipage}
\begin{minipage}{.4\textwidth}
\begin{tikzpicture}[->]
 \node (X) at (0,0) { };
 \node [draw, double, circle, text width=.5em, text height=0.5em, label=$a$] (A) at (1,0) {$v$};
 \node [draw, circle, text width=.5em, text height=0.5em] (B) at (2,0) {$v$};
  \node [draw, circle, text width=.5em, text height=0.5em] (C) at (3,0) {$v$};
  \node [draw, circle, text width=.5em, text height=0.5em] (D) at (4,0) {$v$};
  \node (Y) at (5,0) { };
 \path (A) edge [above, bend left] node {$p$} (B)
	  (B) edge [above, bend left] node {$p'$} (C)
	  (C) edge [above, bend left] node {$p$} (D)
	  (D) edge [above, bend left] node {$p'$} (Y);
  \path (D) edge [below, bend left] node {$p$} (C)
	  (C) edge [below, bend left] node {$p'$} (B)
	  (B) edge [below, bend left] node {$p$} (A)
	  (A) edge [below, bend left] node {$p'$} (X);
\end{tikzpicture}
\begin{tikzpicture}[->]
 \node (X) at (0,0) { };
 \node [draw, double, circle, text width=.5em, text height=0.5em, label=$a'$] (AA) at (1,0) {$v'$};
 \node [draw, circle, text width=.5em, text height=0.5em] (A) at (2,0) {$v$};
 \node [draw, circle, text width=.5em, text height=0.5em] (B) at (3,0) {$v$};
  \node [draw, circle, text width=.5em, text height=0.5em] (C) at (4,0) {$v$};
  \node [draw, circle, text width=.5em, text height=0.5em] (D) at (5,0) {$v$};
  \node (Y) at (6,0) { };
 \path (AA) edge [above, bend left] node {$p'$} (A) 
 	  (A) edge [above, bend left] node {$p$} (B)
	  (B) edge [above, bend left] node {$p'$} (C)
	  (C) edge [above, bend left] node {$p$} (D)
	  (D) edge [above, bend left] node {$p'$} (Y);
  \path (D) edge [below, bend left] node {$p$} (C)
	  (C) edge [below, bend left] node {$p'$} (B)
	  (B) edge [below, bend left] node {$p$} (A)
	  (A) edge [below, bend left] node {$p'$} (AA)
	  (AA) edge [below, bend left] node {$p$} (X);
\end{tikzpicture}
\end{minipage}
\end{center}

On the other hand, if alternatively $P(a,v')=p'$, 
then we similarly obtain $R_3(a',v')=(a,v,p)$ and $R_3(a',v)=(a',v',p)$ and the provided algorithm falls into infinite loop on the following gadgets:

\begin{center}
\begin{minipage}{.4\textwidth}
\begin{tikzpicture}
  \matrix (m) [matrix of nodes, row sep=3em, column sep=4em, minimum width=2em, nodes={draw, align=left, text width=.5em, text height=0.5em}]
  {
    {$a$ \\ $v$}& {$a'$ \\ $v$} \\
     {$a$ \\ $v'$} & {$a'$ \\ $v'$} \\
     };
    \draw[->] (m-2-1) edge node [left, near start] {$p'$} (m-1-1);
    \draw[->] (m-1-1) edge node [above, near start] {$p$} (m-1-2);
    \draw[->] (m-1-2) edge node [right, near start] {$p$} (m-2-2);
    \draw[->] (m-2-2) edge node [below, near start] {$p$} (m-1-1);
\end{tikzpicture}
\end{minipage}
\begin{minipage}{.4\textwidth}
\begin{tikzpicture}[->]
 \node (X) at (0,0) { };
 \node [draw, double, circle, text width=.5em, text height=0.5em, label=$a'$] (A) at (1,0) {$v$};
 \node [draw, circle, text width=.5em, text height=0.5em] (B) at (2,0) {$v$};
  \node [draw, circle, text width=.5em, text height=0.5em] (C) at (3,0) {$v$};
  \node [draw, circle, text width=.5em, text height=0.5em] (D) at (4,0) {$v$};
  \node (Y) at (5,0) { };
 \path (A) edge [above, bend left] node {$p$} (B)
	  (B) edge [above, bend left] node {$p'$} (C)
	  (C) edge [above, bend left] node {$p$} (D)
	  (D) edge [above, bend left] node {$p'$} (Y);
  \path (D) edge [below, bend left] node {$p$} (C)
	  (C) edge [below, bend left] node {$p'$} (B)
	  (B) edge [below, bend left] node {$p$} (A)
	  (A) edge [below, bend left] node {$p'$} (X);
\end{tikzpicture}
\begin{tikzpicture}[->]
 \node (X) at (0,0) { };
  \node [draw, double, circle, text width=.5em, text height=0.5em, label=$a$] (AA) at (1,0) {$v$};
 \node [draw, circle, text width=.5em, text height=0.5em] (A) at (2,0) {$v$};
 \node [draw, circle, text width=.5em, text height=0.5em] (B) at (3,0) {$v$};
  \node [draw, circle, text width=.5em, text height=0.5em] (C) at (4,0) {$v$};
  \node [draw, circle, text width=.5em, text height=0.5em] (D) at (5,0) {$v$};
  \node (Y) at (6,0) { };
 \path  (AA) edge [above, bend left] node {$p$} (A) 
 	 (A) edge [above, bend left] node {$p$} (B)
	  (B) edge [above, bend left] node {$p'$} (C)
	  (C) edge [above, bend left] node {$p$} (D)
	  (D) edge [above, bend left] node {$p'$} (Y);
  \path (D) edge [below, bend left] node {$p$} (C)
	  (C) edge [below, bend left] node {$p'$} (B)
	  (B) edge [below, bend left] node {$p$} (A)
	  (A) edge [below, bend left] node {$p'$} (AA)
	  (AA) edge [below, bend left] node {$p'$} (X) ;
\end{tikzpicture}
\end{minipage}
\end{center}
This completes the analysis of Case A.\\
\noindent
\textbf{Case B: $R_2(a',v)=(a,v')$}.
From Lemma \ref{lem:vconst}, $A(a',v)\neq A(a',v')=a'$, so $R_2(a',v')=(a',v)$, due to Fact~\ref{wn2}.
If we had $R_2(a,v')=(a',v')$, then by symmetry we could swap $v$ with $v'$ and obtain Case A. Hence $R_2(a,v') \neq (a',v')$. 
Again, Fact~\ref{wn2} bears $V(a,v')=v$. 
Suppose that $P(a',v)=P(a',v')=p'$. Then $P(a,v')=p$ and $A(a,v')=a$ from Lemma \ref{lem:ports}, so $R_3(a,v')=(a,v,p)$, what provides Algorithm Q from Section~\ref{sec:line}, which can be looped by the adversary.
Thence we assume that either $P(a',v)\neq p'$ or $P(a',v')\neq p'$. Moreover, let us denote $P(a',v')=q$.
Consider the following gadget (if the agent starts in state $a$ it is initialized on one of the outer vertices and otherwise on one of the inner vertices, what is indicated above the nodes):
\begin{center}
\begin{minipage}{.4\textwidth}
\begin{tikzpicture}
  \matrix (m) [matrix of nodes, row sep=3em, column sep=4em, minimum width=2em, nodes={draw, align=left, text width=.5em, text height=0.5em}]
  {
    {$a$ \\ $v$}& {$a'$ \\ $v$} \\
     {$a$ \\ $v'$} & {$a'$ \\ $v'$} \\
     };
    \draw[->] (m-1-1) edge node [above, near start] {$p$} (m-1-2);
    \draw[->] (m-1-2) edge node [right, near start] { } (m-2-1);
    \draw[->] (m-2-1) edge node [left, near start] {$p$} (m-1-1);
    \draw[->] (m-2-2) edge node [right, near start] {$q$} (m-1-2);
\end{tikzpicture}
\end{minipage}
\begin{minipage}{.4\textwidth}
\begin{tikzpicture}[->]
 \node [draw, circle, text width=.5em, text height=0.5em, label=$a$] (A) at (1,0) {$v$};
 \node [draw, circle, text width=.5em, text height=0.5em, label=$a'$] (B) at (2,0) {$v$};
 \node [draw, circle, text width=.5em, text height=0.5em, label=$a'$] (C) at (3,0) {$v$};
  \node [draw, circle, text width=.5em, text height=0.5em, label=$a$] (D) at (4,0) {$v$};
 \path 	(A) edge [above, bend left] node {$p$} (B)
		(B) edge [above, bend left] node {$q$} (C)
	        (C) edge [above, bend left] node {$q'$} (D);
  \path	(D) edge [below, bend left] node {$p$} (C) 
  		(C) edge [below, bend left] node {$q$} (B)
	        (B) edge [below, bend left] node {$q'$} (A);
\end{tikzpicture}
\end{minipage}
\end{center}

Consider the following three invariants: 
\begin{itemize}
\item[I1] While the agent visits one of the inner vertices in state $a$, then the state of this vertex is always $v$.
\item[I2] The agent visits the outer vertices always in state $a$.
\item[I3] The state of the outer vertices is always $v$.
\end{itemize}
Notice, that as long as the invariants are not violated, the agent cannot escape the gadget because it always exits via port $q'$ from the outer vertices. Observe, that the invariants are initially true and assume that all three invariants were true up to some step $t$. We want to show that the invariants will remain true after this step.

First, we show that the agent cannot violate I3 in step $t+1$. In order to do so, it need to change the state of an outer vertex from $v$ to $v'$, so it has to visit this vertex in state $a'$ while entering to the vertex via port $q'$. Therefore, at the moment $t$, the agent should have internal state $a$, so since the invariant I1 was not violated, the appropriate vertex had state $v$. In consequence, $P(a,v)=q'$, hence $q=p'$. From the assumption, $P(a',v)=p$.
Note that the internal state of the agent cannot be changed to $a$ during the step $t-1$, from outer node to inner node, because it either violates I2 or I3.
Therefore such the move should be between two inner nodes, via port $q$.

Hence, from invariant I1, we may conclude that $P(a',v)=q=p'$, contrary to previous observation.
The same argument shows that the agent also cannot violate I2 in step $t+1$.

As long as invariants I2 and I3 are not violated, the agent always enters to inner vertices from outer vertices in state $a'$.
Thence, in order to violate I1, the state of the agent has to be changed to $a$ during the pass between the middle nodes (via port $q$), so $R_3(a',v)=(a,v',q)$ (otherwise the agent had to violate I1 before).
Then, from the assumption, $q$ cannot be $p'$, so from Lemma~\ref{lem:ports}, $P(a,v')=p'$. However, then the only way to go to outer node is via port $p'$, but then the agent violates I1.
Hence the agent cannot violate any of the invariants hence it cannot escape the gadget.
This completes the proof of Case B.
\end{proof}
\customparagraph{Algorithms $X$ and $Y$}
Consider the following two algorithms:

\begin{center}
\begin{minipage}{.4\textwidth}
\textbf{Algorithm\hspace*{1mm}$X$}
\linebreak
\begin{tikzpicture}
  \matrix (m) [matrix of nodes, row sep=3em, column sep=4em, minimum width=2em, nodes={draw, align=left, text width=.5em, text height=0.5em}]
  {
    {$a$ \\ $v$}& {$a'$ \\ $v$} \\
     {$a$ \\ $v'$} & {$a'$ \\ $v'$} \\
     };
    \draw[->] (m-1-1) edge node [left, near start] {$p$} (m-2-1);
    \draw[->] (m-1-2) edge node [right, near start] {$p'$} (m-2-2);
    \draw[->] (m-2-1) edge node [left, near start] {$p$} (m-1-2);
    \draw[->] (m-2-2) edge node [right, near start] {$p'$} (m-1-1);
\end{tikzpicture}
\end{minipage}
\begin{minipage}{.4\textwidth}
\textbf{Algorithm\hspace*{1mm}$Y$}
\linebreak
\begin{tikzpicture}
  \matrix (m) [matrix of nodes, row sep=3em, column sep=4em, minimum width=2em, nodes={draw, align=left, text width=.5em, text height=0.5em}]
  {
    {$a$ \\ $v$}& {$a'$ \\ $v$} \\
     {$a$ \\ $v'$} & {$a'$ \\ $v'$} \\
     };
    \draw[->] (m-1-1) edge node [left, near start] {$p$} (m-2-1);
    \draw[->,bend left] (m-1-2) edge node [left, very near start] {$p'$} (m-2-1);
    \draw[->,bend left] (m-2-1) edge node [right, very near start] {$p$} (m-1-2);
    \draw[->] (m-2-2) edge node [right, near start] {$p'$} (m-1-2);
\end{tikzpicture}
\end{minipage}
\end{center}

\begin{lem}
\label{lem:twoAlgs}
Algorithms $X$ and $Y$ require time $\Omega(n^2)$ to explore a path.
\end{lem}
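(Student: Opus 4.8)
The plan is to read off the transition tables of $X$ and $Y$ and exploit a feature common to both: the outgoing port depends only on the agent's \emph{incoming} state ($a\mapsto p$, $a'\mapsto p'$), while by Lemma~\ref{lem:vchange} every visit flips the state of the visited node. The two algorithms differ only in how the agent's state is updated: in $X$ the agent flips its state precisely when it enters a node currently in state $v'$, whereas in $Y$ the new agent state is a function of the entered node's state alone ($v\mapsto a$, $v'\mapsto a'$), independent of the old agent state.

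I would first dispose of $Y$. Since in $Y$ the new agent state, and hence the outgoing port, is determined entirely by the state of the node just entered ($v\mapsto$ port $p$, $v'\mapsto$ port $p'$), the agent's own state never influences its trajectory on a path: reading ``node state $v$'' as ``rotor points to $p$'' and ``node state $v'$'' as ``rotor points to $p'$'', Algorithm $Y$ executes exactly the Rotor-Router walk with the same port labelling and the same initial configuration (at the degree-$1$ endpoints both walks simply bounce back). Hence a worst-case Rotor-Router instance is a worst-case instance for $Y$, and the known $\Omega(n^2)$ lower bound for Rotor-Router on paths~\cite{MencPU17} — which in fact shows that, under a suitable initial rotor configuration (all rotors pointing towards the start), reaching a vertex at distance $d$ from the start already costs $\Omega(d^2)$ steps, and here the agent starts at distance $\geq\lfloor n/2\rfloor$ from either endpoint — settles $Y$.

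The bulk of the argument is $X$, where the agent's state genuinely steers the walk. Here I would fix a ``centering'' adversarial instance: at every internal vertex other than the start, port $p$ leads one step towards the starting vertex and port $p'$ one step away (the labelling at the start is immaterial), and every node is initially in state $v$. Under this instance, in state $a$ the agent moves one step towards the start and in state $a'$ one step away, and — all nodes starting in $v$ — the agent flips its state exactly on the even-numbered visits to each node. I would then prove, by induction on time while carrying a configuration invariant, that: (i) the set of visited vertices is always an interval $[\ell_t,r_t]$ containing the start; (ii) the node states inside this interval follow a controlled alternating pattern fixed by the visit-counts; (iii) the agent's internal state at time $t$ is determined by $t$ together with the parities of the visit-counts of $\ell_t$ and $r_t$; and (iv) each extension of the frontier (to $\ell_t-1$ or $r_t+1$) is preceded by $\Omega(r_t-\ell_t)$ steps spent ``re-processing'' the interval $[\ell_t,r_t]$. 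Summing the costs of the frontier extensions over distances $1,2,\dots,\Theta(n)$ gives $\Omega(n^2)$; equivalently, the walk of $X$ on this instance is, up to an $O(1)$-factor slowdown per unit of progress, the Rotor-Router walk, so one may again invoke~\cite{MencPU17}. (If the invariant ever fails because the agent makes no further progress, the walk loops and the bound is trivial, so progress may be assumed throughout.)

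The main obstacle is the inductive step for $X$: a state flip reverses the direction in which the agent drifts, and hence the way the visited interval is traversed, so maintaining the invariant across flips requires a careful case analysis of the local dynamics at and just inside the frontier — tracking how the $v/v'$ pattern propagates inward as the agent oscillates. This is the part the authors describe as ``quite challenging''. The remaining ingredients — the reformulation of the tables, the reduction of $Y$ to Rotor-Router, and the final summation — are routine.
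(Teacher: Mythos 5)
Your reduction of Algorithm $Y$ to Rotor-Router is incorrect, and this is the first genuine gap. In the paper's model the outport is emitted by the same transition that reads the \emph{incoming} agent state: the agent leaves the current node via $P(S_a,S_v)$, which in $Y$ equals $P(S_a)$, while only the \emph{new} agent state $A(S_v)$ is a function of the current node's bit. So your sentence ``the new agent state, and hence the outgoing port, is determined entirely by the state of the node just entered'' conflates two different steps: the new agent state determines the port used when leaving the \emph{next} node, whereas the port used when leaving the current node is governed (after the first step) by the bit of the \emph{previously} visited node, carried in the agent's state. Under your identification ``node state $v$ $\equiv$ rotor points to $p$'' the exit port would have to be governed by the node's own bit, which is exactly what $Y$ does not do; a one-node lag is introduced. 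A short trace shows the walks diverge: start the agent (initial state $a$) at a node whose bit is $v'$ with all other bits $v$; $Y$ leaves via $P(a)=p$ regardless of the current bit, while Rotor-Router under your identification leaves via $p'$. Hence ``a worst-case Rotor-Router instance is a worst-case instance for $Y$'' does not follow; what you actually have is a lagged rotor walk whose worst case needs its own analysis --- and indeed the paper shows the worst-case walk of $X$ and $Y$ is not a Rotor-Router walk but a pattern of two-step and three-step blocks, roughly three times slower than Rotor-Router.

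For Algorithm $X$ you state the invariant (i)--(iv) but do not prove the inductive step, which you yourself identify as the crux; none of the specific claims (the controlled alternating pattern inside the visited interval, the determination of the agent's state by $t$ and two parities, the $\Omega(r_t-\ell_t)$ cost of each frontier extension) is verified on your fixed instance. The escape clause ``if the invariant ever fails because the agent makes no further progress'' is not a valid dichotomy: the invariant could fail in a way that lets the agent reach an endpoint quickly, and then nothing in the argument yields the bound. For comparison, the paper does not treat $X$ and $Y$ separately at all: it isolates the two properties they share (the outport is a function of the agent state alone, and for each $a$ there is $b$ with $R_2(a,v)=(b,v')$ and $R_2(a,v')=(b',v)$), lets the adversary fix port labels and initial node bits adaptively at first visits, and proves three claims showing that the walk decomposes into LR/RL blocks at first visits (which reverse the direction of progress) and LRL/RLR blocks at revisits (which do not), so that the number of three-step blocks between consecutive discoveries grows by one each time and the total is $2(n-2)+1+\sum_{i=1}^{n-2}3i=\Omega(n^2)$. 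Your plan could conceivably be completed along similar lines, but as written both halves of the proposal are missing their essential arguments.
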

\begin{proof}

Observe that in both algorithms X and Y, the state of the agent determines the outport chosen by the agent (because $P(a,\ast)=p$ and $P(a',\ast)=p'$). Moreover, the following property also holds for both algorithms:
\begin{equation}
\label{prop2}
(\forall\; v\in\Vs)(\forall\; a\in\As)(\exists\; b\in\As)\; R_2(a,v)=(b,v')\land R_2(a,v')=(b',v)~.
\end{equation}
Fix any algorithm $\mathcal{A}$ satisfying both aforementioned properties. Fix some arbitrary orientation of the line, where each port points either to the left or to the right. Endpoints of the line are its leftmost and rightmost nodes. The starting position of the agent is chosen as a node at distance at least $\lfloor n/2 \rfloor$ from both endpoints of the line.
We will show that for $\mathcal{A}$, for such initial position, for some initialization of port labels, the agent will perform a walk with the following pattern: two steps in some particullar direction interleaved with one step in the opposite direction. Moreover, after discovering a new node (i.e., visiting a node for the first time) the agent will reverse the direction of its walk. It is easy to see that such a walk needs $\Omega(n^2)$ steps to visit all the nodes.

We will show, that the adversary is able to force the agent to move according to a sequence that can be decomposed into blocks LR, RL, LRL and RLR where L denotes a move via port pointing to the left and R --- a move through the port pointing to the right. The adversary will define the port labels and initial states of nodes upon the first visits of the agent to these nodes.

In the first claim, we will show that when the agent visits a new node for the first time, it performs RL (or LR) block and reverses the direction of its walk. In this case we define the initialization of each newly discovered node upon the first visit of the agent at this node. It can be decoded into an explicit construction of a path initialization, different for algorithms X and Y.  
In the second claim we will show that when the agent visits a previously visited node, it performs RLR (or LRL) block and does not reverse the direction of its walk. 

\textit{Claim 1. When the agent visits a node $\beta$ for the first time with R-move (L-move) from node $\alpha$, the two consecutive steps (the step to node $\beta$ and the following step) is a RL (LR) block. The next block starts with L-move (R-move) from node $\alpha$.}

Assume, that the move to node $\beta$ is a R-move (the case of L-move is symmetric). 
Consider a situation that the agent uses R-move with a state $(b,w)$ and enters to $\beta$, with some agent state $c$. The vertex state of node $\alpha$ is then $w'$ (by (\ref{prop2})). The next move will be performed via port determined by $c$.
Let us denote $p=P(c,\ast)$). The adversary defines the left outport of the newly discovered node $\beta$ as $p$. Since the agent state determines the outport and the adversary wants the agent to perform L-move again, so the next time the agent will be in $\alpha$, it should have the opposite agent state, i.e. $b'$.
By (\ref{prop2}), there exists such $u$ that $R_2(c,u)=(b',u')$. Therefore, the adversary defines the initial state of $\beta$ as $u$. Hence the agent will perform a RL block. After the block, the agent is back in node $\alpha$ and its internal state is $b'$, hence the next step will be a L-move. This completes the proof of Claim 1.
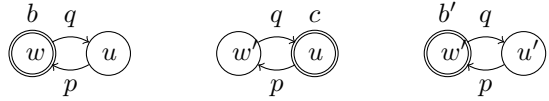
\begin{figure}
\begin{center}
\begin{tikzpicture}[->]
 \node [draw,double, circle, text width=.5em, text height=0.5em, label=$b$] (Z) at (1,0) {$w$};
 \node [draw, circle, text width=.5em, text height=0.5em, label=$ $] (A) at (2,0) {$u$};
 \path 	(Z) edge [above, bend left] node {$q$} (A);
  \path	(A) edge [below, bend left] node {$p$} (Z);
\end{tikzpicture}\hspace{1cm}
\begin{tikzpicture}[->]
 \node [draw, circle, text width=.5em, text height=0.5em, label=$ $] (Z) at (1,0) {$w'$};
 \node [draw, double, circle, text width=.5em, text height=0.5em, label=$c$] (A) at (2,0) {$u$};
 \path 	(Z) edge [above, bend left] node {$q$} (A);
  \path	(A) edge [below, bend left] node {$p$} (Z);
\end{tikzpicture}\hspace{1cm}
\begin{tikzpicture}[->]
 \node [draw, double, circle, text width=.5em, text height=0.5em, label=$b'$] (Z) at (1,0) {$w'$};
 \node [draw, circle, text width=.5em, text height=0.5em, label=$ $] (A) at (2,0) {$u'$};
 \path 	(Z) edge [above, bend left] node {$q$} (A);
  \path	(A) edge [below, bend left] node {$p$} (Z);
\end{tikzpicture}
\end{center}
\caption{Visualisation of consecutive steps of RL-block. Double circles denote the starting node and actual positions of the agent after each of the next two steps. Letters above the double circles indicates the actual internal states.}
\end{figure}


\textit{Claim 2. When the agent is in a node $\beta$ \textbf{not} for the first time after the first step of RLR-block (LRL-block), then the next block starts with R-move (L-move).}

We will investigate only RLR-block case (the second one is symmetric). 
Assume that at the beginning of step $t^{\ast}$, before RLR-block is preformed, we have the visit to this node):
\begin{center}
\begin{tikzpicture}[->]
 \node [draw, double, circle, text width=.6em, text height=0.6em, label=$b$] (Z) at (1,0) {$w$};
 \node [draw, circle, text width=.5em, text height=0.6em, label=$c$] (A) at (2,0) {$u$};
 \node  (B) at (3,0) { };
 \path 	(Z) edge [above, bend left] node {$q$} (A)
		(A) edge [above, bend left] node {$r'$} (B);
  \path	(A) edge [below, bend left] node {$r$} (Z);
\end{tikzpicture}
\end{center}
Note that the agent state determines the outport, hence in order to perform RLR-block, the agent state has to be the same upon its both visits to the leftmost node.
During RLR-block, the state of the leftmost vertex is flipped twice (by (\ref{prop2})), hence its state after the execution of the block is identical to its state before the progress.
Note that the move from the rightmost node during the step $t^{\ast}+1$ changes the state of the rightmost vertex.
Realize that the last visit (before $t^{\ast}$) in the rightmost node was performed with the agent state $c$ and was followed by a L-move, so the state of the agent at the beginning of step $t^{\ast}+1$ is $c$.
We want the agent state to change to $c'$ during step $t^{\ast}+2$.
From \eqref{prop2}, there exists $v$ such that $A(b,v)=c'$, so $v=w'$. Therefore the situation after step $t^{\ast}+2$ is the following:
\begin{center}
\begin{tikzpicture}[->]
 \node [draw,circle, text width=.6em, text height=0.6em, label=$b$] (Z) at (1,0) {$w$};
 \node [draw, double,circle, text width=.5em, text height=0.6em, label=$c'$] (A) at (2,0) {$u'$};
 \node (B) at (3,0) { };
 \path 	(Z) edge [above, bend left] node {$q$} (A)
		(A) edge [above, bend left] node {$r'$} (B);
  \path (A) edge [below, bend left] node {$r$} (Z);
\end{tikzpicture}
\end{center}
Since the agent state determines the outport and $P(c,\ast)=r$, thus $P(c',\ast)=r'$, which means that the next block starts with R-move.

\textit{Claim 3.  When the agent visits a node $\beta$ \textbf{not} for the first time with R-move (L-move) from node $\alpha$ after the, the three consecutive steps (the step to node $\beta$ and two following steps) is a RLR (LRL) block.}


We will prove this claim by induction. Consider the earliest time step $t^{\ast}>4$, when the first part of the claim does not hold (observe that first two blocks are always LR and RL in some order, hence we do not lose generality). Assume that Claim 3 holds in all steps prior to $t^{\ast}$ and at the beginning of step $t^{\ast}$ we have the following situation:
\begin{center}
\begin{tikzpicture}[->]
 \node [draw, double, circle, text width=.6em, text height=0.6em, label=$b$] (Z) at (1,0) {$w$};
 \node [draw, circle, text width=.5em, text height=0.6em, label=$c$] (A) at (2,0) {$u$};
 \node (Y) at (5,0) { };
 \path 	(Z) edge [above, bend left] node {$q$} (A)
		(A) edge [above, bend left] node {$r'$} (B);
  \path	(A) edge [below, bend left] node {$r$} (Z);
\end{tikzpicture}
\end{center}
Suppose that the agent wants to take two consecutive R-moves (the case of two consecutive L-moves is symmetric) during steps $t^{\ast}$ and $t^{\ast} + 1$.
By the assumption of this claim, the first R-move during step $t^{\ast}$ is to a node that has been visited before $t^{\ast}$.
Since R-move begins a new block and $t^{\ast}$ is the earliest time when this claim does not hold, then by Claim 2, the last move had to be R as well, so the previous block ended with sequence LR (either LR-block or RLR-block was used). 
Moreover after the last visit to the rightmost node, the agent moved to the leftmost node, hence $P(c,\ast)=r$ and $P(c',\ast)=r'$, thus since the agent is about to perform two consecutive R-moves, then $A(b,w)=c'$. Using the argument as in proofs of Claim 1, Claim 2, at the beginning of step $t^{\ast} - 2$ the situation looked as follows:
\begin{center}
\begin{tikzpicture}[->]
 \node [draw, circle, text width=.6em, text height=0.6em, label=$e$] (Z) at (1,0) {$y$};
 \node [draw, double, circle, text width=.5em, text height=0.6em, label=$b'$] (A) at (2,0) {$w'$};
 \node [draw, circle, text width=.6em, text height=0.6em, label=$c$] (B) at (3,0) {$u$};
  \node (C) at (4,0) { };
 \path 	(Z) edge [above, bend left] node {$t$} (A)
		(A) edge [above, bend left] node {$q$} (B)
	        (B) edge [above, bend left] node {$r'$} (C);
  \path (B) edge [below, bend left] node {$r$} (A)
	        (A) edge [below, bend left] node {$q'$} (Z);
\end{tikzpicture}
\end{center}

The last change of the state of the highlighted node was during step $t^{\ast}-2$. Note that at the end of step $t^{\ast}-2$, the leftmost node was visited. Moreover, the node $(c,u)$ was visited before this step (otherwise the agent cannot perform the double R-move in steps $t^{\ast}$ and $t^{\ast}+1$ due to Claim 1).
Therefore, let a step $s^{\ast}$ be the last moment, when the node $(b,w)$ was visited from the node $(c,u)$ before the step $t^{\ast}-2$. Note that thence during steps $s^{\ast}$ and $s^{\ast}+1$, the agent went left, so at $s^{\ast}$ the last move of a block ended by the sequence RL was performed.

If this block was LRL, then
%
during the step $s^{\ast}-2$, the agent has to go left, so since $P(c,\ast)=r$, the internal state is then $c$. However, from the previous argument about the changes of states during LRL-block, we can obtain a situation presented on the left of Figure~\ref{fig:sit} at the moment $s^{\ast}-2$.

Otherwise, the RL-block was performed during steps $s^{\ast}-1$ and $s^{\ast}$ and the situation before the step $s^{\ast}-1$ is given on the right of Figure~\ref{fig:sit}.
\begin{figure}[ht]
\begin{center}
\begin{tikzpicture}[->]
 \node [draw, circle, text width=.5em, text height=0.5em, label=$b$] (Z) at (1,0) {$w$};
 \node [draw, double, circle, text width=.5em, text height=0.5em, label=$c'$] (A) at (2,0) {$u'$};
 \path 	(Z) edge [above, bend left] node {$q$} (A);
  \path	(A) edge [below, bend left] node {$r$} (Z);
\end{tikzpicture}\hspace{3cm}
\begin{tikzpicture}[->]
 \node [draw, double, circle, text width=.5em, text height=0.5em, label=$b$] (Z) at (1,0) {$w$};
 \node [draw, circle, text width=.5em, text height=0.5em, label=$ $] (A) at (2,0) {$u'$};
 \path 	(Z) edge [above, bend left] node {$q$} (A);
  \path	(A) edge [below, bend left] node {$r$} (Z);
\end{tikzpicture}
\end{center}
\caption{A case of situation before LRL-block at the moment $s^{\ast}-2$ is presented on the left. A case before RL-block at the moment $s^{\ast}-1$ is given on the right}\label{fig:sit}
\end{figure}
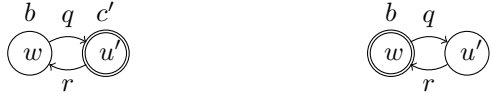
Therefore $A(b,w)=c$ (because $P(c,\ast)=r$), what is contrary to previously obtained $A(b,w)=c'$.
A symmetric situation holds if the agent wants to start a new block by LL.

Now we would like to estimate the number of moves needed to explore the graph by this algorithm.
Realize that after RL-block, LRL-block is repeated until the next new node is found by LR-block. Symmetrically with swapped R and L.
Let say that the line has $n$ nodes. Then the adversary may put the agent in the middle of the line in such the way that, so the agent has to performs $n-2$ RL-blocks or LR-blocks. After every RL-block (or LR-block), there is one more visited node, so the number of RLR-blocks (or LRL-blocks) that has to be executed before performing the next RL-block (or LR-block) increases by $1$. The last number of consecutive LRL-blocks or RLR-blocks is $n-2$ (because the last move is a single one), so the agent has to perform at least
\[
2(n-2)+1+\sum\limits_{i=1}^{n-2}3i=\frac{(n-2)(3n+1)}{2}+1=\Omega(n^2)
\]
moves in order to explore the graph. It worths to note that it is approximately $3$ times longer than Rotor-Router algorithm.
\end{proof}

%

\begin{lem}
\label{lem:threeports}
Assume that we have:
\begin{enumerate}
\item $(\forall\; a \in \As)(\forall\; v\in\Vs)\; V(a,v) = v'$,\label{v_swap}
\item function $P$ satisfies $P(s) = P(s') = P(s'')$ for three different state values $s, s', s''\in S$.
\end{enumerate}
Then an algorithm requires in the worst case $\Omega(n^2)$ steps to explore a path.
\end{lem}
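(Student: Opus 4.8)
The plan is to show that the three-to-one condition on $P$, together with $V(a,v)=v'$ (Lemma~\ref{lem:vchange}), forces $\mathcal{A}$ to act on degree-$2$ vertices as a ``straight-line walker'' with a single escape state, and that any such walker either is confined to a bounded region (hence loops and never explores) or is slowed down to Rotor-Router speed.

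First I would normalise. If $P$ is constant on all four states, $\mathcal{A}$ is trivially loopable, exactly as noted in the proof of Lemma~\ref{lem:ports}; so we may assume there is a \emph{unique} state $\tilde s=(\tilde a,\tilde v)$ with $P(\tilde s)=p'$ and $P(s)=p$ for all $s\neq\tilde s$. Fix the consistent orientation of the path in which port $p$ points ``right'' at every internal vertex. Since port $p'$ is taken only in state $\tilde s$, between two consecutive port-$p'$ moves the agent moves monotonically rightward; and since every vertex bit is flipped on every visit (Lemma~\ref{lem:vchange}), the bit at a node equals the parity of the number of visits to it, so a visit to a node $u$ yields a port-$p'$ move iff at that instant the agent's internal bit is $\tilde a$ and that parity makes $u$'s bit equal to $\tilde v$. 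These are the only tools needed.

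Now split on the single bit $A(\tilde s)$, using Fact~\ref{wn1} (which says $A(\tilde a,\cdot)$ and $A((\tilde a)',\cdot)$ are non-constant). If $A(\tilde s)=(\tilde a)'$, then immediately after any port-$p'$ move the agent's internal bit is $(\tilde a)'\neq\tilde a$, so it is not in state $\tilde s$, takes port $p$ and returns to the node it just left; having flipped that node's bit, it is still not in state $\tilde s$ there and again takes port $p$. Thus every leftward step is undone in one move and never extended, so the agent never reaches a node more than one step left of its start; placing the agent in the middle of a long path, it never visits the left half, hence fails to explore, and --- as its configuration space on a finite path is finite --- it falls into an infinite loop. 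If $A(\tilde s)=\tilde a$, I would have the adversary reveal the left part online, giving every freshly discovered node initial bit $(\tilde v)'$; one checks that each time the agent reaches the current left frontier $\ell$ with internal bit $\tilde a$ and $\ell$'s bit $\tilde v$, it makes exactly one port-$p'$ move into the fresh node, is bounced back at once (because that node has bit $(\tilde v)'$), re-primes the bits of $\ell$ and $\ell-1$ to $\tilde v$ and departs rightward --- so a ``round'' advances the frontier by one. To start the next round the agent must return from the right to $\ell-1$ in internal bit $\tilde a$, and returning leftward is possible only via port-$p'$ moves, each requiring state $\tilde s$ and hence (through the forced bit-flips) re-priming nodes that must be cleared again; this recursion makes round $k$ cost $\Omega(k)$ steps, so exploration costs $\sum_k\Omega(k)=\Omega(n^2)$. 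In sub-cases where it is cleaner, the walk induced on the explored region can instead be matched to the two-forward/one-back walk of Lemma~\ref{lem:twoAlgs} and that lemma invoked.

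The main obstacle is precisely the last estimate: because the vertex bit flips on \emph{every} visit, not only on the visits relevant to an escape move, the joint evolution of the agent's bit and the nodes' bits is delicate, and one has to (i) check that the adversary's online initialisation of each freshly revealed node --- and the port labels of the already-explored right part --- can be made globally consistent, (ii) rule out residual sub-cases in which the agent either ``tunnels'' back to the frontier in time sublinear in the distance it had travelled away, or escapes and explores quickly, and (iii) turn the non-constancy of $A((\tilde a)',\cdot)$ from Fact~\ref{wn1} into a usable guarantee that the agent keeps re-entering internal bit $\tilde a$ along the walk. All of this I would carry out by a finite case check on the values of $A$ at the two states with internal bit $\tilde a$, together with a Rotor-Router-style amortisation of the excursion lengths; each case is easy but there are several.
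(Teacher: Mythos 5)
Your first normalisation and the sub-case $A(\tilde s)=(\tilde a)'$ are fine, but the heart of the lemma is your second sub-case $A(\tilde s)=\tilde a$, and there the proposal does not actually prove anything: the claim that each ``round'' advancing the left frontier costs $\Omega(k)$ is asserted via ``this recursion'' and then immediately qualified by your own list (i)--(iii) of unresolved issues. Those issues are real. Because $V(a,v)=v'$ flips a bit on \emph{every} visit, the bits of the explored region between the frontier and the agent evolve in a way you have not controlled, so nothing rules out the agent re-assembling a chain of $\tilde v$-bits and ``tunnelling'' back to the frontier in $O(1)$ amortised time; moreover the agent repeatedly reaches the right endpoint, a degree-$1$ node whose transition is not constrained by the hypotheses, so the internal bit with which it re-enters the degree-$2$ region is chosen by the algorithm, and your round accounting never addresses this. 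As written, the proposal is a programme for a case analysis, not a proof of the $\Omega(n^2)$ bound.

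The paper avoids all of this by choosing the opposite orientation of the port labels, which makes the case analysis on $A$ unnecessary. Write the exceptional state as $(a,v)$ with $P(a,v)=p$ and $P(s)=p'$ for the other three $s\in S$ (if $P$ is constant the algorithm is trivially stuck). Build the path $v_1,\dots,v_n$ so that the port from $v_i$ to $v_{i+1}$ is $p$ and the port back is $p'$, start the agent at the endpoint $v_1$, and give every vertex initial bit $v'$. Then a forward move is possible only from state $(a,v)$ and, by hypothesis (1), leaves bit $v'$ behind, while any vertex whose bit is $v'$ forces a backward move regardless of the agent's bit. Hence when the agent first reaches $v_i$, all of $v_2,\dots,v_{i-1}$ carry bit $v'$ (their last exits were forward) and $v_i$ still carries its initial $v'$, so the agent cascades straight back to $v_1$ before it can ever reach $v_{i+1}$; summing over $i$ gives $2\sum_{i=1}^{n-2} i=\Omega(n^2)$. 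In your notation this amounts to letting the rare port $p'$ point \emph{away} from the start and initialising all bits to $(\tilde v)'$ --- with that choice your ``bounce'' becomes a full cascade back to the starting endpoint and the distinction between $A(\tilde s)=\tilde a$ and $A(\tilde s)=(\tilde a)'$ disappears.
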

\begin{proof}
Set $S$ has $4$ possible values, so if function $P$ returns the same port for all of them then the algorithm trivially cannot explore all paths of length $>2$.

Assume that for some $a \in \As$, $v\in \Vs$ and $p\in \Ps$, we have $P(a,v) = p$ and $P(s) = p'$ for all $s \neq (a,v)$.
Let us construct a path $\mathcal{P} = (v_1,v_2,\dots, v_n)$ of length $n$ such that a port from $v_i$ to $v_{i+1}$ is always $p$ (and port from $v_{i}$ to $v_{i-1}$ is $p'$). Let $v_1$ be the starting position of the agent and $v'$ be the initial state of each vertex.
Consider the exploration time of such the algorithm on $\mathcal{P}$.

To prove the time complexity, we show the following: \\
\textit{Claim: For any $i \in \{2,3,\dots n-1\}$, the agent visits node $v_1$ between first visits to $v_i$ and $v_{i+1}$.}\\
To prove the claim observe that, when entering $v_i$ for the first time, the last exit from each vertex $v_2,v_3,\dots,v_{i-1}$ was via port $p$. 
This means that the state of each of these nodes is $v'$ (it is because $V(a,v) = v'$ by (\ref{v_swap})).
Since the states of $v_i,v_{i-1},\dots,v_2$ are $v'$, and the agent is at $v_i$, it leaves each of these vertices via port $p'$ and walks straight to node $v_1$. This completes the proof of the claim.\\

The lemma is a direct consequence of the claim, because time to explore the whole path is at least $2 \cdot \sum_{i=1}^{n-2} i \in \Omega(n^2)$.
\end{proof}

We are ready to complete the proof of Theorem~\ref{ref:thm}
\begin{proof}[Proof of Theorem~\ref{ref:thm}]
From Lemma~\ref{fact:1} we know that the state of the agent is not changed during every step at the nodes with degree $2$. By Lemma~\ref{lem:vchange} the state of a vertex is flipped by the agent upon each visit to the vertex.
If the outport chosen by the agent is determined solely based on the state of the vertex then such algorithm would be equivalent to Rotor-Router (on vertices with degree $2$) hence would require time $\Omega(n^2)$. By Lemma~\ref{lem:threeports}, if function $P$ doe not take each of two ports in two different states $s\in S$, then the adversary may choose an arrangement in which the agent has to perform $\Omega(n^2)$ steps to explore the path. Consider the remaining possible algorithms.

\noindent
\textbf{Case A.}
Assume that the outport is defined by the state of the agent i.e.
\begin{equation}
\label{eqn:portstate}
(\exists\; p \in \Ps)(\forall\; v\in \Vs)\; P(a,v)=p \land P(a',v)=p'~.
\end{equation}
By Lemma~\ref{fact:1}, we know that there exists a state for which the agent does not change its internal state. Denote such state by $(a,v)$. 
Then by~\eqref{eqn:portstate}, $R_3(a,v)=(a,v',p)$ and because of Fact~\ref{fac:CASE4} and Lemma~\ref{lem:vchange}, we obtain $R_3(a,v')=(a',v,p)$.

If we had $A(a',v)=a'$, then similarly $R_3(a',v)=(a',v',p')$ and $R_3(a',v')=(a,v,p')$, what bears Algorithm X that requires $\Omega(n^2)$ steps by Lemma~\ref{lem:twoAlgs}.

Hence, suppose that $A(a',v)=a$, so $R_3(a',v)=(a,v',p')$ (by Fact~\ref{fac:CASE4} and Lemma~\ref{lem:vchange} once again).

If moreover $A(a',v')=a'$, then similar argument gives $R_3(a',v')=(a',v,p')$. Therefore we obtained Algorithm Y that also requires $\Omega(n^2)$ steps by Lemma~\ref{lem:twoAlgs}. 

Hence $A(a',v')=a$ and $R_3(a',v')=(a,v,p')$. The obtained algorithm satisfies (\ref{eqn:portstate}), 
so when the agent is walking on nodes with degree $2$, it cannot choose port $p'$ twice in a row, thus it cannot traverse from left to right the following gadget:
\begin{center}
\begin{tikzpicture}[->]
 \node (X) at (0,0) { };
 \node [draw, circle, text width=.5em, text height=0.5em, label=$ $] (A) at (1,0) {$\ast$};
 \node [draw, circle, text width=.5em, text height=0.5em, label=$ $] (B) at (2,0) {$\ast$};
 \node [draw, circle, text width=.5em, text height=0.5em, label=$ $] (C) at (3,0) {$\ast$};
  \node [draw, circle, text width=.5em, text height=0.5em, label=$ $] (D) at (4,0) {$\ast$};
  \node (Y) at (5,0) { };
 \path 	(A) edge [above, bend left] node {$p$} (B)
		(B) edge [above, bend left] node {$p'$} (C)
	        (C) edge [above, bend left] node {$p'$} (D)
	        (D) edge [above, bend left] node {$p'$} (Y);
  \path	(D) edge [below, bend left] node {$p$} (C) 
  		(C) edge [below, bend left] node {$p$} (B)
	    (B) edge [below, bend left] node {$p$} (A)
	    (A) edge [below, bend left] node {$'p$} (X);
\end{tikzpicture}
\end{center}
This completes the analysis of Case A.
\noindent
\textbf{Case B.} Now assume that 
\begin{equation}
\label{property}
(\forall\; p\in\Ps)(\forall\; a\in \As)(\exists\; v\in\Vs)\; P(a,v)=p~.
\end{equation}
Due to Fact \ref{wn1} we may assume without a loss of generality that $R_3(a,v)=(a',v',p)$.
Assume that $A(a',v)=a'$. Since the vertex state does not determine the outport Lemma~\ref{lem:vchange} and (\ref{property}), we get $R_3(a',v)=(a',v',p')$. Then also $R_3(a',v')=(a,v,p)$, because of Lemma~\ref{lem:vchange}, \eqref{property} and Fact~\ref{fac:CASE4}. Moreover, the agent state is not determined by the vertex state, so again from \eqref{property}, we get $R_3(a,v')=(a',v,p')$. The obtained algorithm (call it \textbf{Algorithm Z}) is incorrect due to the following counterexample:
\begin{center}
\begin{minipage}{.4\textwidth}
\begin{tikzpicture}[->]
  \matrix (m) [matrix of nodes, row sep=3em, column sep=4em, minimum width=2em, nodes={draw, align=left, text width=.5em, text height=0.5em}]
  {
    {$a$ \\ $v$}& {$a'$ \\ $v$} \\
     {$a$ \\ $v'$} & {$a'$ \\ $v'$} \\
     };
    \draw[bend left] (m-1-1) edge node [above, very near start] {$p$} (m-2-2);
    \draw (m-1-2) edge node [right, near start] {$p'$} (m-2-2);
    \draw (m-2-1) edge node [left,near start] {$p'$} (m-1-2);
    \draw[bend left] (m-2-2) edge node [below, very near start] {$p$} (m-1-1);
\end{tikzpicture}
\end{minipage}
\begin{minipage}{.4\textwidth}
\hspace{-1cm}
\begin{tikzpicture}[->]
 \node (X) at (0,0) { };
 \node [draw, double, circle, text width=.5em, text height=0.5em, label=$a'$] (A) at (1,0) {$v$};
 \node [draw, circle, text width=.5em, text height=0.5em] (B) at (2,0) {$v'$};
  \node [draw, circle, text width=.5em, text height=0.5em] (C) at (3,0) {$v$};
  \node [draw, circle, text width=.5em, text height=0.5em] (D) at (4,0) {$v'$};
    \node [draw, circle, text width=.5em, text height=0.5em] (E) at (5,0) {$v'$};
      \node [draw, circle, text width=.5em, text height=0.5em] (F) at (6,0) {$v$};
  \node (Y) at (7,0) { };
 \path (A) edge [above, bend left] node {$p'$} (B)
	  (B) edge [above, bend left] node {$p'$} (C)
	  (C) edge [above, bend left] node {$p'$} (D)
	  (D) edge [above, bend left] node {$p$} (E)
	  (E) edge [above, bend left] node {$p$} (F)
	  (F) edge [above, bend left] node {$p$} (Y);
  \path (F) edge [below, bend left] node {$p'$} (E)
           (E) edge [below, bend left] node {$p'$} (D)
  	  (D) edge [below, bend left] node {$p'$} (C)
	  (C) edge [below, bend left] node {$p$} (B)
	  (B) edge [below, bend left] node {$p$} (A)
	  (A) edge [below, bend left] node {$p$} (X);
\end{tikzpicture}
\begin{tikzpicture}[->]
 \node (X) at (0,0) { };
 \node [draw, double, circle, text width=.5em, text height=0.5em, label=$a$] (A) at (1,0) {$v'$};
 \node [draw, circle, text width=.5em, text height=0.5em] (B) at (2,0) {$v$};
  \node [draw, circle, text width=.5em, text height=0.5em] (C) at (3,0) {$v$};
  \node [draw, circle, text width=.5em, text height=0.5em] (D) at (4,0) {$v'$};
    \node [draw, circle, text width=.5em, text height=0.5em] (E) at (5,0) {$v'$};
      \node [draw, circle, text width=.5em, text height=0.5em] (F) at (6,0) {$v$};
  \node (Y) at (7,0) { };
 \path (A) edge [above, bend left] node {$p'$} (B)
	  (B) edge [above, bend left] node {$p'$} (C)
	  (C) edge [above, bend left] node {$p'$} (D)
	  (D) edge [above, bend left] node {$p$} (E)
	  (E) edge [above, bend left] node {$p$} (F)
	  (F) edge [above, bend left] node {$p$} (Y);
  \path (F) edge [below, bend left] node {$p'$} (E)
           (E) edge [below, bend left] node {$p'$} (D)
  	  (D) edge [below, bend left] node {$p'$} (C)
	  (C) edge [below, bend left] node {$p$} (B)
	  (B) edge [below, bend left] node {$p$} (A)
	  (A) edge [below, bend left] node {$p$} (X);
\end{tikzpicture}
\end{minipage}
\end{center}
On the other hand if we have $A(a',v)=a$, then $R_3(a',v)=(a,v',p')$ from Lemma~\ref{lem:vchange} and \eqref{property}.

If moreover $A(a',v')=a$, then the same argument gives $R_3(a',v')=(a,v,p)$. However, if we swap $v$ with $v'$, then we obtain exactly Algorithm Z, which we know is incorrect. 

Hence from \eqref{property}, $R_3(a',v')=(a',v,p)$. Note that if we had $A(a,v') = a$, then $R_3(a,v') = (a,v,p')$ by similar argument, what bears algorithm would be equivalent to Algorithm X (it is sufficient to flip the vertex states and the ports).
Thence $A(a,v') = a'$, so together with Lemma~\ref{lem:vchange} and \eqref{property} we get $R_3(a,v')=(a',v,p')$ and we obtain the following \textbf{Algorithm R}:
\begin{center}
\begin{minipage}{.4\textwidth}
\begin{tikzpicture}[->]
  \matrix (m) [matrix of nodes, row sep=3em, column sep=4em, minimum width=2em, nodes={draw, align=left, text width=.5em, text height=0.5em}]
  {
    {$a$ \\ $v$}& {$a'$ \\ $v$} \\
     {$a$ \\ $v'$} & {$a'$ \\ $v'$} \\
     };
    \draw  (m-2-1) edge node [left, near start] {$p'$} (m-1-1);
    \draw (m-2-2) edge node [right, near start] {$p$} (m-1-2);
    \draw (m-1-2) edge node [right, near start] {$p'$} (m-2-1);
    \draw (m-1-1) edge node [below, near start] {$p$} (m-2-2);
\end{tikzpicture}
\end{minipage}
\begin{minipage}{.4\textwidth}
\end{minipage}
\end{center}

Observe that Algorithm R has the following property. After taking port $p$, the agent is always in state $a'$ and after taking port $p$, the agent is always in state $a$. Using this property (and Lemma~\ref{lem:vchange}), we will show that Algorithm Z is equivalent to Rotor-Router. Indeed, let us organize port numbers on the path in the following pattern, where both incoming arcs to each node have the same port label:
\begin{center}
\begin{tikzpicture}[->]
 \node (X) at (0,0) { };
 \node [draw, circle, text width=.5em, text height=0.5em, label=$ $] (A) at (1,0) {$\ast$};
 \node [draw, circle, text width=.5em, text height=0.5em, label=$ $] (B) at (2,0) {$\ast$};
 \node [draw, circle, text width=.5em, text height=0.5em, label=$ $] (C) at (3,0) {$\ast$};
  \node [draw, circle, text width=.5em, text height=0.5em, label=$ $] (D) at (4,0) {$\ast$};
  \node [draw, circle, text width=.5em, text height=0.5em, label=$ $] (E) at (5,0) {$\ast$};
  \node [draw, circle, text width=.5em, text height=0.5em, label=$ $] (F) at (6,0) {$\ast$};
  \node (Y) at (7,0) { };
 \path 	(A) edge [above, bend left] node {$p'$} (B)
		(B) edge [above, bend left] node {$p$} (C)
	        (C) edge [above, bend left] node {$p'$} (D)
	        (D) edge [above, bend left] node {$p$} (E)
	        (E) edge [above, bend left] node {$p'$} (F)
	        (F) edge [above, bend left] node {$p$} (Y);
  \path	(F) edge [below, bend left] node {$p$} (E) 
  		(E) edge [below, bend left] node {$p'$} (D) 
  		(D) edge [below, bend left] node {$p$} (C) 
  		(C) edge [below, bend left] node {$p'$} (B)
	    (B) edge [below, bend left] node {$p$} (A)
	    (A) edge [below, bend left] node {$p'$} (X);
\end{tikzpicture}
\end{center}
After such initialization of the port labels, we know that upon each visit to a vertex of degree $2$ the agent is in the same state. Note that this holds for all nodes of degree $2$, except the nodes that have one of the endpoints of the path as its neighbors.
But except four nodes (endpoints of the path and theirs neighbors), the algorithm behaves like Rotor-Router, which means that it chooses a different outport upon every visit to that node.
By choosing an appropriate initialization of vertex states, we can guarantee that the exploration time of Algorithm R is asymptotically the same as the worst-case exploration using Rotor-Router, which is $\Omega(n^2)$~\cite{BampasGHIKKR17}. This completes the analysis of Case B.

\noindent
\textbf{Case C.}
Assume that
\[
	(\exists\; p\in\Ps)(\exists\; a\in \As)(\forall\; v\in \Vs)\; P(a,v)=p~.
\]
Agent states do not determine the outports and vertex states always change, so Lemma~\ref{lem:ports} guarantees the assumptions of Lemma~\ref{lem:threeports} and hence in this case, the algorithm requires $\Omega(n^2)$ steps to explore a path.

By exhausting all possible cases, we obtain a contradiction with the assumption, that there exists an algorithm with time complexity $o(n^2)$, which completes the proof of Theorem~\ref{ref:thm}.
\end{proof}
\vspace*{-5mm}
\section{Conclusions and open problems}
\vspace*{-2mm}
One conclusion from our paper is that certain assumptions of the model of mobile agents can be exchanged. We showed, that in the context of linear time tree exploration, the assumption of clean memory at the nodes can be exchanged for a single token or the knowledge of the incoming port. The paper leaves a number of promising open directions. We showed that token and clean memory allow linear time exploration of trees, however, we have not ruled out the possibility that linear time exploration of trees is feasible without both these assumptions. Our lower bound suggests that memory $\omega(1)$ at the agent is probably necessary in \dirtymem model. Another open direction would be to consider different graph classes, or perhaps directed graphs. Finally, a very interesting future direction is to study dual-memory exploration with team of multiple mobile agents. Such approach could lead to even smaller exploration time, however, dividing the work between the agents in such models is very challenging since the graph is initially unknown.
\bibliographystyle{abbrv}
\bibliography{setup/biblio} 
\end{document}